\def\BibTeX{{\rm B\kern-.05em{\sc i\kern-.025em b}\kern-.08em T\kern-.1667em\lower.7ex\hbox{E}\kern-.125emX}}
\tikzset{
  scaledown/.style = {scale=0.9},
  scaledown2/.style = {scale=0.7, inner sep=0pt, minimum size=1.8em, minimum height=1.8em},
  treenode/.style = {shape=circle, draw, align=center, top color=white, scaledown},
  root/.style = {treenode, font=\Large},
  rtreenode/.style = {shape=rectangle, rounded corners, draw, align=center, top color=white},
  root/.style = {treenode, font=\Large},
  rednode/.style = {treenode, font=\ttfamily\normalsize, bottom color=red!30},
  bluenode/.style = {treenode, font=\ttfamily\normalsize, , bottom color=blue!30},
  greennode/.style = {treenode, font=\ttfamily\normalsize, bottom color=green!30},
  dummy/.style = {circle,draw},
  p1node/.style = {rednode, shape=diamond},
  p2node/.style = {state, scaledown, align=center, inner sep=0pt, minimum size=4em, minimum height=4em},
  randnode/.style = {greennode},
  weight/.style = {scaledown, text=red},
  guard/.style = {scaledown, text=teal},
  invariant/.style = {scaledown, text=violet},
  proba/.style = {scaledown, text=brown},
  clock/.style = {scaledown, text=blue},
  channel/.style = {scaledown, text=olive},
  goal/.style = {state, scaledown2, align=center, shape=rectangle},
  attack/.style = {state, scaledown2, align=center, shape=diamond},
  defense/.style = {state, scaledown2, align=center},
  goalUseCase/.style = {state, align=center, shape=rectangle, scale=0.7},
  attackUseCase/.style = {goalUseCase, color=red, chamfered rectangle, chamfered rectangle xsep=6pt},
  defenseUseCase/.style = {goalUseCase, color=blue, rounded corners, radius=10pt}
}
\definecolor{ForestGreen} {RGB}{34,  139,  34}
\definecolor{HeraldRed2}   {rgb}{0.81, 0.12, 0.15}
\newcommand{\refscolor} {blue}
\newcommand{\linkscolor}{HeraldRed2}
\newcommand{\urlscolor} {ForestGreen}
\newacronym[longplural={Debugging Information Entities}]{DIE}{DIE}{Debugging Information Entity}
\newacronym[plural={OSes}, firstplural={operating systems (OSes)}]{OS}{OS}{operating system}
\newacronym{mtd}{MTD}{Moving Target Defense}
\newacronym{dag}{DAG}{Directed Acyclic Graph}
\newacronym{at}{AT}{Attack Tree}
\newacronym{adt}{ADT}{Attack Defense Tree}
\newacronym{amg}{AMG}{Attack Moving target defense DAG}
\newacronym{ptmdp}{PTMDP}{Priced Timed Markov Decision Process}
\newacronym{cdf}{CDF}{Cumulative Distribution Function}
\newacronym{pta}{PTA}{Priced Timed Automata}
\newacronym{ptg}{PTG}{Priced Timed Game}
\newacronym{mp}{MP}{moving parameter}
\newacronym{dlsef}{DLSeF}{Dynamic key-Length-based Security Framework}
\newacronym{dare}{DARE}{Dynamic Application Rotation Environment}
\newacronym{aslr}{ASLR}{Address Space Layout Randomization}
\newacronym{more}{MORE}{Multiple OS Rotational Environment}
\newacronym{iot}{IoT}{Internet of Things}
\newacronym{qos}{QoS}{Quality of Service}
\title{Reasoning about Moving Target Defense in Attack Modeling Formalisms}
\author{Gabriel Ballot
\and
Vadim Malvone \and
Jean Leneutre \and
Etienne Borde}
\authorrunning{G. Ballot et al.}
\institute{LTCI, Telecom Paris, Institut Polytechnique de Paris, Palaiseau, France\\
\email{\{name.surname\}@telecom-paris.fr}}
\begin{document}
\maketitle              
\begin{abstract}
Since 2009, \textit{Moving Target Defense} (MTD) has become a new paradigm of defensive mechanism 
that frequently changes the state of the target system to confuse the attacker. This frequent change is costly and leads to a trade-off between misleading the attacker and disrupting the quality of service.
Optimizing the MTD activation frequency is necessary to develop this defense mechanism when facing realistic, multi-step attack scenarios. Attack modeling formalisms based on DAG are prominently used to specify these scenarios.

Our contribution is a new DAG-based formalism for MTDs and its translation into a \textit{Price Timed Markov Decision Process} 
to find the best activation frequencies against the attacker's time/cost-optimal strategies. For the first time, MTD activation frequencies are analyzed in a state-of-the-art DAG-based representation. Moreover, this is the first paper that considers the specificity of MTDs in the automatic analysis of attack modeling formalisms.
Finally, we present some experimental results using \textsc{Uppaal Stratego} to demonstrate its applicability and relevance.
\keywords{Timed Model checking \and Cyber Security \and Threat Modeling \and Moving Target Defense.}
\end{abstract}

\section{Introduction}
\label{sec:introduction}
There is an asymmetry between the attacker and the defender. The defender mostly has static defenses, and the attacker can spend a quasi-unlimited time analyzing the defensive system and finding a vulnerability.
\textit{\gls{mtd}} is a defense paradigm formalized in 2009~\cite{ghosh2009moving} that aims at breaking this asymmetry by frequently changing the defended system state.
An \gls{mtd} is defined with three attributes:
\begin{inparaenum}[(i)]
    \item the \shrinkalt{\textit{\glsfmtlong{mp}}}{\textit{\gls{mp}}}, that is, the system parameter that will be changed, 
    \item the set of valid values for the \shrinkalt{\glsfmtlong{mp}}{\gls{mp}} 
    and a transition function for its next value, and 
    \item how frequently the state changes. 
\end{inparaenum}
Changing a server IP address uniformly at random in IPs of the form 192.122.X.Y every 20 minutes is an example of well-defined \gls{mtd} (\cf IP shuffling~\cite{antonatos2007defending,dunlop2011mt6d,clark2013effectiveness}).
Many scientific publications have addressed \glspl{mtd} since 2009, including the surveys~\cite{navas2020mtd,sengupta2020survey}. However, it is not a mature research field because some challenges like the cost-benefits trade-off remain unsolved.
The choice of the activation frequency for time-based \glspl{mtd} has a great impact on the defense effectiveness and applicability. A higher frequency implies less time for the attacker to exploit vulnerability but also implies cost and may reduce the quality of service. The problems addressed by this paper are
\begin{inparaenum}[(i)]
    \item how to model multi-step attacks on a complex system defended with \glspl{mtd} and
    \item how to find optimal \gls{mtd} activation frequencies in such a model.
\end{inparaenum}

We take inspiration from prominent attack modeling formalisms based on \textit{\gls{dag}}~\cite{kordy2014dag}, such as \textit{\gls{at}}~\cite{mauw2005foundations} or \textit{\gls{adt}}~\cite{kordy2010foundations}. It permits to hierarchically model threats, their causes, and defenses (for \gls{adt}) to represent the possible attack paths and countermeasures. Using this hierarchical representation, we optimize activation frequencies for the \glspl{mtd} using a two-player game on a \textit{Priced Timed Automata}~\cite{david2014time} between the attacker and the defender.
We try to find an optimal balance between expressibility, ease of use, and intuition of the formalism.
We can use state-of-the-art strategic model checkers like \textsc{Uppaal Stratego}~\cite{david2015uppaal} to extract the optimal strategies.

Our contribution is twofold. First, we introduce a \gls{dag}-based graphical model of attack scenarios with \gls{mtd} countermeasures called the \textit{\gls{amg}}. Second, we propose a way to automatically construct a \textit{\gls{ptmdp}}~\cite{david2014time} from our \gls{amg} to compute the attack time and cost distributions under different optimal attacker's strategies. We compute it for different \gls{mtd} activation frequencies to optimize them. To our knowledge, our contribution is the first that proposes a method to analyze the impact of \glspl{mtd} activation frequencies and helps to evaluate an optimal set of activation frequencies for a given system defended with \glspl{mtd}.

The paper is organized as follows. Section~\ref{sec:background} introduces background concepts, \shrinkalt{}{Section~\ref{sec:example} gives a motivating example,} Section~\ref{sec:AMG} presents the \gls{amg} model translated into a \gls{ptmdp} in Section~\ref{sec:construction-PTMDP}. \shrinkalt{}{Section~\ref{sec:using} shows how to use the translated model and} Section~\ref{sec:experiment} solves a concrete use case with \textsc{Uppaal Stratego}. Related works are presented in Section~\ref{sec:RW} and finally, Section~\ref{sec:conclusion} concludes the paper.

\section{Background}
\label{sec:background}

We present, as a background, \shrinkalt{\glspl{at}}{\gls{mtd}, \glspl{at},} and \glspl{ptmdp}. We also define strategies and runs on the \gls{ptmdp}.

\shrinkalt{}{\mysubsection{\glsfmtlong{mtd}}\label{sec:mtd}
In this section, we give a brief overview of \gls{mtd}. It helps to apprehend the characteristics of the defenses considered in this paper, but this section is not necessary to understand the rest of this work.
\subsubsection{Motivation for \glsfmtshortpl{mtd}.}

Traditional defense mechanisms are primarily static (\eg a firewall rejecting a sub-network) or reactive (\eg comparing program signatures with a malware signature database and aborting matching programs).
On the contrary, \glspl{mtd} tend to be proactive: they take action before detecting any suspicious activity. Proactivity has two significant implications: on the one hand, the security against \textit{zero-day attacks} (unknown attacks) or \textit{evasive attacks} (covert attacks) may be considerably strengthened; on the other hand, they can be activated even when there are no attacks, implying unnecessary cost. Thus, a good activation strategy for \glspl{mtd} is essential.

The typical attack structure (\eg categorized in the \textit{cyber-kill-chain}~\cite{hutchins2011intelligence}) always starts with probing the system to understand its configuration and plan vulnerability exploitation. \gls{mtd} aims to change the system state frequently, so the attacker cannot spend unlimited time finding a vulnerability in a single state of the target. As a result, \gls{mtd} invalidates the knowledge acquired by the attacker during its reconnaissance phase.

\subsubsection{Advantages of \glsfmtshortpl{mtd}.}

\glspl{mtd} are meant to increase the complexity, uncertainty, and unpredictability of the system for the attacker. Such defense mechanisms are called \textit{security hardening} and are efficient for slowing down implementations of zero-day attacks, which are a significant challenge as about one zero-day attack is discovered every week (\figurename~\ref{fig:0-day}). Furthermore, \gls{mtd} is not supposed to replace traditional defense methods but is complementary to add an extra layer of confusion for the attacker. Some commercial implementations already exist\cite{polyverse,trapx,morphisec,cybervan}.

\begin{figure}
    \centering
    \includegraphics[width=0.7\columnwidth]{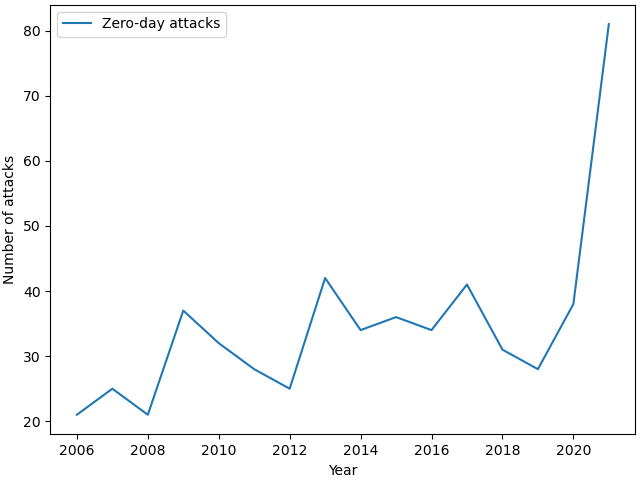}
    \caption[Zero-day vulnerabilities discovered per year.]{Zero-day vulnerabilities discovered per year\cite{Zeroday}.}
    \label{fig:0-day}
\end{figure}

\subsubsection{\glsfmtshort{mtd} Design.}

Three characteristics must be specified to have a well-defined \gls{mtd}:
\begin{inparaenum}[(i)]
    \item the \glsxtrfull{mp}, \ie the configuration of the system that will change,
    \item the set of valid \gls{mp} values, called \textit{configuration set}, and the \textit{movement function}, \ie how to choose the next \gls{mp} value, and
    \item when to change the configuration.
\end{inparaenum}
This common terminology is advocated in several surveys about \gls{mtd}~\cite{sengupta2020survey,navas2020mtd}.

\subsubsection{Choice of the \glsfmtshort{mp}.}

\renewcommand{\arraystretch}{1.3}
\begin{table}[t]
    \centering
    \begin{tabular}{|c|c|c|}
    \hline
      \textbf{Layer} & \textbf{Possible \gls{mp}} & \textbf{Example}\\
      \hline
      \textbf{Data} & format, encoding, representation &
      \glsfmtshort{dlsef}~\cite{puthal2015dynamic}\\
      \textbf{Software} & binary,
      application & \glsfmtshort{dare}~\cite{thompson2016dynamic}\\
      \textbf{Runtime env.} & RAM addresses, instruction
      set & \glsfmtshort{aslr}\\
      \textbf{Platform} & CPU architecture, OS, VM &
      \glsfmtshort{more}~\cite{thompson2014multiple}\\
      \textbf{Network} & protocol, topology, IP address &
      Stream Splitting~\cite{evans2019stream}\\
      \hline
    \end{tabular}
    \caption{System layer classification for the \glsfmtshort{mp}.}
    \label{tab:mp-layers}
\end{table}

\renewcommand{\arraystretch}{1}

The \gls{mp} of the system can generally fit in one of the five system layers, as described in Table~\ref{tab:mp-layers}. The \textit{data layer} includes the representation of the sensitive data. It can be the storage representation (format) or the transmission representation (encoding). As an exmaple, in \gls{dlsef}\cite{puthal2015dynamic}, the protagonists of a communication agree on a local synchronized key generator to change the encryption regularly and synchronously. This reduces the quantity of cyphertext an attacker can intercept for a given key.

The second layer is the \textit{software layer}. It includes changing the software version or altering the program binary code. A textbook example is the \gls{dare}~\cite{thompson2016dynamic}. In this framework, a web interface can have two web servers (Nginx and Apache) that repeatedly take over each other, making the attacker confused about the current webserver.

The third layer is the \textit{runtime environment layer}. Mechanisms in this category typically change the memory management like the well-spread \gls{aslr}. This mechanism is implemented by default on Linux, Windows, and Mac OS. It randomizes the position of the memory sections (like \textit{text, stack, heap}) to prevent the exploit of memory attacks such as buffer overflows.

The fourth layer is the \textit{platform layer}. It includes OS,  VM, or even hardware (\eg CPU) changes. We can cite the \gls{more} as an example~\cite{thompson2014multiple}. In this mechanism, the active OS is taken periodically from a queue of safe OSs. After the rotation, the old OS is scanned to detect any anomaly and deleted or put back on the queue accordingly. Even if the attacker manages to infect an OS, he may not harm the system because the OS is put in quarantine, and a new one is activated.

\def\angle{0}
\def\radius{2}
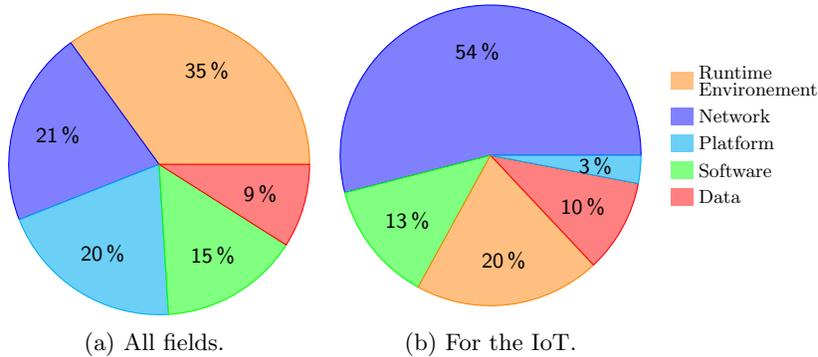
\begin{figure}
    \centering
    \subfloat[All fields.]{
    \begin{tikzpicture}[nodes = {font=\sffamily}]
    \foreach \percent/\name/\color in {
        35/Runtime Environment/orange,
        21/Network/blue,
        20/Platform/cyan,
        15/Software/green,
        9/Data/red,
    }   {
        \ifx\percent\empty\else               
        \draw[fill={\color!50},draw={\color}] (0,0) -- (\angle:\radius)
          arc (\angle:\angle+\percent*3.6:\radius) -- cycle;
        \node [scaledown] at (\angle+0.5*\percent*3.6:0.7*\radius) {\percent\,\%};
        \pgfmathparse{\angle+\percent*3.6}  
        \xdef\angle{\pgfmathresult}         
        \fi
    };
    \end{tikzpicture}
    \label{fig:gp-layers}
    }
    \subfloat[For the \glsfmtshort{iot}.]{
    \begin{tikzpicture}[nodes = {font=\sffamily}]
    \foreach \percent/\name/\color in {
        54/Network/blue,
        13/Software/green,
        20/Runtime Environment/orange,
        10/Data/red,
        3/Platform/cyan,
    }   {
        \ifx\percent\empty\else               
        \draw[fill={\color!50},draw={\color}] (0,0) -- (\angle:\radius)
          arc (\angle:\angle+\percent*3.6:\radius) -- cycle;
        \node [scaledown] at (\angle+0.5*\percent*3.6:0.7*\radius) {\percent\,\%};
        \pgfmathparse{\angle+\percent*3.6}  
        \xdef\angle{\pgfmathresult}         
        \fi
    };
    \end{tikzpicture}
    \label{fig:iot-layers}
    }
    \subfloat{
    \trimbox{0cm 0cm 0cm 3.5cm}{\begin{tikzpicture}[]
    \matrix [below left, inner sep=2pt, column sep=1pt, row sep=2pt, ampersand replacement=\&] {
        \node [goal, fill=orange!50, minimum size=1em, minimum width=1.3em, draw=none, label={[scale=0.8, align=left]right:{Runtime\\[-0.3em]Environement}}] {};\\
        \node [goal, fill=blue!50, minimum size=1em, minimum width=1.3em, draw=none, label={[scale=0.8]right:{Network}}] {};\\
        \node [goal, fill=cyan!50, minimum size=1em, minimum width=1.3em, draw=none, label={[scale=0.8]right:{Platform}}] {};\\
        \node [goal, fill=green!50, minimum size=1em, minimum width=1.3em, draw=none, label={[scale=0.8]right:{Software}}] {};\\
        \node [goal, fill=red!50, minimum size=1em, minimum width=1.3em, draw=none, label={[scale=0.8]right:{Data}}] {};\\
        };
    \end{tikzpicture}
    }}
    \caption[Taxonomy distribution of \glsfmtshortpl{mtd} according to the system layers.]{Taxonomy distribution of \glspl{mtd} according to the system layers~\cite{navas2020mtd}.}
    \label{fig:pies-mtd}
\end{figure}

Finally, the last layer is the \textit{network layer}. A great variety of proposals in this layer exists for the \gls{iot} (\figurename~\ref{fig:iot-layers}). Possible \glspl{mp} are the network topology, protocols, or IP addresses and ports. An industrial example is Stream Splitting~\cite{evans2019stream}, where the payload of a TCP transmission is split on several flows taking different routes. If a router is infected, the attacker has only access to a part of the sensitive data.

\figurename~\ref{fig:gp-layers} displays the distribution of new \gls{mtd} proposals (from a survey of 89 papers~\cite{navas2020mtd}) among the system layers, and \figurename~\ref{fig:iot-layers} presents the distribution restricted to \gls{iot} proposals. We can notice the prominence of network-level proposals, probably due to the lower implementation cost for embedded systems, while the platform layer only counts one proposal among the 39 surveyed.

\subsubsection{Configuration Set and Movement Function.}

Intuitively, a bigger configuration set induces more security as the attacker cannot brute-force all the states. Nevertheless, even with a vast configuration set, the movement function must ensure the unpredictability of the next \gls{mp} value. Consequently, the next value is usually chosen randomly.
For a random variable $X$ representing the next \gls{mp} value, the unpredictability is often measured with the entropy $H(X)$.
\begin{equation*}
    H(X) = - \sum_{x\in X(\Omega)} \Prob{X=x}\log\left(\Prob{X=x}\right)
\end{equation*}
Higher entropy means more unpredictability and is desired for the \gls{mp}. It can be shown that the entropy is maximal for a uniform distribution. However, a movement function always choosing the next \gls{mp} value according to a uniform distribution in the configuration set is not necessarily optimal. Indeed, two states in a row should expose an \textit{attack surface} (intuitively, the set of channels, methods, and data items that an attacker could use for an intrusion~\cite{manadhata2010attack}) as different as possible. If two states in a row have the same vulnerability, the attacker can exploit the vulnerability during the two states' exposure time.

The configuration set can be obvious (\eg the set of the available IP address for IP shuffling~\cite{antonatos2007defending,dunlop2011mt6d,clark2013effectiveness}). Otherwise, two techniques exist to generate states: \textit{diversification} and \textit{redundancy}. Diversification means using different objects with the same functionality, for example, semantically identical programs with different implementations. Program diversity can be achieved through compiler diversification or inserting \verb+NOP+ (no-operation) randomly within the binary. Redundancy means that several identical components are used to achieve the function. For example, a system could unpredictably use different hardware RAMs to defend against a side-channel memory attack. Similarly, the network traffic could travel through different paths to avoid a centralized attack on a single node as in Stream Splitting~\cite{evans2019stream}.

The most common movement function in the literature is randomization, \ie, choosing the next state independently at each reconfiguration with uniform distribution in the configuration set. This leads to optimal entropy as the unpredictability is maximal.

\subsubsection{When to Change the \glsfmtshort{mp} Value?}

Although it is less studied, the question of ``when'' is essential. The reconfiguration can cost many resources and disrupt the \gls{qos}. However, it should happen often enough, so the attacker's knowledge about the system is invalidated between the probing and the actual attack. We tackle this trade-off in this paper.

We distinguish two methods to determine when to change states. The first one is \textit{time-based}: the system is reconfigured proactively after a time period. 
The second method is \textit{event-based}. In this case, the movement happens either reactively (when detecting a suspicious activity) or on a non-time-based event (\eg the reception of every megabyte of data). This paper focuses on time-based \glspl{mtd}.

\subsubsection{Challenges for \glsfmtshortpl{mtd}.}
We identify three main challenges for \glspl{mtd}:
\begin{inparaenum}[(i)]
    \item the costs,
    \item the state design, and
    \item the activation frequency.
\end{inparaenum}

\paragraph{Costs.}

There are three sources of cost in \glspl{mtd}:
\begin{inparaenum}[(i)]
    \item the state \textit{generation} cost,
    \item the state \textit{storage} cost, and
    \item the state \textit{migration} cost.
\end{inparaenum}
The value of these costs highly depends on the actual \gls{mtd} mechanism. In an \gls{mtd} that shuffles the service/port mapping, the state generation requires running a pseudo-random number generator to obtain the new mapping. The storage cost is an array of integers (generally with 65535 entries), and finally, the migration cost would depend on how each service interruption is handled.

\paragraph{State Design.}

The choice of the \gls{mp} and the configuration set is a significant challenge. First, the configuration set must be large enough so the attacker cannot prepare an attack for each state.
Second, two configurations in a row should, ideally, have an empty intersection of their attack surface. In service/port shuffling, if two states in a row have a service that is not remapped, an attack on this service can be carried on.

\paragraph{Activation Frequency.}

This subsection deals with time-based \glspl{mtd}. If the activation frequency is high, the attacker has less time to plan his attack, but the \gls{qos} could be disrupted due to resource consumption and the discontinuity of the service. This trade-off between the cost and the effectiveness of time-based \gls{mtd} has not been addressed enough in the literature, even though it is a crucial point of \gls{mtd} applicability in real life. As a result, This challenge requires an optimal timing strategy for the \gls{mtd} activation frequency that considers the costs, the \gls{qos}, and the threat characteristics (\eg the expected time of the attacks and the success probability).
The following work tackles this problem.}

\mysubsection{\glsfmtlong{at}}\label{subsec:AT}
\shrinkalt{}{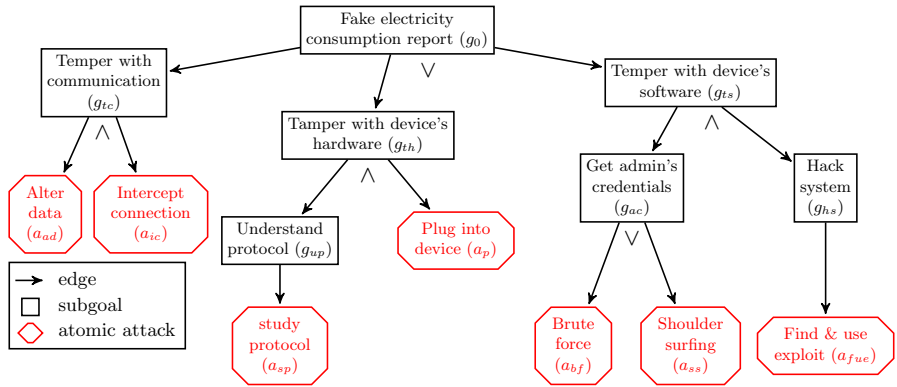
\begin{figure}
    \centering
    \begin{tikzpicture}[
            ->,
            >=stealth',
            shorten >=1pt,
            auto,
            sloped,
            semithick,
            xscale=0.78,
            yscale=0.7
        ]
        
        \node[goalUseCase, label=300:{$\lor$}] at (0, 0) (g_0) {Fake electricity\\consumption report ($\AMGroot$)};
        \node[goalUseCase, label=268:{$\land$}] at (-5, -1) (g_tc) {Temper with\\communication\\($g_{tc}$)};
        \node[goalUseCase, label=below:{$\land$}] at (-0.5, -2) (g_th) {Tamper with device's\\hardware ($g_{th}$)};
        \node[goalUseCase, label=-80:{$\land$}] at (5, -1) (g_ts) {Temper with device's\\software ($g_{ts}$)};
        \node[attackUseCase, label={}] at (-6, -3.5) (a_ad) {Alter\\data\\($a_{ad}$)};
        \node[attackUseCase, label={}] at (-4.2, -3.5) (a_ic) {Intercept\\connection\\($a_{ic}$)};
        \node[attackUseCase, label={}] at (1, -4) (a_p) {Plug into\\device ($a_{p}$)};
        \node[goalUseCase, label=below:{}] at (-2, -4) (g_up) {Understand\\protocol ($g_{up}$)};
        \node[attackUseCase] at (-2,-6) (a_sp) {study\\protocol\\($a_{sp}$)};
        \node[goalUseCase, label=below:{$\lor$}] at (4, -3) (g_ac) {Get admin's\\credentials\\($g_{ac}$)};
        \node[attackUseCase] at (3,-6) (a_bf) {Brute\\force\\($a_{bf}$)};
        \node[attackUseCase] at (5,-6) (a_ss) {Shoulder\\surfing\\($a_{ss}$)};
        \node[goalUseCase, label=below:{}] at (7.3, -3) (g_hs) {Hack\\system\\($g_{hs}$)};
        \node[attackUseCase] at (7.3,-6) (a_fue) {Find \& use\\exploit ($a_{fue}$)};

        \path (g_0) edge (g_tc) edge (g_th) edge (g_ts);
        \path (g_tc) edge (a_ad) edge (a_ic);
        \path (g_th) edge (g_up) edge (a_p);
        \path (g_up) edge (a_sp);
        \path (g_ts) edge (g_ac) edge (g_hs);
        \path (g_ac) edge (a_bf) edge (a_ss);
        \path (g_hs) edge (a_fue);
        
        \matrix [draw, below left, column sep=3pt, row sep=0pt, inner sep=2pt, ampersand replacement=\&, column 1/.style={anchor=base}, column 2/.style={anchor=base west}] at (-3.6, -4.4) {
        \draw[line width=0.5pt, >=stealth', semithick](-0.2, 0) -- (0.2, 0); \& \node[right, scale=0.8, align=left] {edge};\\
        \node [goalUseCase, minimum size=1em, scale=1] {\phantom{a}}; \& \node[scale=0.8]{subgoal};\\
        \node [attackUseCase, minimum size=0.8em, scale=0.8] {\phantom{a}}; \& \node[scale=0.8]{atomic attack}; \\
        };
                
    \end{tikzpicture}
    \caption[Example of an \glsfmtshort{at} for an electricity meter.]{Example of an \gls{at} for an electricity meter. When a subgoal has only one child, the refinement can be omitted.}
    \label{fig:example-at}
\end{figure}}

\glspl{at}~\cite{mauw2005foundations} and their derivatives are graphical security models representing the hierarchical structures of attacks in a tree. The original inspiration comes from Weiss' \textit{threat logic trees} for reliability in 1991~\cite{weiss1991system}.
\shrinkalt{}{Nowadays many derivatives have been proposed to encompass attacks and defenses (Attack Defense Tree~\cite{kordy2014attack,roy2010cyber}), probabilistic computation and time~\cite{arnold2014time}, sequentiality and dynamic aspects~\cite{camtepe2007modeling,lv2011space,ardi2006towards}.  Moreover, \gls{at}-based modeling has been the focus of two surveys in 2014~\cite{kordy2014dag} and 2019~\cite{widel2019beyond}.}

\begin{definition}[\glsxtrlong{at}]
An \gls{at} be is a rooted tree structure $\brac{\Nodes, \Edges, \AMGroot, \operation{}}$ with a finite set of nodes $\Nodes$, edges $\Edges \subseteq \Nodes \times \Nodes$, root $\AMGroot \in \Nodes$ called the \textit{main goal}. Let $\mathsf{I} \subseteq \Nodes$ be the set of inner-nodes (nodes with descendants) called \textit{subgoals}. The list $\operation{} = (\operation{g})_{g \in \mathsf{I}}$ assigns a refinement $\operation{g} \in \{\land, \lor\}$ for each subgoal $g$. The leaves of the tree are called \textit{atomic attacks} or \textit{basic actions}.
\end{definition}

Let $g$ be \shrinkalt{a}{an \gls{at}} subgoal. If $\operation{g}$ is a conjunctive refinement ($\land$), then $g$'s achievement requires all its children to be completed. If $g$ is a disjunctive refinement ($\lor$), then $g$'s achievement requires at least one of its children to be completed.\shrinkalt{Sometimes, a ``sequential and'' refinement is considered~\cite{lv2011space,camtepe2007modeling}, but we will disregard it.}{The atomic attacks can be augmented with \textit{attributes} (\eg probability, cost, or risk), and if we provide a way to propagate these attributes to the parent nodes, we can compute the attributes for the main goal. The overall tree gives an easily understandable description of the possible attack scenarios, and automatic computation of the attributes for the whole tree is possible. \figurename~\ref{fig:example-at} is an example of \gls{at}.

In our work, we introduce \gls{amg} that are an extension of \gls{at} that allows a rooted \gls{dag} structure instead of a tree. Moreover, we will use \textit{atomic attack} instead of \textit{basic actions} to emphasize the attacker part in a setting where the defender will be present too.
}

\mysubsection{\glsfmtlong{ptmdp}}\label{subsec:PTMDP}
\fromptaptg{
In this section we construct the \gls{ptmdp} through three steps: 
\begin{inparaenum}[(i)]
    \item we define the \gls{pta} that is an automata with time and costs,
    \item we extend it as a \gls{ptg} by attributing the actions to two players,
    \item and we add a stochastic environment to get a \gls{ptmdp}.
\end{inparaenum}
}{
In this section we construct the \gls{ptmdp} through two steps: 
\begin{inparaenum}[(i)]
    \item we define the \gls{ptg} that is a two player automata with time and costs,
    \item and we add a stochastic environment to get a \gls{ptmdp}.
\end{inparaenum}
}{}
Our formalism for \fromptaptg{\gls{pta}, \gls{ptg}, and \gls{ptmdp}}{\gls{ptg} and \gls{ptmdp}}{\gls{ptmdp}} is identical to the one defined in~\cite{david2014time}, except that we allow a transition cost in addition to the location cost. These two types of costs are standard. For example, we find them in the \gls{pta} definition in~\cite{behrmann2004priced}.

\shrinkalt{}{\subsubsection{Clocks, Valuations, Constraints.}}

To define a \fromptaptg{\gls{pta}}{\gls{ptg}}{\gls{ptmdp}}, we first need to specify what a clock is and define the clock constraints. 
A \textit{clock} is variable in the non-negative real numbers (denoted $\Delays$) representing the time. Let $\Clocks$ be a set of clocks. It is always implicitly assumed that the clocks of a clock set progress synchronously. We define $\Guard(\Clocks)$ as the set of \textit{clock constraints} generated by the grammar with start symbol and non-terminal $g$ and rule 
$g \rightarrow x \bowtie n \mid x - y \bowtie n \mid g \land g \mid \varepsilon$ where $x,y\in \Clocks$, $\bowtie \in \{ \leq, <, =, >, \geq\}$, $n\in \mathbb N$, and $\varepsilon$ is the empty string.
Given a clock set $\Clocks$, a \textit{valuation} $v$ is a function $v : \Clocks \mapsto \Delays$. We call $\Valuations_\Clocks$ the set of valuations on $\Clocks$, or simply $\Valuations$ when the clock set is clear in the context. For $v \in \Valuations$, $v$ is \textit{valid} given a clock constraint $s\in \Guard(\Clocks)$, denoted $v\vDash s$, if the formula $s$ is true when we evaluate the clocks in $s$ with $v$. 
For $\delay \in \Delays$, we denote $v+\delay$ the valuation s.t. $(v+\delay)(x) = v(x) + \delay$ for $x\in \Clocks$, and for a subset $Y\subseteq \Clocks$, we denote $v[Y]$ the valuation where $v[Y](x) = v(x)$ for $x\in \Clocks\setminus Y$ and $v[Y](x) = 0$ otherwise.

\fromptaptg{
    \subsubsection{\glsfmtlong{pta}.}
    \begin{figure}
    \centering
    \begin{tikzpicture}[
            ->,
            >=stealth',
            shorten >=1pt,
            auto,
            node distance=2.8cm,
            semithick,
            sloped
        ]

        \node[state, very thick, label={[invariant]below:{$x\leq 5$}}] (A) {$\loc_0$};
        \node[state] (B) [above left of=A, label={[invariant]above:{$x\leq 6$}}]  {$\loc_1$};
        \node[state] (D) [above right of=A, label={[invariant]right:{$x\leq 3$}}] {$\loc_2$};
        \node[state] (C) [above right of=B, label={[invariant]above:{$x\leq 3$}}] {$\loc_3$};
        \node[state] (E) [right of=D, label={[invariant]above:{$x\leq 2$}}] {$\loc_4$};
        \node[weight] (weightE) [above=0.35 of E]{$c'=1$};
    
        \path (A) edge node [scaledown, above] {$\action_1$} node [guard, below] {$x = 3$}(B);
        \path (A) edge node [scaledown] {$\action_2$} node [clock, below] {$x\leftarrow 0$} (C);
        \path (B) edge [loop left] node [scaledown] {$\action_4$} (B);
        \path (B) edge node [scaledown] {$\action_3$} node [clock, below] {$x\leftarrow 0$} (C);
        \path (C) edge node [scaledown] {$\action_1$} node [weight, below] {$c \mathrel{+}=1$} (D);
        \path (C) edge [bend left]  node [scaledown] {$\action_4$} node [guard, below] {$x < 1$} node [clock, below=0.3] {$x \leftarrow 0$} (E);
        \path (D) edge [loop below] node [scaledown] {$\action_2$} (D);
        \path (D) edge node [scaledown] {$\action_3$} node [weight, below] {$c\mathrel{+}=2$} (A);
        \path (E) edge [bend left]  node [scaledown, below=0.3] {$\action_1$} node [guard, below] {$x \leq 5 \land x \geq 1$} (A);
        
        \matrix [draw, right of=E, column sep=5pt, row sep=3pt, inner sep=2pt, ampersand replacement=\&, column 1/.style={anchor=base}, column 2/.style={anchor=base west}]{
        \node [state, minimum size=2em, scale=0.7] {$\loc$}; \& \node[scale=0.8] {location $\loc$};\\
        \draw[line width=0.5pt, >=stealth', semithick](-0.25, 0) -- (0.25,0); \& \node[right, scale=0.8] {transition};\\
        \node [invariant, scale=0.8] {$x \bowtie y$}; \& \node[right, scale=0.8, align=left] at (0, 0.2em) {location\\[-0.5em]invariant};\\
        \node [guard, scale=0.8] {$x \bowtie y$}; \& \node[scale=0.8, align=left] {edge guard};\\
        \node [weight, scale=0.8] {$c \bowtie y$}; \& \node[right, scale=0.8, align=left] at (0, 0.2em) {location and\\[-0.5em]transition cost};\\
        \node [scale=0.8] {$\action^x_y$}; \& \node[scale=0.8, align=left] {action type};\\
        \node [clock, scale=0.8] {$x \leftarrow 0$}; \& \node[scale=0.8, align=left] {clock reset};\\
        };
    \end{tikzpicture}
    \caption[Example of the structure of a \glsfmtshort{pta}.]{Example of the structure of the \gls{pta} $\PTA = \PTADef$. Where $\Locations = \{\loc_0, \loc_1, \loc_2, \loc_3, \loc_4\}$, $X=\{x\}$, $\Actions= \{\action_1, \action_2, \action_3, \action_4\}$. The variable $c$ represent the cost. We assume that the cost is null where there it is omitted.}
    \label{fig:example-pta}
\end{figure}
    A \gls{pta} is an automata with cost and time attributes.
    \begin{definition}[\glsxtrlong{pta}]
A \gls{pta} is defined as a tuple $\PTA = \PTADef$ where $\Locations$ is the finite set  of \textit{locations}, $\loc_0\in \Locations$ is the \textit{initial location}, $\Clocks$ is a set of synchronous \textit{clocks}, $\Actions$ is the finite set of \textit{actions}, $\Transitions \subseteq \Locations \times \Guard(\Clocks) \times \Actions \times \Locations$ is a transition relation,
$\price: \Locations\cup\Transitions \mapsto \mathbb N$ assigns \textit{cost rates} to locations and \textit{costs} to edges, $\clockReset: \Transitions\mapsto 2^\Clocks$ gives the set of clocks that are reset after a transition, and $\invariants: \Locations \mapsto \Guard(\Clocks)$ assigns \textit{invariants} to locations.
\end{definition}
    \figurename~\ref{fig:example-pta} is an example of \gls{pta}.
    \subsubsection{\glsfmtlong{ptg}.}
    \begin{figure}
    \centering
    \begin{tikzpicture}[
            ->,
            >=stealth',
            shorten >=1pt,
            auto,
            node distance=2.8cm,
            semithick,
            sloped
        ]

        \node[state, very thick, label={[invariant]below:{$x\leq 5$}}] (A) {$\loc_0$};
        \node[state] (B) [above left of=A, label={[invariant]above:{$x\leq 6$}}]  {$\loc_1$};
        \node[state] (D) [above right of=A, label={[invariant]right:{$x\leq 3$}}] {$\loc_2$};
        \node[state] (C) [above right of=B, label={[invariant]above:{$x\leq 3$}}] {$\loc_3$};
        \node[state] (E) [right of=D, label={[invariant]above:{$x\leq 2$}}] {$\loc_4$};
        \node[weight] (weightE) [above=0.35 of E]{$c'=1$};
    
        \path (A) edge node [scaledown, above] {$\action_1$} node [guard, below] {$x = 3$}(B);
        \path (A) edge node [scaledown] {$\action_2$} node [clock, below] {$x\leftarrow 0$} (C);
        \path (B) edge [loop left, dashed] node [scaledown] {$\action_4$} (B);
        \path (B) edge [dashed] node [scaledown] {$\action_3$} node [clock, below] {$x\leftarrow 0$} (C);
        \path (C) edge node [scaledown] {$\action_1$} node [weight, below] {$c \mathrel{+}=1$} (D);
        \path (C) edge [bend left, dashed]  node [scaledown] {$\action_4$} node [guard, below] {$x < 1$} node [clock, below=0.3] {$x \leftarrow 0$} (E);
        \path (D) edge [loop below] node [scaledown] {$\action_2$} (D);
        \path (D) edge [dashed] node [scaledown] {$\action_3$} node [weight, below] {$c\mathrel{+}=2$} (A);
        \path (E) edge [bend left]  node [scaledown, below=0.3] {$\action_1$} node [guard, below] {$x \leq 5 \land x \geq 1$} (A);

        \matrix [draw, right of=E, column sep=5pt, row sep=3pt, inner sep=2pt, ampersand replacement=\&, column 1/.style={anchor=base}, column 2/.style={anchor=base west}]{
        \node [state, minimum size=2em, scale=0.7] {$\loc$}; \& \node[scale=0.8] {location $\loc$};\\
        \draw[dashed, line width=0.5pt, >=stealth', semithick](-0.25, 0) -- (0.25,0); \& \node[right, scale=0.8, align=left] {uncontrollable\\[-0.5em]transition};\\
        \draw[line width=0.5pt, >=stealth', semithick](-0.25, 0) -- (0.25,0); \& \node[right, scale=0.8, align=left] {controllable\\[-0.5em]transition};\\
        \node [invariant, scale=0.8] {$x \bowtie y$}; \& \node[right, scale=0.8, align=left] at (0, 0.2em) {location\\[-0.5em]invariant};\\
        \node [guard, scale=0.8] {$x \bowtie y$}; \& \node[scale=0.8, align=left] {edge guard};\\
        \node [weight, scale=0.8] {$c \bowtie y$}; \& \node[right, scale=0.8, align=left] at (0, 0.2em) {location and\\[-0.5em]transition cost};\\
        \node [scale=0.8] {$\action^x_y$}; \& \node[scale=0.8, align=left] {action type};\\
        \node [clock, scale=0.8] {$x \leftarrow 0$}; \& \node[scale=0.8, align=left] {clock reset};\\
        };
    \end{tikzpicture}
    \caption[Example of the structure of a \glsfmtshort{ptg}.]{Example of the structure of the \gls{ptg} $\PTG = \PTGDef$ where $\PTA$ is the \gls{pta} from \figurename~\ref{fig:example-pta} and $\ControllableActions= \{\action_1, \action_2\}$, and $\UncontrollableActions=\{\action_3, \action_4\}$. The variable $c$ represent the cost. We assume that the cost is null where there it is omitted.}
    \label{fig:example-ptg}
\end{figure}
    A \gls{ptg} is a \gls{pta} where the set of actions $\Actions$ is partitioned into a set of \textit{controllable} actions and \textit{uncontrollable} actions.

\begin{definition}[\glsxtrlong{ptg}]
A \gls{ptg} is defined as a tuple $\PTG = \PTGDef$ where $\PTA = \PTADef$ is a \gls{pta}, $\ControllableActions$ is the finite set of \textit{controllable actions}, $\UncontrollableActions$ is the finite set of \textit{uncontrollable actions}, s.t., $\Actions = \ControllableActions \cup \UncontrollableActions$ and $\ControllableActions \cap \UncontrollableActions = \emptyset$.
\end{definition}

Alternatively, a \gls{ptg} can be defined from scratch by enumerating the \gls{pta} as $\PTG = \PTGDefFromScratch$. \figurename~\ref{fig:example-ptg} extends the example of \figurename~\ref{fig:example-pta} in a \gls{ptg}.
    \subsubsection{\glsfmtlong{ptmdp}.}
    A \gls{ptmdp} is a \gls{ptg} where the uncontrollable player has a predefined strategy modeling the environment.

\begin{definition}[\glsxtrlong{ptmdp}]
A \gls{ptmdp} is defined as a couple $\PTMDP = \PTMDPDef$ where $\PTG = \PTGDefFromScratch$ is a \gls{ptg} and $\density{} :  \Locations \times \Valuations \mapsto (\Delays \times \UncontrollableActions \mapsto [0,1])$ gives a density function for each location $\loc$ and valid clock valuation $v\in \Valuations$ s.t. for $B = \{\delay \in \Delays \mid \forall \delay' \in [0, \delay], v+\delay' \vDash \invariants(\loc)\}$, it holds,
\begin{enumerate}[(i)]
    \item \label{item:ptmdp-sum-1} $\sum_{\action \in \UncontrollableActions}\int_{\delay \in B} \density{(\loc,v)}(\delay, \action)= 1$ and,
    \item \label{item:ptmdp-valid-stay} For $\action \in \UncontrollableActions$ and $\delay \in \Delays \setminus B$, $\density{(\loc,v)}(\delay, \action)= 0$ and,
    \item \label{item:ptmdp-valid-transition} For $\action \in \UncontrollableActions$ and $\delay \in B$, if $\density{(\loc,v)}(\delay, \action) > 0$, there exists $s\in \Guard(\Clocks)$ and $\loc'\in \Locations$ s.t. $(\loc, s, \action, \loc') \in \Transitions$ and $v+\delay \vDash s \land \invariants(\loc')$.
\end{enumerate}
\end{definition}
}{
    \subsubsection{\glsfmtlong{ptg}.}
    A \gls{ptg} is two-player game on an automata with cost and time attributes.

\begin{definition}[\glsxtrlong{pta}]
A \gls{pta} is defined as a tuple $\PTG = \PTGDef$ where $\Locations$ is the finite set  of \textit{locations}, $\loc_0\in \Locations$ is the \textit{initial location}, $\Clocks$ is a set of synchronous \textit{clocks}, $\ControllableActions$ is the finite set of \textit{controllable actions}, $\UncontrollableActions$ is the finite set of \textit{uncontrollable actions}, $\Transitions \subseteq \Locations \times \Guard(\Clocks) \times (\ControllableActions \cup \UncontrollableActions) \times \Locations$ is a transition relation,
$\price: \Locations\cup\Transitions \mapsto \mathbb N$ assigns \textit{cost rates} to locations and \textit{costs} to edges, $\clockReset: \Transitions\mapsto 2^\Clocks$ gives the set of clocks that are reset after a transition, and $\invariants: \Locations \mapsto \Guard(\Clocks)$ assigns \textit{invariants} to locations.
\end{definition}
    \subsubsection{\glsfmtlong{ptmdp}.}
    
}{
    \shrinkalt{}{\begin{figure}
    \centering
    \begin{tikzpicture}[
            ->,
            >=stealth',
            shorten >=1pt,
            auto,
            node distance=2.8cm,
            semithick,
            sloped
        ]

        \node[state, very thick, label={[invariant]below:{$x\leq 5$}}] (A) {$\loc_0$};
        \node[state] (B) [above left of=A, label={[invariant]above:{$x\leq 6$}}]  {$\loc_1$};
        \node[state] (D) [above right of=A, label={[invariant]right:{$x\leq 3$}}] {$\loc_2$};
        \node[state] (C) [above right of=B, label={[invariant]above:{$x\leq 3$}}] {$\loc_3$};
        \node[state] (E) [right of=D, label={[invariant]above:{$x\leq 2$}}] {$\loc_4$};
        \node[weight] (weightE) [above=0.35 of E]{$c'=1$};
    
        \path (A) edge node [scaledown, above] {$\action_1$} node [guard, below] {$x = 3$}(B);
        \path (A) edge node [scaledown] {$\action_2$} node [clock, below] {$x\leftarrow 0$} (C);
        \path (B) edge [loop left, dashed] node [scaledown] {$\action_4$} (B);
        \path (B) edge [dashed] node [scaledown] {$\action_3$} node [clock, below] {$x\leftarrow 0$} (C);
        \path (C) edge node [scaledown] {$\action_1$} node [weight, below] {$c \mathrel{+}=1$} (D);
        \path (C) edge [bend left, dashed]  node [scaledown] {$\action_4$} node [guard, below] {$x < 1$} node [clock, below=0.3] {$x \leftarrow 0$} (E);
        \path (D) edge [loop below] node [scaledown] {$\action_2$} (D);
        \path (D) edge [dashed] node [scaledown] {$\action_3$} node [weight, below] {$c\mathrel{+}=2$} (A);
        \path (E) edge [bend left]  node [scaledown, below=0.3] {$\action_1$} node [guard, below] {$x \leq 5 \land x \geq 1$} (A);
        
        \matrix [draw, right of=E, column sep=5pt, row sep=3pt, inner sep=2pt, ampersand replacement=\&, column 1/.style={anchor=base}, column 2/.style={anchor=base west}]{
        \node [state, minimum size=2em, scale=0.7] {$\loc$}; \& \node[scale=0.8] {location $\loc$};\\
        \draw[dashed, line width=0.5pt, >=stealth', semithick](-0.25, 0) -- (0.25,0); \& \node[right, scale=0.8, align=left] {uncontrollable\\[-0.5em]transition};\\
        \draw[line width=0.5pt, >=stealth', semithick](-0.25, 0) -- (0.25,0); \& \node[right, scale=0.8, align=left] {controllable\\[-0.5em]transition};\\
        \node [invariant, scale=0.8] {$x \bowtie y$}; \& \node[right, scale=0.8, align=left] at (0, 0.2em) {location\\[-0.5em]invariant};\\
        \node [guard, scale=0.8] {$x \bowtie y$}; \& \node[scale=0.8, align=left] {edge guard};\\
        \node [weight, scale=0.8] {$c \bowtie y$}; \& \node[right, scale=0.8, align=left] at (0, 0.2em) {location and\\[-0.5em]transition cost};\\
        \node [scale=0.8] {$\action^x_y$}; \& \node[scale=0.8, align=left] {action type};\\
        \node [clock, scale=0.8] {$x \leftarrow 0$}; \& \node[scale=0.8, align=left] {clock reset};\\
        };
    \end{tikzpicture}
    \caption[Example of the structure of a \glsfmtshort{ptmdp}.]{Example of the structure of the \gls{ptmdp} $\PTMDP = \PTMDPDef$ where $\ControllableActions= \{\action_1, \action_2\}$, and $\UncontrollableActions=\{\action_3, \action_4\}$. The variable $c$ represent the cost. We assume that the cost is null where there it is omitted.}
    \label{fig:example-ptmdp}
\end{figure}
    \subsubsection{\glsfmtlong{ptmdp}.}}
    A \gls{ptmdp} is a two-player game on a priced timed stochastic game structure \shrinkalt{}{(example in Fig.~\ref{fig:example-ptmdp})} where a player has a predefined strategy modeling the environment.

\begin{definition}[\glsxtrlong{ptmdp}]
A \gls{ptmdp} is a tuple $\PTMDP = \PTMDPDef$ where $\Locations$ is the finite set  of \textit{locations}, $\loc_0\in \Locations$ is the \textit{initial location}, $\Clocks$ is a set of synchronous \textit{clocks}, $\ControllableActions$ is the finite set of \textit{controllable actions}, $\UncontrollableActions$ is the finite set of \textit{uncontrollable actions}, $\Transitions \subseteq \Locations \times \Guard(\Clocks) \times (\ControllableActions \cup \UncontrollableActions) \times \Locations$ is a transition relation,
$\price: \Locations\cup\Transitions \mapsto \mathbb N$ assigns \textit{cost rates} to locations and \textit{costs} to edges, $\clockReset: \Transitions\mapsto 2^\Clocks$ gives the set of clocks reset after a transition, $\invariants: \Locations \mapsto \Guard(\Clocks)$ assigns \textit{invariants} to locations, and $\density{} :  \Locations \times \Valuations \mapsto (\Delays \times \UncontrollableActions \mapsto [0,1])$ gives a density function for each location $\loc$ and valid valuation $v\in \Valuations$ s.t. for $B = \{\delay \in \Delays \mid \forall \delay' \in [0, \delay], v+\delay' \vDash \invariants(\loc)\}$, it holds:
\begin{itemize}
    \item \label{item:ptmdp-sum-1} $\sum_{\action \in \UncontrollableActions}\int_{\delay \in B} \density{(\loc,v)}(\delay, \action)= 1$ and,
    \item \label{item:ptmdp-valid-stay} For $\action \in \UncontrollableActions$ and $\delay \in \Delays \setminus B$, $\density{(\loc,v)}(\delay, \action)= 0$ and,
    \item \label{item:ptmdp-valid-transition} For $\action \in \UncontrollableActions$ and $\delay \in B$, if $\density{(h,v)}(\delay, \action) > 0$, there exists $s\in \Guard(\Clocks)$, $\loc'\in \Locations$, and $e=(\loc, s, \action, \loc') \in \Transitions$ s.t. $v+\delay \vDash \invariants(\loc) \land s$ and $(v+b)[\clockReset(e)] \vDash \invariants(\loc')$.
\end{itemize}
\end{definition}
}

In the definition, $\density{(\loc,v)}(\delay, \action)$ is the density for the environment aiming at taking an uncontrollable action $\action\in \UncontrollableActions$ after a delay $\delay\in \Delays$ respecting the transitions and invariants. The set $B$ contains delays s.t. it is still possible to stay in $\loc$.\shrinkalt{}{The condition~(\ref{item:ptmdp-sum-1}) insures that, for all $\loc \in \Locations$ and $v\in \Valuations$, $\density{(\loc,v)}$ is a probability density. The condition~(\ref{item:ptmdp-valid-stay}) imposes that the density takes an action while it is still possible to stay in the current location, and the condition~(\ref{item:ptmdp-valid-transition}) imposes that the possible transitions lead to valid states with valid valuations.}\fromptaptg{For the rest of this paper, when a general \gls{ptmdp} $\PTMDP$ is given, we assume $\PTMDP=\PTMDPDef$. with $\PTG = \PTGDefFromScratch$ a general \gls{ptg}.}{For the rest of this paper, when a general \gls{ptmdp} $\PTMDP$ is given, we assume $\PTMDP=\PTMDPDef$. with $\PTG = \PTGDefFromScratch$ a general \gls{ptg}.}{For the rest of this paper, when a general \gls{ptmdp} $\PTMDP$ is given, we assume $\PTMDP=\PTMDPDef$.}
\shrinkalt{}{Given a \gls{ptmdp} $\PTMDP$, we denote $\Locations^*$ the set of finite suits of locations respecting the transitions, called \textit{history}.}
For a location $\loc\in \Locations$, we say that a valuation $v$ on $\Clocks$ is valid in $\loc$ if $v\vDash \invariants(\loc)$.

\shrinkalt{}{\subsubsection{Strategies.}}

\shrinkalt{
    The concept of \textit{memoryless strategy} on a \gls{ptmdp} is formalized as follows: intuitively it is a function that assigns a density to the subsequent possible actions of the player given the current state of the game.
    \begin{definition}[Memoryless Strategy]
A memoryless strategy $\strategy{}$ over a PTMDP $\PTMDP$ is a function $\strategy{} : \strategy{}: \Locations \times \Valuations \mapsto (\Delays \times \ControllableActions \mapsto [0,1])$ s.t. for $\loc \in \Locations$, $v \in \Valuations$, and $B = \{\delay \in \Delays \mid \forall \delay' \in [0, \delay], v+\delay' \vDash \invariants(\loc)\}$, it holds:
\begin{itemize}
    \item $\sum_{\action \in \ControllableActions}\int_{\delay \in B} \strategy{(\loc,v)}(\delay, \action)= 1$ and,
    \item For $\action \in \ControllableActions$ and $\delay \in \Delays \setminus B$, $\strategy{(\loc,v)}(\delay, \action)= 0$ and,
    \item For $\action \in \ControllableActions$ and $\delay \in \Delays$, if $\strategy{(\loc,v)}(\delay, \action) > 0$, there exists $s\in \Guard(\Clocks)$, $\loc'\in \Locations$, and $e=(\loc, s, \action, \loc') \in \Transitions$ s.t. $v+\delay \vDash \invariants(\loc) \land s$ and $(v+b)[\clockReset(e)] \vDash \invariants(\loc')$.
\end{itemize}
\end{definition}
    We can notice the similarity with the environment's $\density{}$, which is a memoryless strategy for uncontrollable actions. We can extend the definition by allowing Dirac distributions for discrete probabilities and adding an extra action $\action_\text{wait}$ which means waiting forever and is available only is $v+\delay \vDash \invariants(\loc)$ for all $\delay \in \Delays$.
}{
    The concept of \textit{strategy} on a \gls{ptmdp} is formalized as follows: intuitively, it is a function that assigns a density to the subsequent possible actions of the player given the state of the game.
    \begin{definition}[Strategy]
A strategy $\strategy{}$ over a PTMDP $\PTMDP$ is a function $\strategy{}: \Locations^* \times \Valuations \mapsto (\Delays \times \ControllableActions \mapsto [0,1])$ s.t. for $h = (\loc_0, \dots, \loc_n) \in \Locations^*$, $v \in \Valuations$, and $B = \{\delay \in \Delays \mid \forall \delay' \in [0, \delay], v+\delay' \vDash \invariants(\loc_n)\}$, it holds
\begin{itemize}
    \item $\sum_{\action \in \ControllableActions}\int_{\delay \in B} \strategy{(h,v)}(\delay, \action)= 1$ and,
    \item For $\action \in \ControllableActions$ and $\delay \in \Delays \setminus B$, $\strategy{(h,v)}(\delay, \action)= 0$ and,
    \item For $\action \in \ControllableActions$ and $\delay \in \Delays$, if $\strategy{(h,v)}(\delay, \action) > 0$, there exists $s\in \Guard(\Clocks)$, $\loc'\in \Locations$, and $e=(\loc, s, \action, \loc') \in \Transitions$ s.t. $v+\delay \vDash \invariants(\loc) \land s$ and $(v+b)[\clockReset(e)] \vDash \invariants(\loc')$.
\end{itemize}
\end{definition}
    We can extend the definition by allowing Dirac distributions for discrete probabilities and adding an extra action $\action_\text{wait}$ which means waiting forever and is available only is $v+\delay \vDash \invariants(\loc)$ for all $\delay \in \Delays$.
    
    We will now describe some classes of strategies.
    A \textit{deterministic strategy} assigns only deterministic probability distributions.
    \begin{definition}[Deterministic strategy]
We say that a strategy $\strategy{}$ is deterministic if the codomain of $\strategy{(q)}$ is $\{0, 1\}$ for all $q \in \Locations^* \times \Delays$.
\end{definition}
    A \textit{memoryless strategy} depends only on the last state of the history and the valuation.
    \begin{definition}[Memoryless strategy]
We say that a strategy $\strategy{}$ is memoryless if for two histories $h = (\loc_0, \dots, \loc_n)$ and $h' = (\loc_0', \dots, \loc_k')$ s.t. $\loc_n = \loc_k'$, and a valuation $v$, we have $\strategy{(h,v)} = \strategy{(h',v)}$.
\end{definition}
    A controller with a \textit{non-lazy strategy} will either move directly from a location or wait until the environment makes a move.
    \begin{definition}[Memoryless non-lazy strategy]
We say that a strategy $\strategy{}$ is memoryless non-lazy if for all history $h \in \Locations^*$, valuation $v\in \Valuations$, and action $\action \in \ControllableActions \cup \{\action_\text{wait}\}$
$\strategy{(h,v)}(\delay, \action) = 0$ when $\delay > 0$. Notice that waiting forever is possible.
\end{definition}
    We can notice the similarity with the environment's $\density{}$, which is a memoryless strategy for uncontrollable actions.
}

\shrinkalt{}{\subsubsection{Runs.}}

Let $\PTMDP$ be a \gls{ptmdp} and $\LocVal = \{(\loc, v) \in \Locations \times \Valuations \mid v\vDash \invariants(\loc)\}$ be the valid state-valuation pairs. \textit{Runs} are valid sequences from $\LocVal$ recording time and cost.

\begin{definition}[Run]
For a \gls{ptmdp} $\PTMDP$, let $\RunElements \subseteq \LocVal \times (\Delays \cup \ControllableActions \cup \UncontrollableActions) \times \mathbb{N}^2 \times \LocVal$, s.t. for $((\loc_1, v_1), e, t, c, (\loc_2,v_2)) \in \RunElements$, $e$ represents an action or a delay, $t$ the cumulative time, $c$ the cumulative cost, and we have
\begin{itemize}
    \item if $e \in \Delays$ then $\loc_2 = \loc_1$ and $v_2 = v_1 + e$.
    \item if $e \in \ControllableActions \cup \UncontrollableActions$ then there exists $s \in \Guard(\Clocks)$ with $v_1 \vDash s$ such that $e' = (\loc_1, s, e, \loc_2) \in \Transitions$ and
    \shrinkalt{$v_2(x) = 0$ if $x\in \clockReset(e')$ or $v_2(x) = v_1(x)$ otherwise.}{$$v_2(x) = \begin{cases}
        0 & \text{if } x\in \clockReset(e')\\
        v_1(x) & \text{otherwise}
        \end{cases}$$}
\end{itemize}
\shrinkalt{
Let $\nexttc(t, c, e, (\loc, v)) = \begin{cases}
t + e, c + e \price(\loc) & \text{ if } e\in \Delays\\
t, c+ \price(e) & \text{ if } e\in \ControllableActions \cup \UncontrollableActions
\end{cases}$ be defined on $\mathbb{N}^2\times (\Delays \cup \ControllableActions \cup \UncontrollableActions) \times \LocVal$ and returning the next cumulative time and cost.
}{
Let $\nexttc: \mathbb{N}^2\times (\Delays \cup \ControllableActions \cup \UncontrollableActions) \times \LocVal \mapsto \mathbb{N}^2$ be a function returning the next cumulative time and cost, s.t, for $(t, c, e, (\loc, v)) \in \mathbb{N}^2\times (\Delays \cup \ControllableActions \cup \UncontrollableActions) \times \LocVal$,
$$\nexttc(t, c, e, (\loc, v)) = \begin{cases}
t + e, c + e \price(\loc) & \text{ if } e\in \Delays\\
t, c+ \price(e) & \text{ if } e\in \ControllableActions \cup \UncontrollableActions
\end{cases}$$

}
A run in $\PTMDP$ is a finite or infinite sequence $S$ of elements of $\RunElements$ s.t. two consecutive elements $(q_1, e, t, c, q_2)$ and $(q_1', e', t', c', q_2')$ verify $q_2 = q_1'$ and $(t', c') = \nexttc(t, c, e', q_1')$.
We denote $\Runs$ the set of runs, $\Runs^k$ the set of runs of length $k\in \mathbb{N}$, and $\Runs_0$ the set of \shrinkalt{runs from the top,}{runs s.t. the initial location is $\loc_{0}$, the initial valuation is the null function $v_0$, the initial cumulative time and cost is $(t_1, c_1) = \nexttc(0, 0, e_1, (\loc_0, v_0))$ where $e_1$ is the initial delay or action,} i.e., $\Runs_0 = \{(q_i, e_i, t_i, c_i, q_i')_{i \in \{1,\dots\}} \in \Runs \mid q_1 = (\loc_{0}, v_0) \land (t_1, c_1) = \nexttc(0, 0, e_1, (\loc_0, v_0))\}$ \shrinkalt{where $v_0$ is the null valuation}{}.
\end{definition}

For $k\in \mathbb N$ and a run $r\in \Runs^k$ we denote $\cumulativeTime{r} = c$ and $\cumulativeCost{r} = c$ where $(q_1, e, t, c q_2)$ is the last element of $r$.
A \gls{ptmdp} $\PTMDP$ with a controller strategy $\strategy{}$ defines a probability measure $\mathbb{P}_{\PTMDP, \strategy{}}$ on subsets of $\Runs_0$ giving their probability to happen. Consequently, $\cumulativeTimeDef$ and (resp. $\cumulativeCostDef$) can be seen as a random variable giving the time (resp. cost) of a possible run on the probability space $(\Runs_0, 2^{\Runs_0}, \mathbb{P}_{\PTMDP, \strategy{}})$ induced by $\mathbb{P}_{\PTMDP, \strategy{}}$. We denote \shrinkalt{$\runExpect{\PTMDP}{\strategy{}}[X]$}{$\runExpect{\PTMDP}{\strategy{}}[X] = \int_{\Runs_0} X \dd{\runProba{\PTMDP}{\strategy{}}}$} the expected value of a random variable $X$ and \shrinkalt{$\runExpect{\PTMDP}{\strategy{}}[X\mid R]$}{$\runExpect{\PTMDP}{\strategy{}}[X\mid R] = \int_{r\in \Runs_0} X(r) \frac{\runProba{\PTMDP}{\strategy{}}[X=r \cap R]}{\runProba{\PTMDP}{\strategy{}}[R]}\dd{r}$} its conditional expectation given an event $R\subseteq \Runs_0$.


\shrinkalt{}{
\section{Motivating Example}\label{sec:example}
\begin{figure}[t]
    \centering
    \begin{tikzpicture}[
            ->,
            >=stealth',
            shorten >=1pt,
            auto,
            sloped,
            semithick,
            xscale=0.78,
            yscale=0.7
        ]
        
        \node[goalUseCase, label=300:{$\lor$}] at (0, 0) (g_0) {Fake electricity\\consumption report ($\AMGroot$)};
        \node[goalUseCase, label=268:{$\land$}] at (-5, -1) (g_tc) {Temper with\\communication\\($g_{tc}$)};
        \node[goalUseCase, label=below:{$\land$}] at (-0.5, -2) (g_th) {Tamper with device's\\hardware ($g_{th}$)};
        \node[goalUseCase, label=-70:{$\land$}] at (5, -1) (g_ts) {Temper with device's\\software ($g_{ts}$)};
        \node[attackUseCase, label={}] at (-6, -3.5) (a_ad) {Alter\\data\\($a_{ad}$)};
        \node[defenseUseCase, label={}] at (-3, -2.1) (d_dk) {Dynamic\\key ($d_{dk}$)};
        \node[attackUseCase, label={}] at (-4.2, -3.5) (a_ic) {Intercept\\connection\\($a_{ic}$)};
        \node[attackUseCase, label={}] at (1, -3.5) (a_p) {Plug into\\device ($a_{p}$)};
        \node[goalUseCase, label=below:{}] at (-2, -3.5) (g_up) {Understand\\protocol ($g_{up}$)};
        \node[attackUseCase] at (-2.7,-6) (a_sp) {study\\protocol\\($a_{sp}$)};
        \node[defenseUseCase, label={}] at (-0.5, -5.5) (d_cp) {Change\\protocol\\($d_{cp}$)};
        \node[goalUseCase, label=260:{$\lor$}] at (4, -3) (g_ac) {Get admin's\\credentials\\($g_{ac}$)};
        \node[attackUseCase] at (1.5,-5.5) (a_bf) {Brute\\force\\($a_{bf}$)};
        \node[attackUseCase] at (3.5,-6) (a_ss) {Shoulder\\surfing\\($a_{ss}$)};
        \node[defenseUseCase, label={}] at (5.5, -6) (d_cc) {Change\\credentials\\($d_{cc}$)};
        \node[goalUseCase, label=below:{}] at (7.7, -3) (g_hs) {Hack\\system\\($g_{hs}$)};
        \node[attackUseCase] at (7.7,-6) (a_fue) {Find \& use\\exploit ($a_{fue}$)};
        \node[defenseUseCase, label={}] at (6.3, -4) (d_dsr) {Dynamic\\software\\rotation\\($d_{dsr}$)};

        \path (g_0) edge (g_tc) edge (g_th) edge (g_ts);
        \path (g_tc) edge (a_ad) edge (a_ic);
        \path (g_th) edge (g_up) edge (a_p);
        \path (g_up) edge (a_sp);
        \path (g_ts) edge (g_ac) edge (g_hs);
        \path (g_ac) edge (a_bf) edge (a_ss);
        \path (g_th) edge (g_ac);
        \path (g_hs) edge (a_fue);
        \path (d_dk) edge [-, dashed, bend right] (a_ad);
        \path (d_cp) edge [-, dashed] (a_sp) edge [-, dashed] (g_up);
        \path (d_cc) edge [-, dashed, bend right] (a_bf) edge [-, dashed, bend right] (a_ss) edge [-, dashed] (g_ac);
        \path (d_dsr) edge [-, dashed] (a_fue);
        
        \matrix [draw, below left, column sep=3pt, row sep=0pt, inner sep=2pt, ampersand replacement=\&, column 1/.style={anchor=base}, column 2/.style={anchor=base west}] at (-3.6, -4.4) {
        \draw[line width=0.5pt, >=stealth', semithick](-0.2, 0) -- (0.2, 0); \& \node[right, scale=0.8, align=left] {edge};\\
        \draw[dashed,line width=0.5pt, semithick, -](-0.2, 0) -- (0.2, 0); \& \node[right, scale=0.8, align=left] {associated\\[-0.5em]defense};\\
        \node [goalUseCase, minimum size=1em, scale=1] {\phantom{a}}; \& \node[scale=0.8]{subgoal};\\
        \node [attackUseCase, minimum size=0.8em, scale=0.8] {\phantom{a}}; \& \node[scale=0.8]{atomic attack}; \\
        \node [defenseUseCase, radius=3pt, minimum size=1em, scale=1] {\phantom{a}}; \& \node[scale=0.8]{\glsfmtshort{mtd}}; \\
        };
                
    \end{tikzpicture}
    \caption[Example of a \glsfmtshort{dag}-based structure with \glsfmtshortpl{mtd} for an electricity meter.]{Example of a \gls{dag}-based structure with \glspl{mtd} for an electricity meter\shrinkalt{.}{ extending the \gls{at} in \figurename~\ref{fig:example-at}. Notice the edge between $g_{th}$ and $g_{ac}$ making the structure a \gls{dag} and no longer a tree.}
    Refinements are below subgoals or omitted for single child subgoals.
    }
    \label{fig:example-adt}
\end{figure}
After studying an electricity meter, a team of engineers from an electricity provider can find potential threats and attack paths to report wrong electricity consumption. They obtain the \gls{dag}-based graph of \figurename~\ref{fig:example-adt} composed of black rectangles defining subgoals and the red octagons representing atomic attacks. The engineers also identify the features of a potential attack: the duration of each atomic attack step, their success probability, and their cost for the attacker.
The electricity providers have a set of \glspl{mtd} to harden the attacker's task. The \glspl{mtd} are the nodes in red in \figurename~\ref{fig:example-adt}. They are attached to nodes that they defend by resetting the attack on the nodes.
\figurename~\ref{fig:example-adt} can be read as follows. In order to achieve the main goal of reporting a fake consumption ($g_0$), the attacker can \emph{either} tamper with the communication ($g_{tc}$), the device hardware ($g_{th}$), or the device software ($g_{ts}$). Moreover, the attacker must intercept the connection ($a_{ic}$) \emph{and} alter the data ($a_{ad}$) to temper the communication. However, the attack of altering the data is protected by an \gls{mtd} that changes the communication key periodically ($d_{dk}$). The rest of the graph can be interpreted in similar ways.

The problem is that \glspl{mtd} are costly. For example, changing the communication key when nobody is attacking the system makes the communication longer for the regular user. As a result, the defender must parametrize the \glspl{mtd} carefully. This parameter is the activation frequency of the different \glspl{mtd}. The higher the frequency is, the more the quality of service is impacted, and the more the \gls{mtd} prevents the attack.
Thus, we need to find a way to evaluate a given configuration of \glspl{mtd}. The attacker is in a multi-objective optimization situation because he needs to minimize his cost and attack time. Moreover, the cost and the time are a density function given a strategy for the attacker because the success of the attacks and defenses is stochastic.
}

\section{\glsfmtlong{amg}}\label{sec:AMG}

This section presents the \gls{amg}, our extension of AT~\cite{mauw2005foundations} for \glspl{mtd}.
\shrinkalt{\glspl{mtd} are a type of defense that forces the attacker to redo attacks because the system state has changed. Consequently, we need a model that considers temporal aspects.
In order to allow various attack paths, we will use a rooted \gls{dag} that is more general than a \shrinkalt{tree.}{tree. We will allow conjunction and disjunction refinements for subgoals.}
\shrinkalt{\begin{figure}[t]
    \centering
    \begin{tikzpicture}[
            ->,
            >=stealth',
            shorten >=1pt,
            auto,
            sloped,
            semithick,
            xscale=0.78,
            yscale=0.7
        ]
        
        \node[goalUseCase, label=300:{$\lor$}] at (0, 0) (g_0) {Fake electricity\\consumption report ($\AMGroot$)};
        \node[goalUseCase, label=268:{$\land$}] at (-5, -1) (g_tc) {Temper with\\communication\\($g_{tc}$)};
        \node[goalUseCase, label=below:{$\land$}] at (-0.5, -2) (g_th) {Tamper with device's\\hardware ($g_{th}$)};
        \node[goalUseCase, label=-70:{$\land$}] at (5, -1) (g_ts) {Temper with device's\\software ($g_{ts}$)};
        \node[attackUseCase, label={}] at (-6, -3.5) (a_ad) {Alter\\data\\($a_{ad}$)};
        \node[defenseUseCase, label={}] at (-3, -2.1) (d_dk) {Dynamic\\key ($d_{dk}$)};
        \node[attackUseCase, label={}] at (-4.2, -3.5) (a_ic) {Intercept\\connection\\($a_{ic}$)};
        \node[attackUseCase, label={}] at (1, -3.5) (a_p) {Plug into\\device ($a_{p}$)};
        \node[goalUseCase, label=below:{}] at (-2, -3.5) (g_up) {Understand\\protocol ($g_{up}$)};
        \node[attackUseCase] at (-2.7,-6) (a_sp) {study\\protocol\\($a_{sp}$)};
        \node[defenseUseCase, label={}] at (-0.5, -5.5) (d_cp) {Change\\protocol\\($d_{cp}$)};
        \node[goalUseCase, label=260:{$\lor$}] at (4, -3) (g_ac) {Get admin's\\credentials\\($g_{ac}$)};
        \node[attackUseCase] at (1.5,-5.5) (a_bf) {Brute\\force\\($a_{bf}$)};
        \node[attackUseCase] at (3.5,-6) (a_ss) {Shoulder\\surfing\\($a_{ss}$)};
        \node[defenseUseCase, label={}] at (5.5, -6) (d_cc) {Change\\credentials\\($d_{cc}$)};
        \node[goalUseCase, label=below:{}] at (7.7, -3) (g_hs) {Hack\\system\\($g_{hs}$)};
        \node[attackUseCase] at (7.7,-6) (a_fue) {Find \& use\\exploit ($a_{fue}$)};
        \node[defenseUseCase, label={}] at (6.3, -4) (d_dsr) {Dynamic\\software\\rotation\\($d_{dsr}$)};

        \path (g_0) edge (g_tc) edge (g_th) edge (g_ts);
        \path (g_tc) edge (a_ad) edge (a_ic);
        \path (g_th) edge (g_up) edge (a_p);
        \path (g_up) edge (a_sp);
        \path (g_ts) edge (g_ac) edge (g_hs);
        \path (g_ac) edge (a_bf) edge (a_ss);
        \path (g_th) edge (g_ac);
        \path (g_hs) edge (a_fue);
        \path (d_dk) edge [-, dashed, bend right] (a_ad);
        \path (d_cp) edge [-, dashed] (a_sp) edge [-, dashed] (g_up);
        \path (d_cc) edge [-, dashed, bend right] (a_bf) edge [-, dashed, bend right] (a_ss) edge [-, dashed] (g_ac);
        \path (d_dsr) edge [-, dashed] (a_fue);
        
        \matrix [draw, below left, column sep=3pt, row sep=0pt, inner sep=2pt, ampersand replacement=\&, column 1/.style={anchor=base}, column 2/.style={anchor=base west}] at (-3.6, -4.4) {
        \draw[line width=0.5pt, >=stealth', semithick](-0.2, 0) -- (0.2, 0); \& \node[right, scale=0.8, align=left] {edge};\\
        \draw[dashed,line width=0.5pt, semithick, -](-0.2, 0) -- (0.2, 0); \& \node[right, scale=0.8, align=left] {associated\\[-0.5em]defense};\\
        \node [goalUseCase, minimum size=1em, scale=1] {\phantom{a}}; \& \node[scale=0.8]{subgoal};\\
        \node [attackUseCase, minimum size=0.8em, scale=0.8] {\phantom{a}}; \& \node[scale=0.8]{atomic attack}; \\
        \node [defenseUseCase, radius=3pt, minimum size=1em, scale=1] {\phantom{a}}; \& \node[scale=0.8]{\glsfmtshort{mtd}}; \\
        };
                
    \end{tikzpicture}
    \caption[Example of a \glsfmtshort{dag}-based structure with \glsfmtshortpl{mtd} for an electricity meter.]{Example of a \gls{dag}-based structure with \glspl{mtd} for an electricity meter\shrinkalt{.}{ extending the \gls{at} in \figurename~\ref{fig:example-at}. Notice the edge between $g_{th}$ and $g_{ac}$ making the structure a \gls{dag} and no longer a tree.}
    Refinements are below subgoals or omitted for single child subgoals.
    }
    \label{fig:example-adt}
\end{figure}
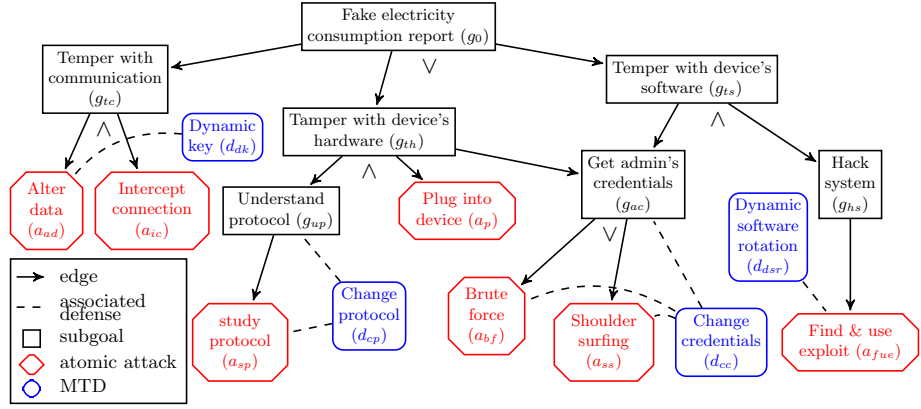We define the \gls{amg} that is a \gls{dag}-based structure with \glspl{mtd} and time, cost, and probabilistic attributes. \figurename~\ref{fig:example-adt} is an example of \gls{amg}.
\shrinkalt{}{\begin{table}[t]
    \centering
    \caption[Attributes of the \glsfmtshort{amg}.]{Attributes of the \gls{amg} $\AMG = \AMGDef$.}
    \begin{tabular}{|c|c|c|c|}
        \hline
        & Attribute & Domain & Definition \\
        \hline
        $a\in \AtomicAttacks$ & $t_a$ & $\mathbb{N}$ & completion time \\
        & $p_a$ & $[0,1]$ & success probability at completion \\
        & $\cost_a$ & $\mathbb{N}$ & activation cost \\
        & $\cost'_a$ & $\mathbb{N}$ & cost rate \\
        \hline
        $d\in \Defenses$ & $t_d$ & $\mathbb{N}$ & activation period \\
        & $p_d$ & $[0,1]$ & success probability at activation \\
        & $\defended{}{d}$ & $\Nodes$ & nodes defended by $d$\\
        \hline
        $g \in \Goals$ & $\operation{g}$ & $\{\land, \lor\}$ & refinement \\
        \hline
    \end{tabular}
    \label{tab:attributes}
\end{table}}
\begin{definition}[Attack Moving Target Defense \gls{dag}]
We define an \gls{amg} as a tuple $\AMG = \AMGDef$, s.t. $\Nodes$ is a set of nodes, the pair $\brac{\Nodes, \Edges}$ forms a rooted \gls{dag} with root $\AMGroot\in \Nodes$, and edges $\Edges \subseteq \Nodes \times \Nodes$. Given the \gls{amg}, the set $\AtomicAttacks$ refers to the leaves of $\brac{\Nodes, \Edges}$ and its elements are called \textit{atomic attacks}, and $\Goals$ refers to the inner-nodes of $\brac{\Nodes, \Edges}$ and its elements are called \textit{subgoals}.
The list $\operation{} = \{\operation{g}\}_{g\in \Goals}$ assigns a \textit{refinement} $\operation{g} \in \{\land, \lor\}$ for each subgoal $g\in \Goals$.
The set $\Defenses$ is the set of \textit{\glspl{mtd}}.
The root $\AMGroot$ is called the \textit{main goal} of the attack. The lists $\cost = (\cost_a)_{a\in \AtomicAttacks}$ and $\propcost = (\propcost_a)_{a\in \AtomicAttacks}$ give an \textit{activation cost} $\cost_a\in \mathbb{N}$ and a \textit{proportional cost} $\propcost_a \in \mathbb{N}$ (cost of each unit of time that the atomic attack $a$ is activated) to each atomic attack $a\in \AtomicAttacks$. The list $\timefunc = (\timefunc_n)_{n\in \AtomicAttacks \cup \Defenses}$ assigns a \textit{completion time} $\timefunc_a \in \mathbb N$ for each atomic attack $a\in \AtomicAttacks$ and an \textit{activation period} $\timefunc_d \in \mathbb N$ for each \gls{mtd} $d \in \Defenses$. The list $\proba = (\proba_n)_{n\in \AtomicAttacks \cup \Defenses}$  gives a \textit{success probability at completion} $\proba_a \in [0,1]$ for each atomic attack $a\in \AtomicAttacks$ and a \textit{success probability at activation} $\proba_d\in [0,1]$ for each \gls{mtd} $d \in \Defenses$.
Finally, the list $\defended{}{} = (\defended{}{d})_{d\in \Defenses}$ assigns the set $\defended{}{d} \subseteq \Nodes$ of nodes that a \gls{mtd} $d$ protects.
\shrinkalt{}{The attributes are summarized in \tablename~\ref{tab:attributes}.}
\end{definition}

For the rest of this paper, when a general \gls{amg} $\AMG$ is given, we assume $\AMG = \AMGDef$, and $\AtomicAttacks$ (resp. $\Goals$) is the set of leaves (resp. inner-nodes) of $\brac{\Nodes, \Edges}$. Given an \gls{amg} $\AMGgeneral$, we denote $\children{\AMG}{n}$ as the set of children of a node $n \in \Nodes$ or simply $\children{}{n}$ when $\AMG$ is evident in the context.
For a defense $n\in \Nodes$, we will use $\defense{n} = \{d\in \Defenses \mid n \in \defended{}{d}\}$ the set of \glspl{mtd} that defend $n$. For a rooted \gls{dag} $\brac{\Nodes, \Edges}$, we call a directed path a sequence of nodes $n_1, \dots, n_k \in \Nodes$ such that the edges $(n_1, n_2), \dots, (n_{k-1}, n_k)$ are in $\Edges$. As $\brac{\Nodes, \Edges}$ is rooted, a directed path exists from $\AMGroot$ to any nodes.

\mysubsection{Informal semantics} Every node has the state \textit{completed} or \textit{uncompleted}. In addition, every atomic attack and \gls{mtd} has the state \textit{activated} or \textit{deactivated}. 
The \gls{amg} is meant to be interpreted in a timed environment. 
Indeed, an atomic attack $a\in\AtomicAttacks$ has a \textit{completion clock} $x_a$ initialized when the attack gets activated. When its clock reaches the completion time ($x_a=\timefunc_a$), the attack can succeed (resp. fail) with probability $\proba_a$ (resp. $1-\proba_a$). If the attack succeeds, the atomic attack $a$ is completed. An \gls{mtd} $d \in \Defenses$ is periodically activated when the clock $x_d$ reaches $\timefunc_d$, and the defense can succeed (resp. fail) with probability $\proba_d$ (resp. $1-\proba_d$).
At any time, the system progresses with two sequential steps:
\begin{itemize}
    \item \emph{Evaluation of the defenses.} For each \gls{mtd}, say $d\in \Defenses$, that gets activated ($x_d = \timefunc_d$) and succeeds, we modify the system's state. Every defended node $n \in \defended{}{d}$, gets uncompleted. If $n$ is an atomic attack, it gets deactivated, and its completion clock gets back to $0$.
    \item \emph{Evaluation of the attack progression.} Starting from the deeper nodes (in a \textit{bottom-up} fashion), every subgoal $g \in \Goals$ is completed if its children's conjunction (if $\operation{g}$ is $\land$) or disjunction (if $\operation{g}$ is $\lor$) is completed. We propagate the completion from the leaves to the root of the \gls{dag}.
\end{itemize}
In addition, two asynchronous events can be triggered at any time:
\begin{itemize}
    \item \emph{Activation of an atomic attack.} The attacker can activate atomic attacks that are not activated yet. Their completion clocks are initialized to $0$.
    \item \emph{Completion of an atomic attack.} Every activated atomic attack, say $a$, such that ($x_a = \timefunc_a$) gets deactivated. The atomic attack is completed with probability $\proba_a$, or stays uncompleted with probability $1-\proba_a$.
\end{itemize}

If these asynchronous events happen simultaneously between them, or/and with a sequential step, the precedence is given with uniform probability.
Notice the difference between $d_{dsr}$ and $d_{cp}$ in \figurename~\ref{fig:example-adt}: when $a_{fue}$ is completed once, its parent $g_{hs}$ gets completed forever, while the parent $g_{up}$ of $a_{sp}$ is defended by $d_{cp}$.
Our model assumes that atomic attack probabilities of success are mutually independent and that several activations of the same atomic attack succeed with an independent and identically distributed probability. Moreover, as opposed to~\cite{kumar2015quantitative,GHL+16}, the attacker can activate as many different atomic attacks as he wants at the same time. Nevertheless, the attacker is memoryless: he knows only the current system state. As a result, he cannot count how many times an atomic attack was activated or the previously completed atomic attack sequence. }{\subsection{Model}\label{sec:model}}
\shrinkalt{}{\subsection{Expressivity}
We allow a node to have several \glspl{mtd} and an \gls{mtd} to defend several nodes because we believe that is happening in real life. Moreover, our model lets us control where the attack has to be restarted when a defense succeeds: if a subgoal $g$ is the conjunction $a_1\land a_2$, we can express some subtle behavior of an \gls{mtd} $d$. For instance, $d$ can turn $g$, $a_1$, and $a_2$ incomplete, and deactivate $a_1$ and $a_2$ (\figurename~\ref{fig:keep-0}). It can also turn only $g$, and $a_1$ incomplete, and deactivate only $a_1$ (\figurename~\ref{fig:keep-a1}).
It can also turn incomplete and deactivate $a_1$ and $a_2$ but keep $g$ completed if it was completed once (\figurename~\ref{fig:keep-g}).

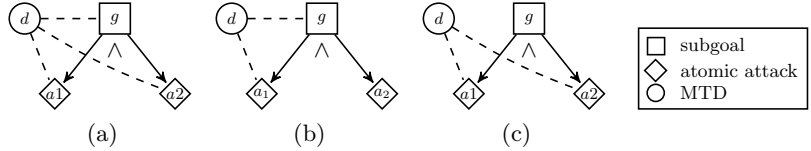
\begin{figure}[t]
\centering
    \subfloat[]{
    \begin{tikzpicture}[
            ->,
            >=stealth',
            shorten >=1pt,
            auto,
            semithick,
            xscale=0.4,
            yscale=0.5
        ]
        
        \node[goal, label=below:{$\land$}] at (0, 0) (g) {$g$};
        \node[attack, label={}] at (-2, -2) (a1) {$a1$};
        \node[attack, label={}] at (2, -2) (a2) {$a2$};
        \node[defense, label={}] at (-3, 0) (d) {$d$};
        
        \path (g) edge (a1) edge (a2);
        \path (d) edge [dashed, -] (g) edge [dashed, -] (a1) edge [dashed, -, bend right=5] (a2);
    \end{tikzpicture}
    \label{fig:keep-0}
    }
    \subfloat[]{
    \begin{tikzpicture}[
            ->,
            >=stealth',
            shorten >=1pt,
            auto,
            semithick,
            xscale=0.4,
            yscale=0.5
        ]
        
        \node[goal, label=below:{$\land$}] at (10, 1) (g) {$g$};
        \node[attack, label={}] at (8, -1) (a1) {$a_1$};
        \node[attack, label={}] at (12, -1) (a2) {$a_2$};
        \node[defense, label={}] at (7, 1) (d) {$d$};
        
        \path (g) edge (a2) edge (a1);
        \path (d) edge [dashed, -] (g) edge [dashed, -] (a1);
    \end{tikzpicture}
    \label{fig:keep-a1}
    }
    \subfloat[]{
    \begin{tikzpicture}[
            ->,
            >=stealth',
            shorten >=1pt,
            auto,
            semithick,
            xscale=0.4,
            yscale=0.5
        ]
        
        \node[goal, label=below:{$\land$}] at (0, 0) (g) {$g$};
        \node[attack, label={}] at (-2, -2) (a1) {$a1$};
        \node[attack, label={}] at (2, -2) (a2) {$a2$};
        \node[defense, label={}] at (-3, 0) (d) {$d$};
        
        \path (g) edge (a1) edge (a2);
        \path (d) edge [dashed, -] (a1) edge [dashed, -, bend right=5] (a2);
    \end{tikzpicture}
    \label{fig:keep-g}
    }
    \captionsetup[subfigure]{labelformat=empty}
    \subfloat[]{
    \begin{tikzpicture}[
            ->,
            >=stealth',
            shorten >=1pt,
            auto,
            semithick,
            xscale=0.4,
            yscale=0.5
        ]
        \matrix [draw, below left, inner sep=2pt] at (0, 0) {
        \node [goal, minimum size=1em, label={[scale=0.8, label distance=5pt]right:{subgoal}}] at (0.05, 0) {}; \\
        \node [attack, minimum size=1.3em, label={[scale=0.8, label distance=4pt]right:{atomic attack}}] {}; \\
        \node [defense, minimum size=1em, label={[scale=0.8, label distance=5pt]right:{\glsfmtshort{mtd}}}] at (0.01, 0) {}; \\
        };

    \end{tikzpicture}
    \label{fig:legend}
    }
    
    \cprotect\caption[Three examples of the expressivity of \glsfmtshort{amg}.]{Three examples of the expressivity of \gls{amg}. In~\subref{fig:keep-0}, at each successful activation of $d$, the subgoal $g$, and the atomic attacks $a_1$ and $a_2$ are uncompleted (if they were completed). Moreover, $a_1$ and $a_2$ are deactivated (if they were activated), and their completion clocks, say $x_{a_1}$ and $x_{a_2}$, are set back to $0$. If we remove $a_2$ from $\defended{\AMG}{d}$ (case~\subref{fig:keep-a1}), the completion status, the activation status, and the completion clock of $a_2$ are not affected by $d$. If we remove $g$ from $\defended{\AMG}{g}$ (case~\subref{fig:keep-g}), then $g$ acts as a backup point (later called a \textit{checkpoint}) that does not get uncompleted if it is completed once.}
    \label{fig:expressivity}
\end{figure}}

\section{Construction of the \glsfmtshort{ptmdp} for \glsfmtshort{amg}}\label{sec:construction-PTMDP}
\mysubsection{Computing attack time, cost, and success probability}
We want to build a \gls{ptmdp} that represents the \gls{amg} because we can exploit this structure to find some near-optimal strategies for specific objectives. Ideally, we have a $2$\textonehalf-player game with the defender, the attacker, and the stochastic environment (counting for~\textonehalf). The defender plays first by choosing the defense periods, and the attacker plays the rest of the game, trying to reach the root of the \gls{amg}. Nevertheless, we simplify the problem by assuming the defender has already chosen a list of defense periods $(t_d)_{d_\in \Defenses}$. The resulting game is a $1$\textonehalf-player game where the attacker plays against the environment.
This simplification has two reasons. First, the attacker effectively plays only against the environment. After all, the defender plays first and only once. Second, the choice of \gls{mtd} periods is not countable, so it is hard to express it in a finite game structure.
We can then compute, in the \gls{ptmdp}, the reachability of the main goal under time or cost constraints and compute strategies for the attacker with minimal expected time or expected cost. With this information, we can evaluate how good is a set of activation periods for the different \glspl{mtd} of the \gls{amg}.
\mysubsection{Representation of the system state}
The \textit{system state} is given by the set of activated atomic attacks, the set of completed nodes, and the completion clocks of the atomic attacks and \glspl{mtd}. Notice that an atomic attack can be activated and completed simultaneously, and the clocks will be in the \gls{ptmdp} clock set. Let $\Omega = 2^\AtomicAttacks \times 2^{\Nodes}$ be the set of possible states.

The space $\Omega$ contains some useless states. When a node without \gls{mtd} is completed, it stays completed for the rest of the analysis. Thus, its descendant node completion and activation status can be unnecessary. Given a set of completed nodes $C$, we call a \textit{checkpoint} a completed node $n$ without defense i.e. $n \in C$ s.t. $\defense{n} = \emptyset$. We will define an equivalence relation $\sim$ over $\Omega$ to reduce the state space. Intuitively, we want two equivalent states for $\sim$ to be naturally equivalent for the attacker in terms of future costs, time, success probability, and possible actions for an optimal attacker (implying that he will not use unnecessary costs to start or continue an atomic attack that leads only to checkpoints).

We introduce the \textit{propagation operator} that computes the set of effectively completed nodes given an initial set of completed nodes (\cf \figurename~\ref{fig:state}). It is defined through a fixed point of a function adding the subgoals $g$ that have all its children completed (if $\operation{g}$ is $\land$) or at least one child completed (if $\operation{g}$ is $\lor$).

\begin{definition}[Propagation operator]
\label{def:propagate}
Let $\AMGgeneral$ be an \gls{amg}.
We define the \textit{propagation operator} $\propagateDefADT{\AMG}: 2^{\Nodes} \mapsto 2^{\Nodes}$ as follows. For $C\subseteq \Nodes$, $\propagateADT{\AMG}{C}$ is the least fixed point greater or equal to $C$ (for $\subseteq$) of $\fixedPoint{}: 2^\Nodes \mapsto 2^\Nodes$ where
\begin{align*}
    \fixedPoint{}(N) = N \cup \left\{g\in \Goals \mid 
    (\exists n\in \children{}{g}, n\in N) \operation{g}
    (\forall n\in \children{}{g}, n\in N\right)\}
\end{align*}
\shrinkalt{We extend the definition of $\propagateDefADT{\AMG}$ on $(A,C) \in \Omega$ by propagating only the completed nodes: $\propagateADT{\AMG}{A,C} = (A, \propagateADT{\AMG}{C})$.}{We extend the definition of $\propagateDefADT{\AMG}$ on $\Omega = 2^\AtomicAttacks \times 2^{\Nodes}$ with $\propagateDefADT{\AMG}: \Omega \mapsto \Omega$, such that for $(A, C) \in \Omega$, $\propagateADT{\AMG}{A,C} = (A, \propagateADT{\AMG}{C})$, that is, we propagate only the completed nodes.}
\end{definition}
As $\fixedPoint{}$ is an increasing function, the fixed point is well defined and is the composition $\fixedPoint{}^k(C)$ where $k \in \mathbb N$ verifies $\fixedPoint{}^k(C) = \fixedPoint{}^{k+1}(C)$. We will omit the \gls{amg} and simply write $\propagate{\AMG}{C}$ or $\propagate{\AMG}{A, C}$ when $\AMG$ is clear in the context. 

\shrinkalt{}{
We show three intuitive properties: the propagation operator  $\propagateDef{\AMG}$ contains the initially completed nodes, increases for $\subseteq$, and is a projection.
\begin{restatable}{proposition}{propagateprop}\label{prop:propagate}
For an \gls{amg} $\AMG$, and $C,B\subseteq \Nodes$, we have
\begin{inparaenum}[(i)]
    \item \label{item:propagate-contains}$C \subseteq \propagate{\AMG}{C}$, and
    \item \label{item:propagate-increase}$\propagateDef{}$ is increasing for $\subseteq$
    \item \label{item:propagate-projection}$\propagateDef{}$ is a projection, i.e., the composition $\propagateDef{} \circ \propagateDef{} = \propagateDef{}$.
\end{inparaenum}
\end{restatable}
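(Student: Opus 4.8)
The plan is to read all three statements off the least-fixed-point description of $\propagateDef{\AMG}$ in Definition~\ref{def:propagate}, together with the monotonicity of $\fixedPoint{}$ already observed right after it. No step should present a real obstacle; the only point worth spelling out once is that finiteness of $\Nodes$ makes the iterates $\fixedPoint{}^k(C)$ stabilise, and that monotonicity of $\fixedPoint{}$ then makes the stable value the \emph{least} fixed point of $\fixedPoint{}$ above $C$: any fixed point $F$ with $F \supseteq C$ satisfies $F = \fixedPoint{}^k(F) \supseteq \fixedPoint{}^k(C)$. So I would state this characterisation first and use it throughout.

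For (i) I would just note that $\fixedPoint{}(N) = N \cup \{\dots\} \supseteq N$ for every $N \subseteq \Nodes$; iterating gives the increasing chain $C \subseteq \fixedPoint{}(C) \subseteq \fixedPoint{}^2(C) \subseteq \cdots$, and since $\propagate{\AMG}{C} = \fixedPoint{}^k(C)$ for $k$ large enough, $C \subseteq \propagate{\AMG}{C}$. This is of course also immediate from the phrase ``least fixed point greater or equal to $C$'' in the definition.

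For (ii), assume $C \subseteq B$. By (i), $\propagate{\AMG}{B}$ is a fixed point of $\fixedPoint{}$ with $\propagate{\AMG}{B} \supseteq B \supseteq C$, so it lies among the fixed points of $\fixedPoint{}$ above $C$; since $\propagate{\AMG}{C}$ is by definition the least such fixed point, $\propagate{\AMG}{C} \subseteq \propagate{\AMG}{B}$. For (iii), write $D := \propagate{\AMG}{C}$: then $D$ is a fixed point of $\fixedPoint{}$ with $D \supseteq D$, hence trivially the least fixed point of $\fixedPoint{}$ above $D$, i.e.\ $\propagate{\AMG}{D} = D$; unfolding $D$ this is $\propagate{\AMG}{\propagate{\AMG}{C}} = \propagate{\AMG}{C}$, that is $\propagateDef{} \circ \propagateDef{} = \propagateDef{}$. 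Finally, since the extension of $\propagateDef{\AMG}$ to $\Omega$ acts as the identity on the $2^{\AtomicAttacks}$ component, all three properties carry over to $\Omega$ without change.
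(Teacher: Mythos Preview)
Your proof is correct and follows essentially the same approach as the paper: all three items are read off the least-fixed-point definition and the inflationary, monotone nature of $\fixedPoint{}$. The only cosmetic difference is in (ii), where the paper argues via the iterates (taking $k=\max(k_1,k_2)$ and using monotonicity of $\fixedPoint{}$) while you invoke the ``least'' clause directly; both are standard and equivalent.
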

\ifdefined\ALLPROOFS\begin{proof}
For~(\ref{item:propagate-contains}), we need to notice that the $\fixedPoint{}$ from Definition~\ref{def:propagate} always contains its argument, and we use the fact that $\propagate{\AMG}{C} = \fixedPoint{}^k(C)$ for all $C \subseteq \Nodes$ and some $k\in \mathbb N$.

For~(\ref{item:propagate-increase}), we use the fact that $\fixedPoint{}$ is increasing and reaches a fixed point. Let $C_1\subseteq C_2 \subseteq \Nodes$. We can take composistion indices $k_1, k_2\in \mathbb N$ such that $\propagate{\AMG}{C_1} = \fixedPoint{}^{k_1}(C_1)$ and $\propagate{\AMG}{C_2} = \fixedPoint{}^{k_2}(C_2)$. Now,
\begin{align*}
    \propagate{\AMG}{C_1} &= \fixedPoint{}^{k_1}(C_1)\\
    &=\fixedPoint{}^{\max(k_1, k_2)}(C_1)\\
    &\subseteq \fixedPoint{}^{\max(k_1, k_2)}(C_2)\\
    &= \propagate{\AMG}{C_2}
\end{align*}

For~(\ref{item:propagate-projection}), by the Definition~\ref{def:propagate}, $\propagateDef{}$ is a fixed point.
\end{proof}\else\begin{sketchproof}
The function $\fixedPoint{}$ from Definition~\ref{def:propagate} always contains its argument, is increasing, and $\propagateDef{}$ is a fixed point.
\end{sketchproof}\fi
}

Given a set of completed nodes $C$, we call \textit{completed descendants} the set of nodes that have a completed node within all sequences of nodes forming a directed path from $\AMGroot$ (\cf \figurename~\ref{fig:state}).

\begin{definition}[Completed descendants]
\label{def:comp-desc}
Let $\AMGgeneral$ be an \gls{amg}, and $C\subseteq \Nodes$ a set of completed nodes. We define the completed descendants of $C$ as
\begin{align*}
    \subTreeAMG{\AMG}{C} = \{&n \in \Nodes \mid \forall k \in \mathbb N, \forall g_{1}, \dots, g_{k} \in \Goals,
    (g_{1} = \AMGroot \land \forall j\in \{1,\dots, k-1\},\\
    &(g_{j}, g_{{j+1}}) \in \Edges
    \land (g_{k}, n) \in \Edges) \Rightarrow \exists j\in \{1,\dots, k\}, g_{j} \in C\}
\end{align*}
\end{definition}
When $\AMG$ is evident in the context, we will write $\subTree{\AMG}{C}$.

\shrinkalt{}{
We show that the completed descendants are increasing for $\subseteq$ and that the completed descendants of $C$ are exactly the completed descendants of $C\setminus\subTree{\AMG}{C}$.
\begin{restatable}{proposition}{subtreeprop}\label{prop:subtree-inc}
For an \gls{amg} $\AMGgeneral$, $\subTreeDef{\AMG}$ is increasing.
\end{restatable}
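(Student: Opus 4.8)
The goal is to prove that $\subTreeDef{\AMG}$ is increasing, i.e., for $C \subseteq B \subseteq \Nodes$ we have $\subTree{\AMG}{C} \subseteq \subTree{\AMG}{B}$. The plan is to unfold the definition of completed descendants directly and argue monotonicity logically. Take any node $n \in \subTreeAMG{\AMG}{C}$. By definition, for every $k \in \mathbb{N}$ and every sequence $g_1, \dots, g_k \in \Goals$ with $g_1 = \AMGroot$, consecutive pairs $(g_j, g_{j+1}) \in \Edges$, and $(g_k, n) \in \Edges$, there exists $j \in \{1, \dots, k\}$ with $g_j \in C$. Since $C \subseteq B$, that same $g_j$ satisfies $g_j \in B$. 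Hence the defining condition for $n \in \subTreeAMG{\AMG}{B}$ holds, so $n \in \subTreeAMG{\AMG}{B}$.

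Concretely, the key steps in order are: (i) fix $C \subseteq B$ and an arbitrary $n \in \subTree{\AMG}{C}$; (ii) fix an arbitrary path-witness $k, g_1, \dots, g_k$ satisfying the antecedent of the implication in Definition~\ref{def:comp-desc}; (iii) apply the membership of $n$ in $\subTree{\AMG}{C}$ to obtain an index $j$ with $g_j \in C$; (iv) use $C \subseteq B$ to conclude $g_j \in B$, which discharges the implication; (v) since the path-witness was arbitrary, conclude $n \in \subTree{\AMG}{B}$; (vi) since $n$ was arbitrary, conclude $\subTree{\AMG}{C} \subseteq \subTree{\AMG}{B}$.

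There is essentially no obstacle here: the statement is a purely set-theoretic monotonicity fact that follows by unwinding definitions, because enlarging the set $C$ only makes the existential ``$\exists j, g_j \in C$'' easier to satisfy while the universally-quantified antecedent over paths is untouched. The only thing to be careful about is the edge case where the antecedent is vacuously true (e.g., when $n$ has no incoming edge from any goal, or is unreachable from $\AMGroot$ through goal-paths): in that situation $n$ lies in both $\subTree{\AMG}{C}$ and $\subTree{\AMG}{B}$ trivially, so the inclusion still holds. Thus the proof is a short one-line logical argument and requires no appeal to the fixed-point machinery of Definition~\ref{def:propagate}.
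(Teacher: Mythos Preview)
Your proof is correct and follows exactly the same approach as the paper's own proof: both argue directly from Definition~\ref{def:comp-desc} that if some $g_j$ on every path lies in the smaller set then it also lies in the larger one. The paper's proof is just a terser one-line version of what you wrote.
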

\ifdefined\ALLPROOFS\begin{proof}
Let $A\subseteq B \subseteq \Nodes$ if there is $j \in \{1, \dots, k\}$ s.t. $g_{j} \in A$, then $g_{j} \in B$. This proves $\subTree{\AMG}{A} \subseteq \subTree{\AMG}{B}$.
\end{proof}\else\begin{sketchproof}
Follows from Definition~\ref{def:comp-desc}.
\end{sketchproof}\fi

\ifdefined\ALLPROOFS
We introduce Lemma~\ref{lemma:subtree-subtree}, showing that some nodes can be ignored in the completed descendants.
\begin{restatable}{lemma}{subtreelemma}\label{lemma:subtree-subtree}
For an AMG $\AMGgeneral$, and $C\subseteq \Nodes$, we have $\subTree{\AMG}{C\setminus \subTree{\AMG}{C}} = \subTree{\AMG}{C}$.
\end{restatable}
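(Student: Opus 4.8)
The plan is to prove the set equality by the two inclusions, with essentially all of the work concentrated on one of them. The easy inclusion is $\subTree{\AMG}{C\setminus\subTree{\AMG}{C}}\subseteq\subTree{\AMG}{C}$: since $C\setminus\subTree{\AMG}{C}\subseteq C$, this follows immediately from the monotonicity of $\subTreeDef{\AMG}$ recorded in Proposition~\ref{prop:subtree-inc}.

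For the reverse inclusion, write $D := C\setminus\subTree{\AMG}{C}$ and fix $n\in\subTree{\AMG}{C}$; the goal is to show that every admissible sequence $g_1=\AMGroot,\dots,g_k\in\Goals$ with consecutive edges and $(g_k,n)\in\Edges$ already meets $D$, which by Definition~\ref{def:comp-desc} gives $n\in\subTree{\AMG}{D}$. The idea is a ``first completed ancestor'' argument: on such a path, since $n\in\subTree{\AMG}{C}$ there is a least index $j$ with $g_j\in C$, and I claim $g_j\notin\subTree{\AMG}{C}$, hence $g_j\in D$. To justify the claim, observe that the prefix $g_1,\dots,g_{j-1}$ is itself an admissible root path in $\Goals$ ending on the edge $(g_{j-1},g_j)\in\Edges$, and by minimality of $j$ none of $g_1,\dots,g_{j-1}$ lies in $C$; so not all admissible root-to-$g_j$ paths meet $C$, which is exactly the statement $g_j\notin\subTree{\AMG}{C}$. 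Since the path we started from was arbitrary, every such path meets $D$, as wanted.

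The step I expect to be the main obstacle — and the only real subtlety — is the degenerate subcase $j=1$, where the first completed node on the path is the root itself, so the prefix argument becomes empty. Here one instead uses that $\AMGroot$ is never a completed descendant (there is no admissible sequence ending on an edge into $\AMGroot$, the DAG being acyclic), so $\AMGroot\notin\subTree{\AMG}{C}$ and hence $\AMGroot\in D$ whenever $\AMGroot\in C$; the case $n=\AMGroot$ is likewise dispatched at the outset as vacuous. Beyond this corner, the remaining ingredients are routine: every node occurring on an admissible root path is an inner node and so lies in $\Goals$, matching the quantifier in Definition~\ref{def:comp-desc}, and the monotonicity of $\subTreeDef{\AMG}$ from Proposition~\ref{prop:subtree-inc} is all that the easy inclusion needs.
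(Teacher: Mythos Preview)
Your proof is correct and follows essentially the same approach as the paper's: both directions are handled identically, with the nontrivial inclusion established by taking, on an arbitrary root-to-parent path for $n\in\subTree{\AMG}{C}$, the first index $j$ with $g_j\in C$ and arguing via the prefix $g_1,\dots,g_{j-1}$ that $g_j\notin\subTree{\AMG}{C}$. The only cosmetic difference is that the paper phrases the key step as a contradiction (``if $g_j\in\subTree{\AMG}{C}$ then some earlier $g_l\in C$'') while you argue it directly, and you separate out the $j=1$ corner case that the paper folds into its contradiction.
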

\begin{proof}
By Proposition~\ref{prop:subtree-inc} we have $\subTree{\AMG}{C\setminus \subTree{\AMG}{C}} \subseteq \subTree{\AMG}{C}$, now let $n \in \subTree{\AMG}{C}$ we want to prove $n \in \subTree{\AMG}{C\setminus \subTree{\AMG}{C}}$. By the Definition~\ref{def:comp-desc}, for any directed path $g_{1}, \dots, g_{k} \in \Nodes$ from $\AMGroot$ to a parent of $n$ there is a $j \in \{1,\dots, k\}$ s.t. $g_{j} \in C$. For any such directed path, let $j$ be the smallest integer such that $g_{j} \in C$. If $g_{j} \in \subTree{\AMG}{C}$, then $j > 1$ as $\AMGroot \not\in \subTree{\AMG}{C}$ ($\AMGroot$ never has completed parents, it does not have parents) and the directed path $g_{1}, \dots, g_{j-1}$ exists and must contain a node in $C$, say $g_{l}$ with $l\in \{1,\dots, j-1\}$. This is a contradiction because $l < j$ and $g_{l} \in C$ while $j$ was chosen the smallest. So $g_{j} \not\in \subTree{\AMG}{C}$, and thus $g_{j} \in C\setminus \subTree{\AMG}{C}$. As a result, every directed path from $\AMGroot$ to a parent of $n$ contains an element in $C\setminus \subTree{\AMG}{C}$, i.e. $n\in \subTree{\AMG}{C\setminus \subTree{\AMG}{C}}$.
\end{proof}
\fi 
}

Let $\noDef = \{n \in \Nodes \mid \defense{n} = \emptyset\}$. We notice that the set of checkpoints in $C$ is $C\cap \noDef$. We define the \textit{pruning operator} that eliminates unnecessary nodes from the system state, considering that completed descendants of checkpoints can be removed from completed and activated nodes and that completed nodes can be removed from activated nodes (\cf \figurename~\ref{fig:state}).

\begin{definition}[Pruning operator]
Let $\AMGgeneral$ an \gls{amg} and $\Omega = 2^\AtomicAttacks \times 2^{\Nodes}$. We define the pruning operator $\pruneDefADT{\AMG} : \Omega \mapsto \Omega$ s.t. for $(A, C) \in \Omega$,
\shrinkalt{$\pruneADT{\AMG}{A, C} = (A \setminus (\subTree{\AMG}{C \cap \noDef}\cup C), C \setminus \subTree{\AMG}{C \cap \noDef})$.}{
\begin{equation*}
    \pruneADT{\AMG}{A, C} = (A \setminus (\subTree{\AMG}{C \cap \noDef}\cup C), C \setminus \subTree{\AMG}{C \cap \noDef})
\end{equation*}}
\end{definition}

\begin{figure}[t]
    \subfloat[Initial state $(A,C)$.]{
    \begin{tikzpicture}[
            ->,
            >=stealth',
            shorten >=1pt,
            auto,
            semithick,
            xscale=0.35,
            yscale=0.45
        ]
        
        \node[goal, label=above:{}] at (0, 0) (g0) {$\land$};
        \node[goal, label=below:{}, fill=cyan] at (-7, -2) (g5) {$\lor$};
        \node[attack, label={}, fill=cyan] at (-7, -4) (a7) {};
        \node[goal, label=below:{}] at (-4, -1) (g1) {$\lor$};
        \node[defense, label={}] at (-7, 0) (d1) {};
        \node[goal, label=below:{}] at (0, -2) (g2) {$\land$};
        \node[goal, label=290:{}] at (4, -1) (g3) {$\lor$};
        \node[attack, label={}, fill=cyan] at (-5, -3) (a0) {};
        \node[attack, label={}, fill=yellow] at (-3, -3) (a1) {};
        \node[attack, label={}, fill=green] at (-1, -4) (a2) {};
        \node[attack, label={}] at (1, -4) (a3) {};
        \node[attack, label={}, fill=yellow] at (3, -3) (a4) {};
        \node[attack, label={}, fill=yellow] at (5, -3) (a5) {};
        \node[goal, label=below:{}] at (7, -2) (g4) {$\land$};
        \node[defense, label={}] at (7, 0) (d2) {};
        \node[attack, label={}, fill=cyan] at (7, -4) (a6) {};
        
        \path (g0) edge (g1) edge (g2) edge (g3);
        \path (g1) edge (g5) edge (a0) edge (a1);
        \path (g2) edge (a2) edge (a3) edge (a4);
        \path (g3) edge (a4) edge (a5) edge (g4);
        \path (g4) edge (a6);
        \path (g5) edge (a7);
        \path (d1) edge [-, dashed, bend left=10] (a0) edge [-, dashed] (a1) edge [-, dashed] (g1);
        \path (d2) edge [-, dashed] (g4) edge [-, dashed, bend right=50] (a6);
    \end{tikzpicture}
    \label{fig:init}
    }\hspace{0.82cm}
    \subfloat[State $\propagate{\AMG}{A,C}$.]{
    \begin{tikzpicture}[
            ->,
            >=stealth',
            shorten >=1pt,
            auto,
            semithick,
            xscale=0.35,
            yscale=0.45
        ]
        
        \node[goal, label=above:{}] at (0, 0) (g0) {$\land$};
        \node[goal, label=below:{}, fill=cyan] at (-7, -2) (g5) {$\lor$};
        \node[attack, label={}, fill=cyan] at (-7, -4) (a7) {};
        \node[goal, label=below:{}, fill=cyan] at (-4, -1) (g1) {$\lor$};
        \node[defense, label={}] at (-7, 0) (d1) {};
        \node[goal, label=below:{}] at (0, -2) (g2) {$\land$};
        \node[goal, label=290:{}, fill=cyan] at (4, -1) (g3) {$\lor$};
        \node[attack, label={}, fill=cyan] at (-5, -3) (a0) {};
        \node[attack, label={}, fill=yellow] at (-3, -3) (a1) {};
        \node[attack, label={}, fill=green] at (-1, -4) (a2) {};
        \node[attack, label={}] at (1, -4) (a3) {};
        \node[attack, label={}, fill=yellow] at (3, -3) (a4) {};
        \node[attack, label={}, fill=yellow] at (5, -3) (a5) {};
        \node[goal, label=below:{}, fill=cyan] at (7, -2) (g4) {$\land$};
        \node[defense, label={}] at (7, 0) (d2) {};
        \node[attack, label={}, fill=cyan] at (7, -4) (a6) {};
        
        \path (g0) edge (g1) edge (g2) edge (g3);
        \path (g1) edge (g5) edge (a0) edge (a1);
        \path (g2) edge (a2) edge (a3) edge (a4);
        \path (g3) edge (a4) edge (a5) edge (g4);
        \path (g4) edge (a6);
        \path (g5) edge (a7);
        \path (d1) edge [-, dashed, bend left=10] (a0) edge [-, dashed] (a1) edge [-, dashed] (g1);
        \path (d2) edge [-, dashed] (g4) edge [-, dashed, bend right=50] (a6);
    \end{tikzpicture}
    \label{fig:propagate}
    }\\[-0.5em]
    \subfloat[State $\simpleState{A,C} = \pruneDef{\AMG}\circ \propagate{\AMG}{A,C}$.]{
    \begin{tikzpicture}[
            ->,
            >=stealth',
            shorten >=1pt,
            auto,
            semithick,
            xscale=0.35,
            yscale=0.45
        ]
        
        \node[goal, label=above:{}] at (0, 0) (g0) {$\land$};
        \node[goal, label=below:{}, line width=2pt, fill=cyan] at (-7, -2) (g5) {$\lor$};
        \node[attack, label={}] at (-7, -4) (a7) {};
        \node[goal, label=below:{}, fill=cyan] at (-4, -1) (g1) {$\lor$};
        \node[defense, label={}] at (-7, 0) (d1) {};
        \node[goal, label=below:{}] at (0, -2) (g2) {$\land$};
        \node[goal, label=290:{}, line width=2pt, fill=cyan] at (4, -1) (g3) {$\lor$};
        \node[attack, label={}, fill=cyan] at (-5, -3) (a0) {};
        \node[attack, label={}, fill=yellow] at (-3, -3) (a1) {};
        \node[attack, label={}, line width=2pt, fill=cyan] at (-1, -4) (a2) {};
        \node[attack, label={}] at (1, -4) (a3) {};
        \node[attack, label={}, fill=yellow] at (3, -3) (a4) {};
        \node[attack, label={}] at (5, -3) (a5) {};
        \node[goal, label=below:{}] at (7, -2) (g4) {$\land$};
        \node[defense, label={}] at (7, 0) (d2) {};
        \node[attack, label={}] at (7, -4) (a6) {};
        
        \path (g0) edge (g1) edge (g2) edge (g3);
        \path (g1) edge (g5) edge (a0) edge (a1);
        \path (g2) edge (a2) edge (a3) edge (a4);
        \path (g3) edge (a4) edge (a5) edge (g4);
        \path (g4) edge (a6);
        \path (g5) edge (a7);
        \path (d1) edge [-, dashed, bend left=10] (a0) edge [-, dashed] (a1) edge [-, dashed] (g1);
        \path (d2) edge [-, dashed] (g4) edge [-, dashed, bend right=50] (a6);
        
        \node[goal, dashed, rounded corners, minimum size=2.3em, minimum height=3.2em] at (-7,-3.8) (zeta) {};
        \node[goal, dashed, rounded corners, minimum size=2.3em, minimum height=3.2em] at (5,-2.8) (zeta2) {};
        \node[goal, dashed, rounded corners, minimum size=2.8em, minimum height=6.7em] at (7,-3) (zeta3) {};
    \end{tikzpicture}
    \label{fig:reduce}
    }\hspace{0.2em}
    \subfloat{
    \begin{tikzpicture}[
            ->,
            >=stealth',
            shorten >=1pt,
            auto,
            semithick,
            xscale=0.4,
            yscale=0.5
        ]
        
        \matrix [draw, below left, inner sep=2pt, column sep=1pt, row sep=1pt, ampersand replacement=\&, column 1/.style={anchor=base}, column 2/.style={anchor=base west}, column 3/.style={anchor=base}, column 4/.style={anchor=base west}, column 5/.style={anchor=base}, column 6/.style={anchor=base west}] {
        \draw[line width=0.5pt, >=stealth', semithick](-0.2, 0) -- (0.2, 0); \& \node[right, scale=0.8, align=left] {edge};
        \& \draw[dashed,line width=0.5pt, semithick, -](-0.2, 0) -- (0.2, 0); \& \node[right, scale=0.8, align=left] {associated\\[-0.5em]defense};
        \& \node [goal, draw=none, minimum size=1em] {$\land\lor$}; \& \node[scale=0.8]{refinement};\\
        \node [goal, minimum size=1em, scale=1] {\phantom{a}}; \& \node[scale=0.8]{subgoal}; \&
        \node [attack, minimum size=1.3em, scale=0.8] {\phantom{a}}; \& \node[scale=0.8]{atomic attack}; \&
        \node [defense, minimum size=1em, scale=1] {\phantom{a}}; \& \node[scale=0.8]{\glsfmtshort{mtd}}; \\
        \node [goal, fill=yellow, minimum size=1em, minimum width=1.3em, draw=none] {\phantom{a}}; \& \node[scale=0.8]{activated};
        \& \node [goal, fill=cyan, minimum size=1em, minimum width=1.3em, draw=none]{\phantom{a}}; \& \node[scale=0.8]{completed};
        \& \node [goal, fill=green, minimum size=1em, minimum width=1.3em, draw=none,]{\phantom{a}};  \& \node[right, scale=0.8, align=left]{activated\\[-0.5em]and \\[-0.5em]completed};\\
        \node [goal, line width=2pt, minimum size=1em, minimum width=1.3em] {\phantom{a}}; \& \node[scale=0.8]{checkpoint};
        \& \node [goal, dashed, minimum size=1em, minimum width=1.3em] {\phantom{a}}; \& \node[right, scale=0.8, align=left] at (0, 0.2em) {completed\\[-0.5em]descendants}; \\
        };
    \end{tikzpicture}
    }
    \cprotect\caption[Illustration of the simple state in an \glsfmtshort{amg}.]{Illustration of the simple state in an \gls{amg}.
    \figurename~\subref{fig:init} is the initial state $(A,C)$, where $A$ contains the activated atomic attacks in yellow/green, and $C$ contains the completed nodes in cyan/green. \figurename~\subref{fig:propagate} is the propagation where $C$ becomes $\propagate{\AMG}{C}$. \figurename~\subref{fig:reduce} is the pruning with the operator $\pruneDef{\AMG}$, resulting in $\simpleState{A,C}$.
    \shrinkalt{}{The completed descendants of checkpoints after propagation (nodes in $\subTree{\AMG}{\propagate{\AMG}{C}\cap\noDef}$) are removed from $A$ and $C$. The resulting state, in~\subref{fig:reduce}, is $\simpleState{A,C}$.}}
    \label{fig:state}
\end{figure}
We define the \textit{simple state} as the composition of propagation and pruning. 
\begin{definition}[Simple state]
Let $\AMGgeneral$ be an \gls{amg} and $\Omega = 2^\AtomicAttacks \times 2^{\Nodes}$.
For $(A,C)\in \Omega$, we define $\simpleStateADT{\AMG}{A,C} = \pruneDefADT{\AMG}\circ \propagateADT{\AMG}{A, C}$ the simple state of $(A,C)$.
\end{definition}
We will simply write $\pruneDef{\AMG}$ and $\simpleState{\cdot}$ when the \gls{amg} $\AMG$ is clear in the context.
For $(A,C)\in \Omega$, the simple state $\simpleState{A, C}$ contains the minimal information needed to describe the attack state. Indeed, the nodes that are descendants of checkpoints on every path from the main goal will not help achieve it, so they are not present in $\simpleState{A,C}$. Moreover, the activated atomic attacks already completed are also useless, so they are removed. \figurename~\ref{fig:state} illustrates how we get $\simpleState{A, C}$ from $(A,C)$.

\shrinkalt{}{
\ifdefined\ALLPROOFS
In Proposition~\ref{prop:simple-state-proj}, we will show that the simple state is a projection. First, we need Lemma~\ref{lemma:subtree-simple-state}.
\begin{lemma}\label{lemma:subtree-simple-state}
Let $\AMG$ be an AMG, $A_1 \in \AtomicAttacks$ and $C_1\in \Nodes$. Let $(A_2, C_2) = \simpleState{A_1, C_1}$. We have,
\begin{equation*}
    \subTree{\AMG}{\propagate{\AMG}{C_1}\cap \noDef} = \subTree{\AMG}{\propagate{\AMG}{C_2}\cap \noDef}
\end{equation*}
\end{lemma}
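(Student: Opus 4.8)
Write $C_1' := \propagate{\AMG}{C_1}$ for brevity. Since propagation acts only on the completed‑node component, $\propagate{\AMG}{A_1,C_1} = (A_1, C_1')$, and then, by the definition of the pruning operator, $C_2 = C_1' \setminus \subTree{\AMG}{C_1'\cap\noDef}$ (the exact shape of $A_2$ is irrelevant here). As $\noDef$ does not depend on the state, $\propagate{\AMG}{C_1}\cap\noDef = C_1'\cap\noDef$, so the identity to prove is exactly
\[
\subTree{\AMG}{C_1'\cap\noDef} \;=\; \subTree{\AMG}{\propagate{\AMG}{C_2}\cap\noDef}.
\]
The idea is to sandwich $\propagate{\AMG}{C_2}\cap\noDef$ between two sets having the same image under $\subTreeDef{\AMG}$, and to exploit the monotonicity of $\subTreeDef{\AMG}$ (Proposition~\ref{prop:subtree-inc}) together with Lemma~\ref{lemma:subtree-subtree}.

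First I would establish the chain of inclusions
\[
(C_1'\cap\noDef)\setminus\subTree{\AMG}{C_1'\cap\noDef}
\;=\; C_2\cap\noDef
\;\subseteq\; \propagate{\AMG}{C_2}\cap\noDef
\;\subseteq\; C_1'\cap\noDef .
\]
The first equality is just the definition of $C_2$ intersected with $\noDef$. The middle inclusion is Proposition~\ref{prop:propagate} (a set is contained in its propagation), restricted to $\noDef$. For the last inclusion, $C_2\subseteq C_1'$ together with the monotonicity of propagation and the fact that $C_1'$ is a propagation fixed point (Proposition~\ref{prop:propagate}) gives $\propagate{\AMG}{C_2}\subseteq \propagate{\AMG}{C_1'}=C_1'$; intersecting with $\noDef$ finishes it.

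Applying $\subTreeDef{\AMG}$ to this chain and using its monotonicity (Proposition~\ref{prop:subtree-inc}) yields
\[
\subTree{\AMG}{(C_1'\cap\noDef)\setminus\subTree{\AMG}{C_1'\cap\noDef}}
\;\subseteq\; \subTree{\AMG}{\propagate{\AMG}{C_2}\cap\noDef}
\;\subseteq\; \subTree{\AMG}{C_1'\cap\noDef}.
\]
By Lemma~\ref{lemma:subtree-subtree} applied with $C := C_1'\cap\noDef$, the leftmost and rightmost terms are equal, hence all three coincide; in particular the middle term equals the rightmost one, which, recalling $C_1'=\propagate{\AMG}{C_1}$, is precisely the claimed identity.

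The only non‑routine input is Lemma~\ref{lemma:subtree-subtree}: dropping the already‑covered descendants $\subTree{\AMG}{X}$ from a set $X$ does not change $\subTree{\AMG}{X}$. Everything else is monotonicity and fixed‑point bookkeeping. The one point to handle with care is reading off from the definition of $\pruneDefADT{\AMG}$ that pruning removes from $C_1'$ \emph{exactly} the set $\subTree{\AMG}{C_1'\cap\noDef}$, so that $C_2\cap\noDef$ has the explicit form used above.
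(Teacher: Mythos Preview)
Your proof is correct and follows essentially the same approach as the paper's: both arguments use the chain $(C_1'\cap\noDef)\setminus\subTree{\AMG}{C_1'\cap\noDef}\subseteq \propagate{\AMG}{C_2}\cap\noDef\subseteq C_1'\cap\noDef$, apply the monotonicity of $\subTreeDef{\AMG}$, and then invoke Lemma~\ref{lemma:subtree-subtree} to collapse the endpoints. Your presentation as a single sandwich is slightly more compact than the paper's two separate inclusions, but the underlying steps and the key inputs (Proposition~\ref{prop:propagate}, Proposition~\ref{prop:subtree-inc}, Lemma~\ref{lemma:subtree-subtree}) are identical.
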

\begin{proof}
By definition, $C_2 = \propagate{\AMG}{C_1}\setminus \subTree{\AMG}{\propagate{\AMG}{C_1}\cap \noDef}$ and by Propositions~\ref{prop:propagate}(\ref{item:propagate-increase},~\ref{item:propagate-projection}) and ~\ref{prop:subtree-inc}, we have the first inclusion:
\begin{gather*}
    \propagate{\AMG}{C_2} = \propagate{\AMG}{\propagate{\AMG}{C_1}\setminus \subTree{\AMG}{\propagate{\AMG}{C_1}\cap \noDef}} \subseteq \propagateDef{\AMG} \circ \propagate{\AMG}{C_1} = \propagate{\AMG}{C_1}\\
    \subTree{\AMG}{\propagate{\AMG}{C_2} \cap \noDef} \subseteq \subTree{\AMG}{\propagate{\AMG}{C_1} \cap \noDef}
\end{gather*}

Moreover, as $C_2 = \propagate{\AMG}{C_1}\setminus \subTree{\AMG}{\propagate{\AMG}{C_1}\cap \noDef}$ and by Propositions~\ref{prop:propagate}(\ref{item:propagate-contains}) and~\ref{prop:subtree-inc}, and Lemma~\ref{lemma:subtree-subtree}, we have the second inclusion:
\begin{gather*}
    \propagate{\AMG}{C_1}\setminus \subTree{\AMG}{\propagate{\AMG}{C_1}\cap \noDef} = C_2 \subseteq \propagate{\AMG}{C_2}\\
    \subTree{\AMG}{\propagate{\AMG}{C_1}\cap \noDef} = \subTree{\AMG}{\propagate{\AMG}{C_1}\cap \noDef\setminus \subTree{\AMG}{\propagate{\AMG}{C_1}\cap \noDef}}
    \subseteq \subTree{\AMG}{\propagate{\AMG}{C2}\cap \noDef}
\end{gather*}
\end{proof}
\fi 

We show that a simple state is its own simple state.
\begin{restatable}{proposition}{simplestateprop}
\label{prop:simple-state-proj}
For an \gls{amg} $\AMGgeneral$, $A \subseteq \AtomicAttacks$, and $C \subseteq \Nodes$, we have $\simpleState{\simpleState{A,C}} = \simpleState{A,C}$, i.e., $\simpleState{\cdot}$ is a projection.
\end{restatable}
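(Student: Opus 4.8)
The plan is to unfold the two nested applications of $\simpleState{\cdot}$ and check that the outer one acts as the identity on the inner result. Write $(A_1,C_1)=(A,C)$ and $(A_2,C_2)=\simpleState{A_1,C_1}=\prune{\AMG}{A_1,\propagate{\AMG}{C_1}}$, so that, putting $Z=\subTree{\AMG}{\propagate{\AMG}{C_1}\cap\noDef}$, we have $C_2=\propagate{\AMG}{C_1}\setminus Z$ and $A_2=A_1\setminus(\propagate{\AMG}{C_1}\cup Z)$. Applying $\simpleState{\cdot}$ once more gives $\simpleState{A_2,C_2}=\big(A_2\setminus(\propagate{\AMG}{C_2}\cup Z'),\ \propagate{\AMG}{C_2}\setminus Z'\big)$ with $Z'=\subTree{\AMG}{\propagate{\AMG}{C_2}\cap\noDef}$, and Lemma~\ref{lemma:subtree-simple-state} tells us $Z'=Z$. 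So the whole argument reduces to showing $\propagate{\AMG}{C_2}\setminus Z=C_2$ and $A_2\setminus(\propagate{\AMG}{C_2}\cup Z)=A_2$.

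I would then record the two inclusions $C_2\subseteq\propagate{\AMG}{C_2}\subseteq\propagate{\AMG}{C_1}$: the first is Proposition~\ref{prop:propagate}(\ref{item:propagate-contains}) applied to $C_2$, and the second follows from $C_2\subseteq\propagate{\AMG}{C_1}$ by Proposition~\ref{prop:propagate}(\ref{item:propagate-increase},~\ref{item:propagate-projection}) (these are precisely the chains already established in the proof of Lemma~\ref{lemma:subtree-simple-state}). With them, the completed component is pinned down by a sandwich: since $C_2=\propagate{\AMG}{C_1}\setminus Z$ is disjoint from $Z$ and contained in $\propagate{\AMG}{C_2}$, we get $C_2\subseteq\propagate{\AMG}{C_2}\setminus Z$; and since $\propagate{\AMG}{C_2}\subseteq\propagate{\AMG}{C_1}$, we get $\propagate{\AMG}{C_2}\setminus Z\subseteq\propagate{\AMG}{C_1}\setminus Z=C_2$. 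Hence $\propagate{\AMG}{C_2}\setminus Z=C_2$.

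For the activated component, note that $A_2=A_1\setminus(\propagate{\AMG}{C_1}\cup Z)$ is disjoint from $Z$ and from $\propagate{\AMG}{C_1}$, hence also from $\propagate{\AMG}{C_2}$; so $A_2\setminus(\propagate{\AMG}{C_2}\cup Z)=A_2$. Putting the two components together yields $\simpleState{A_2,C_2}=(A_2,C_2)=\simpleState{A,C}$, which is the claimed idempotence. Everything here is elementary set arithmetic once the monotonicity and idempotence of $\propagateDefADT{\AMG}$ from Proposition~\ref{prop:propagate} are in hand; the only real content is Lemma~\ref{lemma:subtree-simple-state} — the equality of the set of completed descendants of checkpoints before and after one simplification step — which I expect to be the main obstacle and which itself rests on Lemma~\ref{lemma:subtree-subtree} together with Proposition~\ref{prop:propagate}.
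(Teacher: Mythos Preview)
Your proof is correct and follows essentially the same route as the paper: invoke Lemma~\ref{lemma:subtree-simple-state} to identify the two checkpoint-descendant sets, then use the chain $C_2\subseteq\propagate{\AMG}{C_2}\subseteq\propagate{\AMG}{C_1}$ from Proposition~\ref{prop:propagate} to sandwich the completed component, and conclude for the activated component by disjointness. The only difference is cosmetic: the paper names the first simple state $(A_1,C_1)$ and the second $(A_2,C_2)$, whereas you reserve $(A_1,C_1)$ for the original pair, but the arithmetic is identical.
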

\ifdefined\ALLPROOFS\begin{proof}
Let $(A_1,C_1) = \simpleState{A,C}$ and $(A_2,C_2) = \simpleState{\simpleState{A,C}}$.

First we want to prove $C_1 = C_2$.
By definition,
\begin{align*}
    C_1 &= \propagate{\AMG}{C}\setminus \subTree{\AMG}{\propagate{\AMG}{C}\cap \noDef}\\
    C_2 &= \propagate{\AMG}{C_1}\setminus \subTree{\AMG}{\propagate{\AMG}{C_1}\cap \noDef}
\end{align*}
Moreover, by Lemma~\ref{lemma:subtree-simple-state}, we have $\subTree{\AMG}{\propagate{\AMG}{C}\cap \noDef} = \subTree{\AMG}{\propagate{\AMG}{C_1}\cap \noDef}$ so we just need to prove $C_1 \subseteq \propagate{\AMG}{C_1}$, that is immediate by Proposition~\ref{prop:propagate}(\ref{item:propagate-contains}), and $C_2\subseteq \propagate{\AMG}{C}$. This last point is true by Proposition~\ref{prop:propagate}(\ref{item:propagate-increase},~\ref{item:propagate-projection}):
\begin{equation*}
    C_2 \subseteq \propagate{\AMG}{C_1} = \propagate{\AMG}{\propagate{\AMG}{C}\setminus \subTree{\AMG}{\propagate{\AMG}{C}\cap \noDef}} \subseteq \propagateDef{\AMG} \circ \propagate{\AMG}{C} = \propagate{\AMG}{C}
\end{equation*}
This finishes the proof of $C_1=C_2$.

Let us prove that $A_1 = A_2$. By definition,
\begin{align}
    \nonumber
    A_1 &= A \setminus (\subTree{\AMG}{\propagate{\AMG}{C}\cap \noDef} \cup \propagate{\AMG}{C})\\
    A_2 &= A_1 \setminus (\subTree{\AMG}{\propagate{\AMG}{C_1}\cap \noDef} \cup \propagate{\AMG}{C_1})
    \label{eq:def-A2}
\end{align}
Knowing $\subTree{\AMG}{\propagate{\AMG}{C}\cap \noDef} = \subTree{\AMG}{\propagate{\AMG}{C_1}\cap \noDef}$ and $\propagate{\AMG}{C_1} \subseteq \propagate{\AMG}{C}$, we have
\begin{gather*}
    \subTree{\AMG}{\propagate{\AMG}{C_1}\cap \noDef} \cup \propagate{\AMG}{C_1} \subseteq \subTree{\AMG}{\propagate{\AMG}{C}\cap \noDef} \cup \propagate{\AMG}{C}\\
    \underbrace{A\setminus (\subTree{\AMG}{\propagate{\AMG}{C}\cap \noDef} \cup \propagate{\AMG}{C})}_{A_1} \cap (\subTree{\AMG}{\propagate{\AMG}{C_1}\cap \noDef} \cup \propagate{\AMG}{C_1}) = \emptyset
\end{gather*}
Now by eq.~\eqref{eq:def-A2} we have $A_1=A_2$.
\end{proof}\else\begin{sketchproof}
The intuition is that the nodes added by $\propagateDef{\AMG}$ are removed by $\pruneDef{\AMG}$.
\end{sketchproof}\fi

The following proposition means that defense activation do not affect the non-defended nodes completed descendants of a simple state.

\begin{restatable}{proposition}{subtreedefenseprop}\label{prop:subtree-defense}
Let $\AMGgeneral$ be an AMG. Let $A' \subseteq \AtomicAttacks$, $C' \subseteq \Nodes$, and $(A,C) = \simpleState{A', C'}$. Let $D \subseteq \Defenses$ be a set  of defenses. We have,
\begin{equation*}
    \subTree{\AMG}{\propagate{\AMG}{C}\cap \noDef} = \subTree{\AMG}{\propagate{\AMG}{C \setminus \bigcup_{d\in D} \defended{\AMG}{d}}\cap \noDef}
\end{equation*}
\end{restatable}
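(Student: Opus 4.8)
The plan is to write $B=\bigcup_{d\in D}\defended{\AMG}{d}$, $P=\propagate{\AMG}{C}$ and $P'=\propagate{\AMG}{C\setminus B}$, and to establish the two inclusions between $\subTree{\AMG}{P\cap\noDef}$ and $\subTree{\AMG}{P'\cap\noDef}$ separately. Two preliminary facts carry the whole argument. First, any defended node has a non-empty defense set, so $B\cap\noDef=\emptyset$: removing $B$ from $C$ deletes no checkpoint. Second, since $(A,C)=\simpleState{A',C'}$, Proposition~\ref{prop:simple-state-proj} gives $\simpleState{A,C}=(A,C)$, and unwinding the definitions of pruning and propagation in this identity yields $C=P\setminus\subTree{\AMG}{P\cap\noDef}$, hence the key fact $(\star)$: $P\setminus C\subseteq\subTree{\AMG}{P\cap\noDef}$, so that anything propagation adds on top of $C$ is a completed descendant of a checkpoint.

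The inclusion $\subTree{\AMG}{P'\cap\noDef}\subseteq\subTree{\AMG}{P\cap\noDef}$ is the routine one: from $C\setminus B\subseteq C$ and monotonicity of $\propagateDefADT{\AMG}$ (Proposition~\ref{prop:propagate}(\ref{item:propagate-increase})) we get $P'\subseteq P$, hence $P'\cap\noDef\subseteq P\cap\noDef$, and monotonicity of $\subTreeDef{\AMG}$ (Proposition~\ref{prop:subtree-inc}) finishes it.

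For the reverse inclusion, fix $n\in\subTree{\AMG}{P\cap\noDef}$ and an arbitrary directed path $\AMGroot=g_1,\dots,g_m$ with $(g_m,n)\in\Edges$; by Definition~\ref{def:comp-desc} it meets $P\cap\noDef$, and I only need to show it meets $P'\cap\noDef$. The engine is a one-step claim: if a node $g_i\in P\cap\noDef$ of the path is not in $P'$, then $i\ge 2$ and the proper prefix $g_1,\dots,g_{i-1}$ also meets $P\cap\noDef$. Indeed $g_i\in\noDef$ forces $g_i\notin B$, so $g_i\in C$ would give $g_i\in C\setminus B\subseteq P'$ by Proposition~\ref{prop:propagate}(\ref{item:propagate-contains}), a contradiction; hence $g_i\in P\setminus C\subseteq\subTree{\AMG}{P\cap\noDef}$ by $(\star)$. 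Since the root has no parents, $\AMGroot\notin\subTree{\AMG}{P\cap\noDef}$, so $g_i\ne\AMGroot$, $i\ge 2$, and $g_1,\dots,g_{i-1}$ is a directed path from $\AMGroot$ to the parent $g_{i-1}$ of $g_i$, which must meet $P\cap\noDef$ by Definition~\ref{def:comp-desc}. Given the claim, take among the path-nodes lying in $P\cap\noDef$ the one of smallest index $g_{i^*}$; were $g_{i^*}\notin P'$, the claim would put a node of $P\cap\noDef$ strictly before $g_{i^*}$, contradicting minimality, so $g_{i^*}\in P'$ and, being in $\noDef$, $g_{i^*}\in P'\cap\noDef$. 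As the path was arbitrary, $n\in\subTree{\AMG}{P'\cap\noDef}$, and together with the first inclusion this gives the equality.

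I expect the main obstacle to be pinning down $(\star)$ --- that is, recognising that the hypothesis $(A,C)=\simpleState{A',C'}$ is exactly what makes the completed set after re-propagation controllable --- and then converting the one-step claim into the global statement via the smallest-index device rather than an explicit induction on path length; the degenerate treatment of $\AMGroot$, where rootedness of the underlying graph is genuinely used, is the only real edge to watch.
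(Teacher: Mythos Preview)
Your proof is correct and follows essentially the same strategy as the paper: one inclusion by monotonicity of $\propagateDef{}$ and $\subTreeDef{}$, and for the reverse inclusion, take the checkpoint of smallest index along an arbitrary path and argue it must already lie in $C$ (hence in $C\setminus B$ and thus in $P'$). The minor difference is that you derive the key fact $C=P\setminus\subTree{\AMG}{P\cap\noDef}$ directly from the projection property $\simpleState{A,C}=(A,C)$ (Proposition~\ref{prop:simple-state-proj}), whereas the paper routes through $C'$ and the identity $\subTree{\AMG}{\propagate{\AMG}{C}\cap\noDef}=\subTree{\AMG}{\propagate{\AMG}{C'}\cap\noDef}$ established in the proof of that proposition; your detour-free derivation of $(\star)$ is a slight streamlining but not a different argument.
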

\ifdefined\ALLPROOFS\begin{proof}
The first inclusion is immediate by Propositions~\ref{prop:propagate}(\ref{item:propagate-increase}) and~\ref{prop:subtree-inc},
\begin{gather*}
    C \setminus \bigcup_{d\in D} \defended{\AMG}{d} \subseteq C\\
    \propagate{\AMG}{C \setminus \bigcup_{d\in D} \defended{\AMG}{d}} \subseteq \propagate{\AMG}{C}\\
    \subTree{\AMG}{\propagate{\AMG}{C \setminus \bigcup_{d\in D} \defended{\AMG}{d}}\cap \noDef} \subseteq \subTree{\AMG}{\propagate{\AMG}{C}\cap \noDef}
\end{gather*}

Now let $n\in \subTree{\AMG}{\propagate{\AMG}{C}\cap \noDef}$. Let $n_1, \dots, n_k$ be a path from $n_1=\AMGroot$ to $n_k = n$. We can take the smallest integer $i \in \{1, \dots, k-1\}$ such that $n_i$ is a checkpoint in $\propagate{\AMG}{C}\cap \noDef$. We have,
\begin{equation}
    n_i \in \propagate{\AMG}{C} = \propagate{\AMG}{\propagate{\AMG}{C'} \setminus \subTree{\AMG}{\propagate{\AMG}{C'}\cap \noDef}}
    \subseteq \propagate{\AMG}{\propagate{\AMG}{C'}}
    = \propagate{\AMG}{C'}\label{eq:C'}
\end{equation}
As $i$ is chosen as the smallest integer of the set, we have $n_i\not \in \subTree{\AMG}{\propagate{\AMG}{C}\cap \noDef}$, otherwise, there would be another checkpoint earlier in the path $n_1, \dots, n_k$. In the proof of Proposition~\ref{prop:simple-state-proj} we proved that $\subTree{\AMG}{\propagate{\AMG}{C}\cap \noDef} = \subTree{\AMG}{\propagate{\AMG}{C'} \cap \noDef}$, so $n_i \not\in \subTree{\AMG}{\propagate{\AMG}{C'} \cap \noDef}$. Using this fact and eq.~\eqref{eq:C'} we have,
\begin{align*}
    n_i \in \propagate{\AMG}{C'} \setminus \subTree{\AMG}{\propagate{\AMG}{C'} \cap \noDef} = C
\end{align*}
Moreover, $n_i \in \propagate{\AMG}{C}\cap \noDef$ implies $n_i \not\in \noDef$ so $n_i \not\in \bigcup_{d\in D} \defended{\AMG}{d}$. Using the increase of $\propagateDef{\AMG}$,
\begin{gather*}
\left\{
    \begin{matrix}
        n_i \in C \setminus \bigcup_{d\in D} \defended{\AMG}{d}\\
        n_i \not \in \noDef
    \end{matrix}
    \right.\\
    n_i \in \propagate{\AMG}{C \setminus \bigcup_{d\in D} \defended{\AMG}{d}}\cap \noDef
\end{gather*}
As a result, whatever the path $n_1,\dots, n_k$ from the root of the AMG, $n$ has a checkpoint in $\propagate{\AMG}{C \setminus \bigcup_{d\in D} \defended{\AMG}{d}}\cap \noDef$. So $n \in \subTree{\AMG}{\propagate{\AMG}{C \setminus \bigcup_{d\in D} \defended{\AMG}{d}}\cap \noDef}$. Finally by double inclusion,
\begin{equation*}
    \subTree{\AMG}{\propagate{\AMG}{C}\cap \noDef} = \subTree{\AMG}{\propagate{\AMG}{C \setminus \bigcup_{d\in D} \defended{\AMG}{d}}\cap \noDef}
\end{equation*}
\end{proof}\else\begin{sketchproof}
The proof uses the fact that $C$ is the right member of a simple-state and that the first checkpoint in a path from $\AMGroot$ in $\propagate{\AMG}{C}\cap \noDef$ is also a checkpoint in $\propagate{\AMG}{C \setminus \bigcup_{d\in D} \defended{\AMG}{d}}\cap \noDef$.
\end{sketchproof}\fi
}

We are now able to define an equivalence relation on the states. Two states are equivalent if they have the same simple states.

\begin{definition}[Equivalent states]
Let $\AMGgeneral$ be an \gls{amg}, and $\Omega = 2^\AtomicAttacks \times 2^{\Nodes}$.
We say that two pairs $(A,C) \in \Omega$ and $(A', C') \in \Omega$ are equivalent, denoted $(A,C) \sim (A',C')$, if $\simpleState{A,C} = \simpleState{A',C'}$.
\end{definition}

We use two new notations on the quotient set $\Omega / \sim$ that let us access the left member $\Aof{\loc}$ and right member $\Cof{\loc}$ of the canonical representative $\simpleState{\loc} = (\Aof{\loc}, \Cof{\loc})$ of an element $\loc \in \Omega/\sim$. We also denote $[\cdot]$ the equivalence class of an element.\shrinkalt{As proved in~\cite{amg},}{By Proposition~\ref{prop:simple-state-proj},} we have $[\simpleState{\loc}] = \loc$, so $\simpleState{\loc}$ is indeed a representative of $\loc$. Moreover we overload the set difference by writing $\loc \setminus (a,b) = [\Aof{\loc}\setminus a, \Cof{\loc}\setminus b]$, and the set union by writing $\loc \cup (a,b) = [\Aof{\loc}\cup a, \Cof{\loc}\cup b]$.

\ifdefined\FROMPTG
Given an input \gls{amg} $\AMG$, we must specify the construction of the associated \gls{ptmdp} $\OutPTMDP = \OutPTMDPDefFromPTG$ where we need to define the elements of the \gls{ptg} $\OutPTG=\OutPTGDef$.

\subsection{Construction of the \glsfmtshort{ptg} for \glsfmtshort{amg}}

First we exhibit the elements of $\OutPTG = \OutPTGDef$.
\else 
Given an input \gls{amg} $\AMG$, we can now exhibit the construction of the associated \gls{ptmdp} $\OutPTMDP = \OutPTMDPDef$.

\shrinkalt{}{\subsection{Construction of the \glsfmtshort{ptmdp} for \glsfmtshort{amg}}

The goal is to exhibit the elements of $\OutPTMDP = \OutPTMDPDef$.}
\fi

\subsubsection{Locations, initial location, and clock set.}
Let $\AMGgeneral$ be the input \gls{amg}. We define $\Locations_\AMG = 2^\AtomicAttacks\times 2^{\Nodes} /\sim$ the set of locations and $\loc_{\AMG, 0} = [\emptyset, \emptyset]$ the initial location. We define $\Clocks_\AMG = \{x_{a}\}_{a\in \AtomicAttacks} \cup \{x_{d}\}_{d\in \Defenses} \cup \{x_0\}$ the set of clocks associated to the different atomic attacks, \glspl{mtd}, and $x_0$ the global time clock.
\subsubsection{Actions.}
The controllable actions set is $\ControllableActions_\AMG = \left\{\activate_a \mid a \in \AtomicAttacks \right \}$, corresponding to each atomic attack activation.
The uncontrollable actions set is $\UncontrollableActions_\AMG = \UncontrollableActions_\mathsf{mtd} \cup \UncontrollableActions_\mathsf{cmp}$ where,
\shrinkalt{$\UncontrollableActions_\mathsf{mtd} = \left\{\mtd_{d}, \mtdFail_{d} \Bigm\vert d \in \Defenses\right\}$ and $\UncontrollableActions_\mathsf{cmp} = \left\{\completion_{a}, \completionFail_{a} \Bigm\vert a \in \AtomicAttacks \right\}$.}{
\begin{align*}
    \UncontrollableActions_\mathsf{mtd} &= \left\{\mtd_{d}, \mtdFail_{d} \Bigm\vert d \in \Defenses\right\}\\
    \UncontrollableActions_\mathsf{cmp} &= \left\{\completion_{a}, \completionFail_{a} \Bigm\vert a \in \AtomicAttacks \right\}
\end{align*}}
They correspond respectively to the periodical activation of every \gls{mtd} $d\in \Defenses$ with the success of the defense ($\mtd_d$) or failure ($\mtdFail_d$) and the completion of every atomic attack $a \in A$ with the success of the attack ($\completion_a$) or failure ($\completionFail_a$). The set $\UncontrollableActions_\mathsf{mtd}$ contains actions for each \gls{mtd} but not for each subset of \glspl{mtd}. This would be wrong in the general case but a restriction on the \gls{amg} presented later justifies this choice.
\subsubsection{Transitions.}
The set of transitions is $\Transitions_\AMG = \Transitions^\mathsf{act} \cup \Transitions^\mathsf{mtd} \cup \Transitions^\mathsf{cmp}$ with,
\begin{align*}
    \Transitions^\mathsf{act} &= \bigl\{(\loc, \varepsilon,\activate_a, \loc \cup (\{a\},\emptyset)) \mid
    \loc\in \Locations_\AMG, a\in \AtomicAttacks\setminus (\Aof{\loc}\cup \Cof{\loc} \cup\subTree{\AMG}{\Cof{\loc} \cap \noDef})\bigr\}\\
    \Transitions^\mathsf{mtd} &= \bigl\{(\loc, \varepsilon, \mtd_d, \loc \setminus (\defended{\AMG}{d}, \defended{\AMG}{d})) \mid \loc\in \Locations_\AMG, d\in \Defenses\bigr\}\\
    &\hspace{0.5cm}\cup \bigl\{(\loc, \varepsilon, \mtdFail_d, \loc) \mid \loc\in \Locations_\AMG, d\in \Defenses\bigr\}\\
    \Transitions^\mathsf{cmp} &= \big\{(\loc, \varepsilon, \completion_a, \loc\cup (\emptyset, \{a\})) \mid \loc\in \Locations_\AMG, a\in \Aof{\loc}\big\}\\
    &\hspace{0.5cm}\cup \big\{(\loc, \varepsilon, \completionFail_d, \loc\setminus (\{a\}, \emptyset)) \mid \loc\in \Locations_\AMG, a\in \Aof{\loc}\big\}
\end{align*}
where the set $\Transitions^\mathsf{act}$ contains activation edges for every location $\loc \in \Locations_\AMG$ and non-activated, non-completed, and not in the completed descendants of checkpoints, atomic attack $a\in \AtomicAttacks\setminus (\Aof{\loc}\cup \Cof{\loc} \cup \subTree{\AMG}{\Cof{\loc} \cap \noDef})$. The transitions in $\Transitions^\mathsf{mtd}$ correspond to the successful and unsuccessful activation of every \gls{mtd} $d\in\Defenses$.
Finally, $\Transitions^\mathsf{cmp}$ is the set of successful and unsuccessful completion transitions for every location $\loc\in \Locations$ and activated atomic attack $a\in \Aof{\loc}$.  \figurename~\ref{fig:sample-ADMDP} displays a sample of the \gls{ptmdp} with the different kinds of transitions from a given location $[A,C]$.

The reader could expect the clock constraints for $\Transitions^\mathsf{mtd}$ (resp. $\Transitions^\mathsf{cmp}$) to be $x_d\geq \timefunc_d$ (resp. $x_a \leq \timefunc_a$). However, the environment transition density $\density_\AMG{}$, defined later, will assign a null probability to uncontrollable transition before the defense $d$ (resp. atomic attack $a$) verifies $x_d \geq \timefunc_d$ (resp. $x_a \geq \timefunc_a$). So the uncontrollable transitions can have an empty bound $\varepsilon$.

\begin{figure}[t]
    \centering
    \begin{tikzpicture}[
            ->,
            >=stealth',
            shorten >=1pt,
            auto,
            node distance=2.5cm,
            semithick,
            sloped
            ]
        
        \node[p2node] (normal){$A,C$};
        \node[invariant, align=center] (label_inv) at (-2.5,1.2) {$\land_{a\in A}(x_a \leq \timefunc_a)$\\$\land_{d\in\Defenses} (x_d \leq \timefunc_d)$};
        \node[weight] (label_weight) at (-2.5,1.9) {$\sum_{a\in A}\propcost_a$};
        \node[p2node] (activated) [left of=normal] {$A\cup \{a'\},$\\$C$};
        \node[p2node] (completion_failed) [below left of=normal] {$A\setminus \{a\},$\\$C$};
        \node[p2node] (completion_success) [below of=normal] {$A,$\\$C\cup \{a\}$};
        \node[p2node] (mtd_success) [below right of=normal] {$A\setminus \defended{\AMG}{d},$\\$C\setminus \defended{\AMG}{d}$};

        \draw[draw=violet, -, bend right] (normal) edge (label_inv);
        \draw[draw=red, -, bend right=25] (normal) edge (label_weight);
        
        \path (normal) edge [loop above, dashed, looseness=10] node [clock, above=0.3] {$x_{d'} \leftarrow 0$} 
        node [scaledown, above] {$\mtd_{d'}$} (normal);
        \path (normal) edge [loop, dashed, in=25, out=55, looseness=10] node [clock, above=0.3] {$x_{d'} \leftarrow 0$}
        node [scaledown, above] {$\mtdFail_{d'}$} (normal);
        \path (normal) edge node [scaledown, align=center, above=0.3] {$\activate_{a'}$} node[clock,below] {$x_{a'} \leftarrow 0$} node[weight,above] {$\cost_{a'}$} (activated);
        \path (normal)  edge [dashed] node [scaledown, below] {$\mtd_d$}
        node[clock, above] {$x_d\leftarrow 0$} (mtd_success);
        \path (normal) edge [loop, dashed, in=-14, out=16, looseness=10] node [scaledown, align=center, above] {$\mtdFail_d$}
        node [clock, above=0.3] {$x_{d} \leftarrow 0$}
        (normal);
        \path (normal)  edge [dashed] node [scaledown, align=center, below]{$\completion_a$}
        (completion_success);
        \path (normal) edge [dashed] node [scaledown, align=center] {$\completionFail_a$}
        (completion_failed);
        
        \matrix [draw, column sep=5pt, row sep=3pt, inner sep=2pt, ampersand replacement=\&, column 1/.style={anchor=base}, column 2/.style={anchor=base west}] at (4, 0) {
        \node [state, minimum size=2em, scale=0.6] {$X,Y$}; \& \node[scale=0.8] {location $[X,Y]$};\\
        \draw[dashed, line width=0.5pt, >=stealth', semithick](-0.25, 0) -- (0.25,0); \& \node[right, scale=0.8, align=left] {uncontrollable\\[-0.5em]transition};\\
        \draw[line width=0.5pt, >=stealth', semithick](-0.25, 0) -- (0.25,0); \& \node[right, scale=0.8, align=left] {controllable\\[-0.5em]transition};\\
        \node [invariant, scale=0.8] {$x \bowtie y$}; \& \node[right, scale=0.8, align=left] at (0, 0.2em) {location\\[-0.5em]invariant};\\
        \node [weight, scale=0.8] {$c$}; \& \node[right, scale=0.8, align=left] at (0, 0.2em) {location and\\[-0.5em]transition cost};\\
        \node [scale=0.8] {$\action^x_y$}; \& \node[scale=0.8, align=left] {action type};\\
        \node [clock, scale=0.8] {$x \leftarrow 0$}; \& \node[scale=0.8, align=left] {clock reset};\\
        };

    \end{tikzpicture}
    \ifdefined\FROMPTG
    \caption[Sample of the transitions in the \glsfmtshort{ptg}.]{Sample of the transitions in the \gls{ptg} $\OutPTG$ from a location $[A,C]\in \Locations$. Notice that there are as many outgoing transitions from $[A, C]$ as there are such $a\in A$, $a'\in \AtomicAttacks\setminus(A\cup C \cup \subTree{\AMG}{\propagate{\AMG}{C}\cap \noDef})$, $d \in \defense{A\cup C}$, and $d' \in \Defenses \setminus \defense{A \cup C}$.
    Moreover, a way to construct the full \gls{ptg} is to do a transitive closure of these transitions from the initial location~$[\emptyset, \emptyset]$: we build recursively the transitions from the new nodes that are reached until all nodes are expanded.
    }
    \else 
    \caption[Sample of the transitions in the \glsfmtshort{ptmdp}.]{Sample of the transitions in the \gls{ptmdp} $\OutPTMDP$ from a location $[A,C]\in \Locations$. Notice that there are as many outgoing transitions from $[A, C]$ as there are such $a\in A$, $a'\in \AtomicAttacks\setminus(A\cup C \cup \subTree{\AMG}{\propagate{\AMG}{C}\cap \noDef})$, $d \in \defense{A\cup C}$, and $d' \in \Defenses \setminus \defense{A \cup C}$.
    \shrinkalt{}{
    Moreover, a way to construct the full \gls{ptmdp} is to do a transitive closure of these transitions from the initial location~$[\emptyset, \emptyset]$: we build recursively the transitions from the new nodes that are reached until all nodes are expanded.}
    }
    \fi
    \label{fig:sample-ADMDP}
\end{figure}
\subsubsection{Cost, clock reset, and invariant.}
Let $\loc, \loc' \in \Locations_\AMG$, $a\in \AtomicAttacks$ with clock $x_a$, $d\in \Defenses$ with clock $x_d$. We define the cost in locations as $\price_\AMG(\loc) = \sum_{a\in \Aof{\loc}}\propcost_a$. The cost for an activation transitions $e$ of the form $e = (\loc, \varepsilon, \activate_a, \loc')$ is $\price_\AMG(e_a) = \cost_a$ and for any other type of transition the cost is null. The clock reset function $\clockReset_\AMG$ is defined as follows when such a transition exists:
\shrinkalt{
$\clockReset_\AMG(\loc, \varepsilon, \activate_a, \loc') = \{x_a\}$, $\clockReset_\AMG(\loc, \varepsilon, \mtd_d, \loc') = \clockReset_\AMG(\loc, \varepsilon, \mtdFail_d, \loc') = \{x_d\}$, and $\clockReset_\AMG(\loc, \varepsilon, \completion_a, \loc') = \clockReset_\AMG(\loc, \varepsilon, \completionFail_a, \loc') = \{x_a\}$.
}{
\begin{align*}
    \clockReset_\AMG(\loc, \varepsilon, \activate_a, \loc') &= \{x_a\}\\
    \clockReset_\AMG(\loc, \varepsilon, \mtd_d, \loc') &= \clockReset_\AMG(\loc, \varepsilon, \mtdFail_d, \loc') = \{x_d\}\\
    \clockReset_\AMG(\loc, \varepsilon, \completion_a, \loc') &= \clockReset_\AMG(\loc, \varepsilon, \completionFail_a, \loc') = \{x_a\}
\end{align*}
}
The invariant function $\invariants_\AMG$ is $\invariants_\AMG(\loc) = \bigwedge_{a\in \Aof{\loc}} (x_a \leq \timefunc_a) \bigwedge_{d \in \Defenses} (x_d \leq \timefunc_d)$.

\ifdefined\FROMPTG
Overall, we have $\OutPTG = \OutPTGDef$. Now, we present a restriction on our structure before exhibiting the density $\density_\AMG$ of $\OutPTMDP$.

\subsection{A restriction on the \glsfmtshort{amg}.}
As seen in Section~\ref{sec:AMG}, when there is simultaneous activation of several \glspl{mtd}, all the successfully defended nodes are removed before evaluating the new state. This sequentiality is essential because, for $\loc \in \Locations_\AMG$ and $d_1, d_2 \in \Defenses$ two \glspl{mtd} that get successfully activated at the same time, it does \textbf{not} hold in general \shrinkalt{that $\loc\setminus(\defended{}{d_1}\cup\defended{}{d_2}, \defended{}{d_1}\cup\defended{}{d_2})$ is equal to $(\loc\setminus(\defended{}{d_1} , \defended{}{d_1})) \setminus (\defended{}{d_2} , \defended{}{d_2})$.}{
\begin{equation}
    \loc\setminus(\defended{}{d_1}\cup\defended{}{d_2}, \defended{}{d_1}\cup\defended{}{d_2}) = (\loc\setminus(\defended{}{d_1} , \defended{}{d_1})) \setminus (\defended{}{d_2} , \defended{}{d_2})
    \label{eq:db-def}
\end{equation}
}
This means that multiple transitions in our \gls{ptmdp} in a row that remove the completed and activated nodes associated with several \gls{mtd} are not equivalent to a single transition that simultaneously removes all the completed and activated nodes. As a result, from any location $\loc$, we have to put a transition for every element of $2^\Defenses$ that is the possible set of \glspl{mtd} activated at a given time. This exponential size is not desired, so we will add a restriction on the \gls{amg} to have only $O(\vert\Defenses\vert)$ outgoing defense edges from every location.
We define a relation that expresses that an \gls{mtd} directly follows another one.
\begin{definition}
Given an \gls{amg} $\AMG$, 
we define $\triangleright_\AMG$ as a binary relationship on $\Defenses$ s.t. for $d_1, d_2 \in \Defenses$,
\shrinkalt{$d_1\triangleright_\AMG d_2$ if there exists $n_1 \in \defended{}{d_1}$ and $n_2 \in \children{}{n_1}$ with $n_2 \not\in \defended{}{d_1}$ and $n_2 \in \defended{}{d_2}$. The relation $d_1 \triangleright_\AMG d_2$ is read ``$d_2$ follows $d_1$ in $\AMG$''.}{
\begin{align*}
    d_1\triangleright_\AMG d_2& \Longleftrightarrow \exists n_1 \in \defended{}{d_1}, \exists n_2 \in \children{}{n_1},n_2 \not\in \defended{}{d_1} \land n_2 \in \defended{}{d_2}
\end{align*}
the relation $d_1 \triangleright_\AMG d_2$ is read ``$d_2$ follows $d_1$ in $\AMG$''.
}
\end{definition}
In words, $d_1 \triangleright_\AMG d_2$ if $d_2$ defends a node $n_2$ that is a child of a node $n_1$ defended by $d_1$, and $d_1$ does not defend $n_2$. We will simply write $d_1 \triangleright d_2$ when evident.
Using this relation, we show a sufficient condition s.t., if several \glspl{mtd} $d_1, \dots, d_k$ are activated successfully at the same time, we can virtually activate them sequentially and obtain the same result as if they were activated simultaneously.
\shrinkalt{
\begin{restatable}{theorem}{defensechaintheorem}\label{thm:defense-chain}
Let $\AMGgeneral$ be an \gls{amg}. Suppose the directed graph of the relation $\triangleright$, i.e., $\brac{\Defenses, \{(d_1, d_2) \in \Defenses \times \Defenses \mid d_1 \triangleright d_2\}}$, has no cycle. Then, for all $D \subseteq \Defenses$, we can order the elements of $D$ in a sequence $(d_1, \dots, d_k)$ s.t. for all $i\in \{1, \dots, k\}$ and integer $j<i$, $d_j \ntriangleright d_i$. Moreover, for all $\loc \in \Locations_\AMG$, this order verifies,
\begin{equation*}
    \loc\setminus(\cup_{j = 1}^k \defended{\AMG}{d_j}, \cup_{j=1}^k \defended{\AMG}{d_j}) = \loc \setminus (\defended{\AMG}{d_1}, \defended{\AMG}{d_1}) \dots \setminus (\defended{\AMG}{d_k}, \defended{\AMG}{d_k})
\end{equation*}
\end{restatable}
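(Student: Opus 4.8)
The plan is to treat the two assertions in turn. The ordering claim is a topological sort. Under the hypothesis the digraph $\brac{\Defenses, \{(d_1,d_2)\mid d_1\triangleright d_2\}}$ is acyclic, and so is the subgraph it induces on any $D\subseteq\Defenses$ (a cycle inside $D$ would be a cycle of the whole digraph). A finite acyclic digraph admits a topological enumeration $e_1,\dots,e_k$ of its vertices with $e_a\triangleright e_b \Rightarrow a<b$; reading it backwards, i.e.\ putting $d_i = e_{k+1-i}$, yields a sequence in which $d_j\triangleright d_i$ forces $i<j$, equivalently $d_j\ntriangleright d_i$ whenever $j<i$, which is what is required. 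Note that every tail $(d_m,\dots,d_k)$ inherits the same property, which is used for the induction below.

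For the commutation identity I would argue by induction on $k=\vert D\vert$, the cases $k\le 1$ being immediate (for $k=1$ the two sides are literally equal). For the step, fix the order $(d_1,\dots,d_k)$ from the first part, put $\loc_1 = \loc\setminus(\defended{\AMG}{d_1},\defended{\AMG}{d_1})$ and $V=\bigcup_{j=2}^k\defended{\AMG}{d_j}$. Applying the induction hypothesis to $\loc_1$ and the tail $(d_2,\dots,d_k)$ collapses the last $k-1$ removals into $\loc_1\setminus(V,V)$, so the step reduces to the two-batch identity
\[
\bigl(\loc\setminus(\defended{\AMG}{d_1},\defended{\AMG}{d_1})\bigr)\setminus(V,V) \;=\; \loc\setminus(\defended{\AMG}{d_1}\cup V,\ \defended{\AMG}{d_1}\cup V),
\]
to be proved under the single hypothesis that $d_1\ntriangleright d'$ for every $d'\in\{d_2,\dots,d_k\}$. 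Spelled out, that hypothesis says exactly $\children{}{n}\cap V\subseteq\defended{\AMG}{d_1}$ for every $n\in\defended{\AMG}{d_1}$, i.e.\ any child of a $d_1$-defended node that is scheduled for the second batch is already $d_1$-defended.

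To prove the two-batch identity I would unfold $\loc\setminus(\cdot,\cdot)$ through the canonical representative and the simple-state operator $\simpleState{\cdot}=\pruneDef{\AMG}\circ\propagateDef{\AMG}$. Writing $(\Aof{\loc},\Cof{\loc})=\simpleState{\loc}$ (itself a simple state by idempotence, Proposition~\ref{prop:simple-state-proj}) and $U=\defended{\AMG}{d_1}\cup V$, the identity becomes $\simpleState{\Aof{\loc_1}\setminus V,\ \Cof{\loc_1}\setminus V}=\simpleState{\Aof{\loc}\setminus U,\ \Cof{\loc}\setminus U}$. By Proposition~\ref{prop:subtree-defense} the completed-descendant-of-checkpoint sets that pruning discards coincide before and after any batch of defense removals, so pruning contributes the same on both sides and it suffices to match the propagated completed sets, $\propagate{\AMG}{\Cof{\loc_1}\setminus V}=\propagate{\AMG}{\Cof{\loc}\setminus U}$ (the activated components behave identically, since $\propagateDef{\AMG}$ leaves them untouched and set subtraction commutes). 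One inclusion is pure monotonicity: a short computation with Propositions~\ref{prop:simple-state-proj} and~\ref{prop:subtree-defense} gives $\Cof{\loc}\setminus\defended{\AMG}{d_1}\subseteq\Cof{\loc_1}$, hence $\Cof{\loc}\setminus U\subseteq\Cof{\loc_1}\setminus V$, and Proposition~\ref{prop:propagate} (monotonicity and idempotence of $\propagateDef{\AMG}$) yields $\propagate{\AMG}{\Cof{\loc}\setminus U}\subseteq\propagate{\AMG}{\Cof{\loc_1}\setminus V}$. For the converse I would start from $\Cof{\loc_1}\subseteq\propagate{\AMG}{\Cof{\loc}\setminus\defended{\AMG}{d_1}}$, reduce to $\propagate{\AMG}{\propagate{\AMG}{C_0\setminus\defended{\AMG}{d_1}}\setminus V}\subseteq\propagate{\AMG}{C_0\setminus U}$ with $C_0=\Cof{\loc}$, and then trace, for a node claimed on the left, the refinement chain that re-derives it after removing $\defended{\AMG}{d_1}$; whenever such a chain would rely on a ``stale'' completed child $m$ of a node $n\in\defended{\AMG}{d_1}$ with $m\in V$, the condition $\children{}{n}\cap V\subseteq\defended{\AMG}{d_1}$ forces $m\in\defended{\AMG}{d_1}$, so $m$ has also been removed at stage one and this chain is ruled out; the surviving chains only use children that survive $U$, hence are equally valid in $\propagate{\AMG}{C_0\setminus U}$.

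I expect the reverse inclusion of the propagation equality to be the real obstacle: $\propagateDef{\AMG}$ can re-complete a subgoal from a child that is present only because it has not yet been removed, and turning the informal ``trace the refinement chain'' step into a clean induction (on propagation depth, and simultaneously on depth in the DAG so that the chain terminates) needs care, as does checking that the pruning performed inside $\simpleState{\Aof{\loc}\setminus\defended{\AMG}{d_1},\ \Cof{\loc}\setminus\defended{\AMG}{d_1}}$ never drops a node the simultaneous removal would retain — which is precisely where Propositions~\ref{prop:simple-state-proj} and~\ref{prop:subtree-defense} are used to keep every intermediate state in canonical form. The ordering part, the monotone inclusion, and the activated-component bookkeeping are by contrast routine.
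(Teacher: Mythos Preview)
Your approach is essentially the paper's: the ordering is a reverse topological sort, and the identity is proved by induction on $|D|$, peeling off one defense and reducing to a two-batch removal against a virtual ``union'' defense. The only cosmetic difference is that you peel from the front (removing $d_1$ first) whereas the paper peels from the back (removing last some $d_i$ with $d_j\ntriangleright d_i$ for all $j$); in either direction the two-batch step is exactly Lemma~\ref{lemma:defense-triangle} applied with the virtual defense whose defended set is the remaining union, and the paper proves that lemma via precisely the propagation-depth induction you sketch for the reverse inclusion --- so you can invoke it directly and avoid the part you rightly flag as the real obstacle.
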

}{
\begin{restatable}{lemma}{defensetriangle}\label{lemma:defense-triangle}
Let $\AMG$ an \gls{amg}. For $d_1, d_2 \in \Defenses$, $\loc\in \Locations_\AMG$ a location, if $d_1 \ntriangleright d_2$ then, eq.~\eqref{eq:db-def} holds, i.e.,
\begin{align*}
    \loc\setminus(\defended{}{d_1}\cup\defended{}{d_2}, \defended{}{d_1}\cup\defended{}{d_2}) = (\loc\setminus(\defended{}{d_1} , \defended{}{d_1})) \setminus (\defended{}{d_2} , \defended{}{d_2})
\end{align*}
\end{restatable}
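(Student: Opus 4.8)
The plan is to move the identity from the quotient $\Locations_\AMG=\Omega/\sim$ down to a statement about simple states, and then unfold $\propagateDef{\AMG}$ and $\pruneDef{\AMG}$. Write $(A,C)=\simpleState{\loc}$, which by Proposition~\ref{prop:simple-state-proj} satisfies $\simpleState{A,C}=(A,C)$, and abbreviate $D_i=\defended{\AMG}{d_i}$. By the definitions of the overloaded operations on $\Omega/\sim$, the left-hand side of eq.~\eqref{eq:db-def} is $[A\setminus(D_1\cup D_2),\,C\setminus(D_1\cup D_2)]$, while the right-hand side is $[\Aof{\loc'}\setminus D_2,\,\Cof{\loc'}\setminus D_2]$ with $(\Aof{\loc'},\Cof{\loc'})=\simpleState{A\setminus D_1,\,C\setminus D_1}$. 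Since $X\setminus(D_1\cup D_2)=(X\setminus D_1)\setminus D_2$ at the level of raw sets, and since two classes are equal iff their simple states agree (by the definition of $\sim$ and $[\simpleState{\loc}]=\loc$), it suffices to prove
\[
\simpleState{(A\setminus D_1)\setminus D_2,\,(C\setminus D_1)\setminus D_2}=\simpleState{\Aof{\loc'}\setminus D_2,\,\Cof{\loc'}\setminus D_2},
\]
i.e. that removing $D_2$ from $(A\setminus D_1,C\setminus D_1)$ yields the same simple state whether or not one first replaces $(A\setminus D_1,C\setminus D_1)$ by its simple state $(\Aof{\loc'},\Cof{\loc'})$.

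First I would treat the completion components. Using $\propagate{\AMG}{A,C}=(A,\propagate{\AMG}{C})$ and the shape of $\pruneDef{\AMG}$, the $C$-component of $\simpleState{X,Y}$ is $\propagate{\AMG}{Y}\setminus\subTree{\AMG}{\propagate{\AMG}{Y}\cap\noDef}$. By definition $\Cof{\loc'}=\propagate{\AMG}{C\setminus D_1}\setminus\subTree{\AMG}{\propagate{\AMG}{C\setminus D_1}\cap\noDef}$, so $\Cof{\loc'}$ and $C\setminus D_1$ differ only by a subset of $\subTree{\AMG}{\propagate{\AMG}{C\setminus D_1}\cap\noDef}$, and $\Cof{\loc'}\subseteq\propagate{\AMG}{C\setminus D_1}$. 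The key sub-claim — the only place the hypothesis $d_1\ntriangleright d_2$ enters — is that the symmetric difference of $\propagate{\AMG}{(C\setminus D_1)\setminus D_2}$ and $\propagate{\AMG}{\Cof{\loc'}\setminus D_2}$ is contained in the completed descendants of checkpoints of either set. The intuition: a pass of $\fixedPoint{}$ can only re-add a subgoal $h\in D_1$ through a child of $h$ lying outside $D_1$, and $d_1\ntriangleright d_2$ forces every child of a $D_1$-node that lies outside $D_1$ to lie outside $D_2$ as well; chasing this down the acyclic DAG, the ``supporting'' nodes of any such re-addition lie outside $D_1\cup D_2$, hence are present on both sides, so the two propagations agree except possibly on nodes sitting below a checkpoint. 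Granting the sub-claim, Lemma~\ref{lemma:subtree-subtree} together with Proposition~\ref{prop:subtree-defense} (applied to the simple state $(\Aof{\loc'},\Cof{\loc'})$ with defense set $\{d_2\}$) shows that the two sets $\subTree{\AMG}{\propagate{\AMG}{\cdot}\cap\noDef}$ coincide, so after pruning the two $C$-components are equal.

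For the activation components, the $A$-component of $\simpleState{X,Y}$ is $X\setminus(\subTree{\AMG}{\propagate{\AMG}{Y}\cap\noDef}\cup\propagate{\AMG}{Y})$. By the previous step, $\propagate{\AMG}{Y}$ and $\subTree{\AMG}{\propagate{\AMG}{Y}\cap\noDef}$ are the same on both sides; and the two candidate $X$'s, $(A\setminus D_1)\setminus D_2$ and $\Aof{\loc'}\setminus D_2$, differ only by elements already in $\propagate{\AMG}{C\setminus D_1}$ or in $\subTree{\AMG}{\propagate{\AMG}{C\setminus D_1}\cap\noDef}$ — precisely the difference between $\Aof{\loc'}$ and $A\setminus D_1$ recorded by $\pruneDef{\AMG}$. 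Hence intersecting with the complement of that set kills the difference, the two $A$-components agree, and the lemma follows.

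The hard part is the sub-claim of the second paragraph: comparing $\propagate{\AMG}{(C\setminus D_1)\setminus D_2}$ with $\propagate{\AMG}{\Cof{\loc'}\setminus D_2}$. Making this rigorous calls for an induction over the nodes of the acyclic DAG processed from the leaves up to the root, tracking which subgoals each pass of $\fixedPoint{}$ introduces and invoking the precise form of $d_1\ntriangleright d_2$ — a node of $D_2\setminus D_1$ is never a child of a node of $D_1$ — to keep the introduced subgoals synchronized between the two computations up to nodes lying under a checkpoint. Everything else is bookkeeping with Propositions~\ref{prop:propagate}, \ref{prop:subtree-inc}, \ref{prop:simple-state-proj}, \ref{prop:subtree-defense} and Lemma~\ref{lemma:subtree-subtree}.
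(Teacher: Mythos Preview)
Your plan is correct and mirrors the paper's proof: both reduce to showing that the two propagations $\propagate{\AMG}{(C\setminus D_1)\setminus D_2}$ and $\propagate{\AMG}{\Cof{\loc'}\setminus D_2}$ agree modulo completed descendants of checkpoints (the pruning sets themselves already coincide by Proposition~\ref{prop:subtree-defense}, independently of the sub-claim), and both locate the sole use of $d_1\ntriangleright d_2$ in an induction along iterates of $\fixedPoint{}$. The one subtlety you will meet when writing that induction out is that a single hypothesis of the form ``propagated nodes outside $D_2$ lie in the other side'' is not self-maintaining, because a freshly propagated $D_1$-node may have supporting children in $D_1\cap D_2$; the paper handles this by carrying a coupled pair of hypotheses, one for nodes outside $D_2$ and one for nodes in $D_1\cap D_2$.
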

\ifdefined\ALLPROOFS\begin{proof}
Let $d_1, d_2\in \Defenses$, we assume that $d_1 \ntriangleright d_2$. Let
\begin{align*}
    \loc_1 &= \loc\setminus(\defended{}{d_1}\cup\defended{}{d_2}, \defended{}{d_1}\cup\defended{}{d_2})\\
    \loc_2 &= \loc\setminus(\defended{}{d_1} , \defended{}{d_1})\\
    \loc_3 &= \loc_2 \setminus (\defended{}{d_2} , \defended{}{d_2}) = \loc\setminus(\defended{}{d_1} , \defended{}{d_1}) \setminus (\defended{}{d_2} , \defended{}{d_2})
\end{align*}
We have
\begin{align}
    \Cof{\loc_1} &= \propagate{\AMG}{\Cof{\loc}\setminus(\defended{\AMG}{d_1}\cup \defended{\AMG}{d_2})} \setminus \subTree{\AMG}{\propagate{\AMG}{\Cof{\loc}\setminus(\defended{\AMG}{d_1}\cup \defended{\AMG}{d_2})} \cap \noDef}\label{eq:def-cofl1}\\
    \Cof{\loc_2} &= \propagate{\AMG}{\Cof{\loc}\setminus\defended{\AMG}{d_1}} \setminus \subTree{\AMG}{\propagate{\AMG}{\Cof{\loc}\setminus\defended{\AMG}{d_1}}\cap \noDef}\nonumber\\
    \Cof{\loc_3} &= \propagate{\AMG}{\Cof{\loc_2}\setminus\defended{\AMG}{d_2}} \setminus \subTree{\AMG}{\propagate{\AMG}{\Cof{\loc_2}\setminus \defended{\AMG}{d_2}}\cap \noDef}\label{eq:def-cofl3}
\end{align}
And the condition $d_1 \ntriangleright d_2$ is equivalent to
\begin{equation}\label{eq:not-d2-follows-d1}
    \forall n_1 \in \defended{\AMG}{d_1}, \forall n_2 \in \children{}{n_1},
    n_2 \in \defended{}{d_1} \lor n_2 \not\in \defended{}{d_2}
\end{equation}
By Proposition~\ref{prop:simple-state-proj}, $\Cof{\simpleState{\loc}} = \Cof{\loc}$, that is 
\begin{equation*}
    \Cof{\loc} = \propagate{\AMG}{\Cof{\loc}} \setminus \subTree{\AMG}{\propagate{\AMG}{\Cof{\loc}} \cap \noDef}
\end{equation*}
we can deduce,
\begin{equation*}
    \Cof{\loc} \setminus \subTree{\AMG}{\propagate{\AMG}{\Cof{\loc}} \cap \noDef} = \Cof{\loc}
\end{equation*}
Now we have
\begin{align*}
    \Cof{\loc} \setminus (\defended{\AMG}{d_1} \cup \defended{\AMG}{d_2}) &= \Cof{\loc} \setminus \defended{\AMG}{d_1} \setminus \defended{\AMG}{d_2} \setminus \subTree{\AMG}{\propagate{\AMG}{\Cof{\loc}} \cap \noDef}\\
    &\subseteq \propagate{\AMG}{\Cof{\loc} \setminus \defended{\AMG}{d_1}} \setminus \defended{\AMG}{d_2} \setminus \subTree{\AMG}{\propagate{\AMG}{\Cof{\loc}} \cap \noDef}\\
    &= \propagate{\AMG}{\Cof{\loc} \setminus \defended{\AMG}{d_1}}  \setminus \subTree{\AMG}{\propagate{\AMG}{\Cof{\loc} \setminus \defended{\AMG}{d_1}} \cap \noDef} \setminus \defended{\AMG}{d_2}\\
    &= \Cof{\loc_2} \setminus \defended{\AMG}{d_2}
\end{align*}

We want to show by induction
\begin{equation*}
    \propagate{\AMG}{\Cof{\loc}\setminus\defended{\AMG}{d_1}} \setminus \subTree{\AMG}{\propagate{\AMG}{\Cof{\loc}}} \setminus \defended{\AMG}{d_2} \subseteq \propagate{\AMG}{\Cof{\loc}\setminus\defended{\AMG}{d_1} \setminus \defended{\AMG}{d_2}}
\end{equation*}

For $k \in \mathbb N$, let
\begin{align*}
    N_k &= f_{\propagateDef{}}^k(\Cof{\loc}\setminus\defended{\AMG}{d_1}) \setminus \subTree{\AMG}{\propagate{\AMG}{\Cof{\loc}}}\\
    M_k &= f_{\propagateDef{}}^k(\Cof{\loc}\setminus\defended{\AMG}{d_1} \setminus\defended{\AMG}{d_2})\\
    \mathcal H^1_k & : N_k \setminus \defended{\AMG}{d_2} \subseteq M_k\\
    \mathcal H^2_k & : N_k \cap \defended{\AMG}{d_1} \cap \defended{\AMG}{d_2} \subseteq M_k
\end{align*}
where $\mathcal H^1_k \land \mathcal H^2_k$ is our induction hypothesis.
We have $\mathcal H^1_0 \land \mathcal H^2_0$. For $k \in \mathbb N$ we suppose $\mathcal H^1_k \land \mathcal H^2_k$, we want to show $\mathcal H^1_{k+1} \land \mathcal H^2_{k+1}$. First we prove $\mathcal H^1_{k+1}$.

Let $n \in N_{k+1} \setminus \defended{\AMG}{d_2}$.
\begin{itemize}
    \item If $n \in N_k \setminus \defended{\AMG}{d_2}$, then $n\in M_k$ and by monotony, $n \in M_{k+1}$.
    \item Else, we have $n\not\in N_k \setminus \defended{\AMG}{d_2}$. We notice that $N_k \subseteq \Cof{\loc}$ so $n \in \Cof{\loc} \setminus\defended{\AMG}{d_2}$.
    \begin{itemize}
        \item If $n\not\in \defended{\AMG}{d_1}$, then $n\in \Cof{\loc} \setminus \defended{\AMG}{d_1} \setminus \defended{\AMG}{d_2} \subseteq M_k$ by monotony.
        \item Else, $n\in \defended{\AMG}{d_1}$. As such, $n$ can not be a checkpoint because a checkpoint must be an undefended node. Moreover, if $n\in \defended{\AMG}{d_2}$, then $n\in N_k \cap \defended{\AMG}{d_1} \cap \defended{\AMG}{d_2}$ and by $\mathcal H^2_k$ we have $n\in M_k\subseteq M_{k+1}$. For the rest we suppose $n \not \in \defended{\AMG}{d_2}$, which implies $n\not \in f_{\propagateDef{}}^k(\Cof{\loc}\setminus\defended{\AMG}{d_1})$.
        
        As $n\in f_{\propagateDef{}}^{k+1}(\Cof{\loc}\setminus\defended{\AMG}{d_1}) \setminus f_{\propagateDef{}}^k(\Cof{\loc}\setminus\defended{\AMG}{d_1})$, we can take $n_1, \dots, n_l \in \children{}{n}$ such that for all $i \in \{1, \dots, k\}$, we have $n_i \in f_{\propagateDef{}}^k(\Cof{\loc}\setminus\defended{\AMG}{d_1})$ and $(n_1, \dots, n_k)$ is the list of all children of $n$ if $\operation{n}$ is a conjunction, or is a non empty list of nodes if $\operation{n}$ is a disjunction. As $n$ is not a checkpoint and does not have checkpoint in $\subTree{\AMG}{\propagate{\AMG}{\Cof{\loc}}}$ on every path from $\AMGroot$ we can conclude that for all $i\in \{1, \dots, k\}$, $n_i \not \in \subTree{\AMG}{\propagate{\AMG}{\Cof{\loc}}}$, so $n_i \in N_k$.
        
        By eq.~\eqref{eq:not-d2-follows-d1}, we have that for all $i\in \{1,\dots, k\}$, $n_i\in \defended{\AMG}{d_1}$ or $n_i\not\in \defended{\AMG}{d_2}$.
        \begin{itemize}
            \item If $n_i \not\in \defended{\AMG}{d_2}$, then $n_i\in N_k \setminus \defended{\AMG}{d_2} \subseteq M_k$ by $\mathcal H^1_k$.
            \item Otherwise, $n_i \in \defended{\AMG}{d_1} \cap \defended{\AMG}{d_2}$, so $n_i \in N_k \cap \defended{\AMG}{d_1} \cap \defended{\AMG}{d_2} \subseteq M_k$ by $\mathcal H^2_k$.
        \end{itemize}
        In any cases, for all $i\in \{1,\dots, k\}$, $n_i \in M_k$ so $n \in M_{k+1}$.
    \end{itemize}
\end{itemize}
We treated all the cases and proved $\mathcal H^1_{k+1}$.

Now we prove $\mathcal H^2_{k+1}$. Let $n \in N_{k+1} \cap \defended{\AMG}{d_1} \cap \defended{\AMG}{d_2}$.
\begin{itemize}
    \item If $n\in f_{\propagateDef{}}^k(\Cof{\loc}\setminus\defended{\AMG}{d_1})$ then $n \in N_{k} \cap \defended{\AMG}{d_1} \cap \defended{\AMG}{d_2} \subseteq M_{k+1}$ by $\mathcal H^2_k$.
    \item Otherwise, $n\in f_{\propagateDef{}}^{k+1}(\Cof{\loc}\setminus\defended{\AMG}{d_1}) \setminus f_{\propagateDef{}}^k(\Cof{\loc}\setminus\defended{\AMG}{d_1})$. As before, we can take $n_1, \dots, n_l \in \children{}{n}$ such that for all $i \in \{1, \dots, k\}$, we have $n_i \in f_{\propagateDef{}}^k(\Cof{\loc}\setminus\defended{\AMG}{d_1})$ and $(n_1, \dots, n_k)$ is the list of all children of $n$ if $\operation{n}$ is a conjunction, or is a non empty list of nodes if $\operation{n}$ is a disjunction. As $n$ is not a checkpoint and does not have checkpoint in $\subTree{\AMG}{\propagate{\AMG}{\Cof{\loc}}}$ on every path from $\AMGroot$ we can conclude that for all $i\in \{1, \dots, k\}$, $n_i \not \in \subTree{\AMG}{\propagate{\AMG}{\Cof{\loc}}}$, so $n_i \in N_k$.
    
    By eq.~\eqref{eq:not-d2-follows-d1}, we have that for all $i\in \{1,\dots, k\}$, $n_i\in \defended{\AMG}{d_1}$ or $n_i\not\in \defended{\AMG}{d_2}$.
        \begin{itemize}
            \item If $n_i \not\in \defended{\AMG}{d_2}$, then $n_i\in N_k \setminus \defended{\AMG}{d_2} \subseteq M_k$ by $\mathcal H^1_k$.
            \item Otherwise, $n_i \in \defended{\AMG}{d_1} \cap \defended{\AMG}{d_2}$, so $n_i \in N_k \cap \defended{\AMG}{d_1} \cap \defended{\AMG}{d_2} \subseteq M_k$ by $\mathcal H^2_k$.
        \end{itemize}
        In any cases, for all $i\in \{1,\dots, k\}$, $n_i \in M_k$ so $n \in M_{k+1}$.
\end{itemize}
We treated all the cases and proved $\mathcal H^2_{k+1}$.

To summarize, we proved the following
\begin{equation*}
\left\{
\begin{matrix}
    &\Cof{\loc} \setminus (\defended{\AMG}{d_1} \cup \defended{\AMG}{d_2}) \subseteq \Cof{\loc_2} \setminus \defended{\AMG}{d_2}\\
    &\Cof{\loc_2} \setminus \defended{\AMG}{d_2} \subseteq \Cof{\loc_1}
\end{matrix}
\right.
\end{equation*}
We need to notice that with Lemma~\ref{lemma:subtree-subtree}, we have
\begin{equation*}
    \subTree{\AMG}{\propagate{\AMG}{\Cof{\loc}\setminus(\defended{\AMG}{d_1}\cup \defended{\AMG}{d_2})} \cap \noDef} = \subTree{\AMG}{\Cof{\loc_1} \cap \noDef}
\end{equation*}
So using eq.~\eqref{eq:def-cofl1} and~\eqref{eq:def-cofl3}.
\begin{align*}
    \Cof{\loc_1} &= \propagate{\AMG}{\Cof{\loc}\setminus(\defended{\AMG}{d_1}\cup \defended{\AMG}{d_2})} \setminus \subTree{\AMG}{\Cof{\loc_1} \cap \noDef}\\
    &\subseteq \propagate{\AMG}{\Cof{\loc_2}\setminus\defended{\AMG}{d_2}} \setminus \subTree{\AMG}{\Cof{\loc_2}\setminus \defended{\AMG}{d_2}\cap \noDef}\\
    &\subseteq \propagate{\AMG}{\Cof{\loc_2}\setminus\defended{\AMG}{d_2}} \setminus \subTree{\AMG}{\propagate{\AMG}{\Cof{\loc_2}\setminus \defended{\AMG}{d_2}}\cap \noDef}\\
    &= \Cof{\loc_3}\\
    \Cof{\loc_3} &= \propagate{\AMG}{\Cof{\loc_2}\setminus\defended{\AMG}{d_2}} \setminus \subTree{\AMG}{\propagate{\AMG}{\Cof{\loc_2}\setminus \defended{\AMG}{d_2}}\cap \noDef}\\
    &\subseteq \propagate{\AMG}{\Cof{\loc_1}} \setminus \subTree{\AMG}{\propagate{\AMG}{\Cof{\loc} \setminus (\defended{\AMG}{d_1} \cup \defended{\AMG}{d_2})}\cap \noDef}\\
    &= \Cof{\loc_1}
\end{align*}
So we have $\Cof{\loc_1} = \Cof{\loc_3}$.

We want to show $\Aof{\loc_1} = \Aof{\loc_3}$. By Proposition~\ref{prop:subtree-defense}
\begin{align*}
    \subTree{\AMG}{\propagate{\AMG}{\Cof{\loc}}\cap \noDef} &= \subTree{\AMG}{\propagate{\AMG}{\Cof{\loc} \setminus \defended{\AMG}{d_1}}\cap \noDef}\\
    &= \subTree{\AMG}{\propagate{\AMG}{\Cof{\loc} \setminus \defended{\AMG}{d_1}} \subTree{\AMG}{\propagate{\AMG}{\Cof{\loc} \setminus \defended{\AMG}{d_1}} \cap \noDef}\cap \noDef}\\
    &= \subTree{\AMG}{\Cof{\loc_2}\cap \noDef}\\
    &= \subTree{\AMG}{\propagate{\AMG}{\Cof{\loc_2}} \setminus\subTree{\AMG}{\propagate{\AMG}{\Cof{\loc_2}} \cap \noDef}\cap \noDef}\\
    &=\subTree{\AMG}{\propagate{\AMG}{\Cof{\loc_2}}\cap \noDef}
\end{align*}
Furthermore, using the definition definition
\begin{align*}
    \Aof{\loc_1} &= \Aof{\loc} \setminus \left(\subTree{\AMG}{\propagate{\AMG}{\Cof{\loc} \setminus (\defended{\AMG}{d_1}\cup \defended{\AMG}{d_2})}\cap \noDef} \cup \propagate{\AMG}{\Cof{\loc} \setminus (\defended{\AMG}{d_1}\cup \defended{\AMG}{d_2}})\right)\\
    &= \Aof{\loc} \setminus \left(\subTree{\AMG}{\propagate{\AMG}{\Cof{\loc}}\cap \noDef} \cup \Cof{\loc_1}\right)\\
    \Aof{\loc_2} &= \Aof{\loc} \setminus \left(\subTree{\AMG}{\propagate{\AMG}{\Cof{\loc} \setminus \defended{\AMG}{d_1}}\cap \noDef} \cup \propagate{\AMG}{\Cof{\loc} \setminus \defended{\AMG}{d_1}}\right)\\
    &= \Aof{\loc} \setminus \left(\subTree{\AMG}{\propagate{\AMG}{\Cof{\loc}}\cap \noDef} \cup \Cof{\loc_2}\right)\\
    \Aof{\loc_3} &= \Aof{\loc_2} \setminus \left(\subTree{\AMG}{\propagate{\AMG}{\Cof{\loc_2} \setminus \defended{\AMG}{d_2}}\cap \noDef} \cup \propagate{\AMG}{\Cof{\loc_2} \setminus \defended{\AMG}{d_2}}\right)\\
    &= \Aof{\loc} \setminus \left(\subTree{\AMG}{\propagate{\AMG}{\Cof{\loc}}\cap \noDef} \cup \Cof{\loc_2} \cup \Cof{\loc_3}\right)\\
\end{align*}

We have $\Aof{\loc} = \Aof{\simpleState{\loc}}$ so $\Aof{\loc} = \Aof{\loc} \setminus ( \Cof{\loc} \cup \subTree{\AMG}{\propagate{\AMG}{\loc} \cap \noDef})$. As $\Cof{\loc_1}$, $\Cof{\loc_2}$ and $\Cof{\loc_3}$ are all included in  $\Cof{\loc}$, we have $\Aof{\loc_1} = \Aof{\loc_3}$.
\end{proof}\else\begin{sketchproof}
The intuition is that, if eq.~\eqref{eq:db-def} does not hold,
then $d_1 \triangleright d_2$ because the propagation operator to compute the state $\loc \setminus (\defended{\AMG}{d_1}, \defended{\AMG}{d_1}) = \pruneDef{\AMG} \circ \propagate{\AMG}{\Aof{\loc} \setminus \defended{\AMG}{d_1}, \Cof{\loc} \setminus \defended{\AMG}{d_1}}$ uses completed nodes in $\defended{\AMG}{d_2}$.
\end{sketchproof}\fi

\ifdefined\ALLPROOFS\begin{proof}
Suppose $\brac{\Defenses, \{(d_1, d_2) \in \Defenses \times \Defenses \mid d_1 \triangleright d_2\}}$ has no cycle. This means that there is no chain of the form $d_1\triangleright d_2 \triangleright \dots \triangleright d_k \triangleright d_1$. So we can choose i such that $d_j \ntriangleright d_i$ for all the $j\in [1,\dots, k]$ (notice that $d \ntriangleright d$ always holds). Let $d$ be a new defense such that $\defended{\AMG}{d} = \cup_{j\in \{1,\dots, k\}\setminus \{i\}} \defended{\AMG}{d_j}$. We have $d\ntriangleright d_i$, so, by Lemma~\ref{lemma:defense-triangle}, it holds
\begin{equation*}
    \loc\setminus(\cup_{j = 1}^k \defended{\AMG}{d_j}, \cup_{j=1}^k \defended{\AMG}{d_j}) = (\loc \setminus(\cup_{j\in \{1, \dots, k\}\setminus\{i\}} \defended{\AMG}{d_j}, \cup_{j\in \{1, \dots, k\}\setminus\{i\}} \defended{\AMG}{d_j})) \setminus (\defended{\AMG}{d_i}, \defended{\AMG}{d_i})
\end{equation*}
And recursively we can always activate one defense at a time.
\end{proof}\else\begin{sketchproof}
By induction on Lemma~\ref{lemma:defense-triangle}, this proves the theorem.
\end{sketchproof}\fi
}
\shrinkalt{We refer to~\cite{amg} for the proof.}{} Now, we impose that the input \gls{amg} $\AMG$ verifies that the directed graph of the relation $\triangleright$, that is $\brac{\Defenses, \{(d_1, d_2) \in \Defenses \times \Defenses \mid d_1 \triangleright d_2\}}$, has no cycle.
So if at some point the \glspl{mtd} $d_1,\dots, d_k \in \Defenses$ are successfully activated at a given time, we can evaluate them sequentially starting with $d_i$ where it holds that  $d_j\ntriangleright\  d_i$ for all the $j\in \{1,\dots, k\}$.
\subsection{Environment density}
Let $\loc \in \Locations_\AMG$, $\delay\in \Delays$, $v\in \Valuations$ be a valid valuation, and
\shrinkalt{
\begin{align*}
    A^\delay_{(\loc,v)} &= \{a\in \Aof{\loc} \mid v(x_{a}) + \delay = \timefunc_{a}\}\\
    D^\delay_{(\loc,v)} &= \{d\in \Defenses \mid v(x_{d}) + \delay = \timefunc_{d}\land \forall d' \in \Defenses, d\triangleright d' \Rightarrow v(x_{d'}) + \delay \neq \timefunc_{d'}\}\\
    \coefmu{\delay}{\loc,v} &= 1/\vert A^\delay_{(\loc,v)}\cup D^\delay_{(\loc,v)}\vert
\end{align*}
be respectively the set of activated atomic attacks completed after the delay $\delay$, the set of \glspl{mtd} $d$ activated after $\delay$ s.t. any other \gls{mtd} $d'$ in relation $d \triangleright d'$ is not active after the same delay, and, the inverse of their number of elements. By convention, $1/0=0$ in $\coefmu{\delay}{\loc,v}$.
}{
\begin{align*}
    A^\delay_{(\loc,v)} &= \{a\in \Aof{\loc} \mid v(x_{a}) + \delay = \timefunc_{a}\}\\
    D^\delay_{(\loc,v)} &= \{d\in \Defenses \mid v(x_{d}) + \delay = \timefunc_{d}\land \forall d' \in \Defenses, d\triangleright d' \Rightarrow v(x_{d'}) + \delay \neq \timefunc_{d'}\}\\
\end{align*}
be respectively the set of activated atomic attacks completed after the delay $\delay$, and the set of \glspl{mtd} $d$ activated after $\delay$ s.t. any other defense $d'$ in relation $d \triangleright d'$ is not active after the same delay. Moreover, we define $\coefmu{\delay}{\loc,v}$, the inverse of their number of elements if it exists.
\begin{equation*}
    \coefmu{\delay}{\loc,v} = \begin{cases}
    0 &\text{if } \vert A^\delay_{(\loc,v)}\cup D^\delay_{(\loc,v)}\vert = 0,\\
    \frac{1}{\vert A^\delay_{(\loc,v)}\cup D^\delay_{(\loc,v)}\vert} & \text{otherwise.}
    \end{cases}
\end{equation*}
}
For $a\in\AtomicAttacks$, $d\in \Defenses$, we define $\density_\AMG{(\loc, v)}$ as,
\shrinkalt{
\begin{align*}
    \density_\AMG{(\loc, v)}(\delay, \mtd_{d}) &= \coefmu{\delay}{\loc,v} \proba_{d} \dirac{v(x_{d}) + \delay - \timefunc_{d}}\\
    \density_\AMG{(\loc, v)}(\delay, \mtdFail_{d}) &= \coefmu{\delay}{\loc,v} (1-\proba_{d}) \dirac{v(x_{d}) + \delay - \timefunc_{d}}\\
    \density_\AMG{(\loc, v)}(\delay, \completion_{a}) &= \coefmu{\delay}{\loc,v} \proba_{a} \dirac{v(x_{a}) + \delay - \timefunc_{a}}\\
    \density_\AMG{(\loc, v)}(\delay, \completionFail_{a}) &= \coefmu{\delay}{\loc,v} (1-\proba_{a}) \dirac{v(x_{a}) + \delay - \timefunc_{a}}
\end{align*}
}{
\begin{align}
    \density_\AMG{(\loc, v)}(\delay, \mtd_{d}) &= \coefmu{\delay}{\loc,v} \proba_{d} \dirac{v(x_{d}) + \delay - \timefunc_{d}}\label{eq:mu1}\\
    \density_\AMG{(\loc, v)}(\delay, \mtdFail_{d}) &= \coefmu{\delay}{\loc,v} (1-\proba_{d}) \dirac{v(x_{d}) + \delay - \timefunc_{d}}\label{eq:mu2}\\
    \density_\AMG{(\loc, v)}(\delay, \completion_{a}) &= \coefmu{\delay}{\loc,v} \proba_{a} \dirac{v(x_{a}) + \delay - \timefunc_{a}}\label{eq:mu3}\\
    \density_\AMG{(\loc, v)}(\delay, \completionFail_{a}) &= \coefmu{\delay}{\loc,v} (1-\proba_{a}) \dirac{v(x_{a}) + \delay - \timefunc_{a}}\label{eq:mu4}
\end{align}
}
where $\diracDef$ is the Dirac distribution used for discrete probabilities.
This density function reflects that 
the uncontrollable actions satisfying their activation condition after a delay $\delay$ are chosen with uniform probability (through the use of $\coefmu{\delay}{\loc,v}$). The probability of success (resp. failure) is chosen with probability $\proba_a$ (resp. $1-\proba_a$) for an atomic attack $a$, and $\proba_d$ (resp. $1-\proba_d$) for an \gls{mtd} $d$.


\else 
Now we have all the elements to present a restriction on our structure before exhibiting the last element $\density_\AMG$ of $\OutPTMDP$.

\subsubsection{A restriction on the \glsfmtshort{amg}.}

\subsubsection{Environment's density.}

\fi

\shrinkalt{}{Finally, we built $\OutPTMDP = \OutPTMDPDef$ from an input \gls{amg} $\AMG$.}

\shrinkalt{
\mysubsection{Using the \glsfmtshort{ptmdp} for \glsfmtshort{mtd}.}
Given an \gls{amg} $\AMGgeneral$ and its associated \gls{ptmdp} $\PTMDP_\AMG$, the goal of the attacker is to reach the \gls{ptmdp} state $\PTMDPgoal = [\emptyset, \{\AMGroot\}]$ (we assume there is no \gls{mtd} on $\AMGroot$) representing the completion of the main goal $\AMGroot$.
\shrinkalt{}{The goal node $\PTMDPgoal$ is $[\emptyset, \{\AMGroot\}]$ because we assume that there is no defense on $\AMGroot$ (indeed, once $\AMGroot$ is reached, the game ends, so the defenses are useless here). It implies $\subTree{\AMG}{\{\AMGroot\}} = \Nodes \setminus \{\AMGroot\}$ and as a result, for any set $A\subseteq \AtomicAttacks$ and $C\subseteq \Nodes$ where $\AMGroot \in C$, we have $\simpleState{A,C} = [\emptyset, \{\AMGroot\}]$.}
We can evaluate and optimize an attacker strategy $\strategy{}$ on $\PTMDP_\AMG$.
Indeed, $\strategy{}$ generates a probability measure $\runProba{\PTMDP_\AMG}{ \strategy{}}$ on subsets of $\Runs_0$ (c.f. Section~\ref{sec:background}). 
We define $\RunsGoal$ the subset of $\Runs$ s.t. the final location is the goal node $\PTMDPgoal$, i.e., $\RunsGoal = \{(q_i, e_i, t_i, c_i, q_i')_{i \in \{1,\dots,k\}} \in \Runs^k \mid k \in \mathbb{N} \land \exists v \in \Valuations, q_k = (\PTMDPgoal, v)\}$ and two random variables $\cumulativeTimeGoal$ and $\cumulativeCostGoal$ giving the attack time and attack cost in the following way. For a run $r\in \Runs_0$, \shrinkalt{
and $r_1$ the smallest run  (if exists) in $\RunsGoal$ s.t. $r=r_1\cdot r_2$ with $r_2\in \Runs$, $\cumulativeTimeGoal(r) = \cumulativeTime{r_1}$ (resp. $\cumulativeCostGoal(r) = \cumulativeCost{r_1}$) if $r_1$ exists or $\cumulativeTimeGoal(r) = \infty$ (resp. $\cumulativeCostGoal(r) = \infty$).
}{
\begin{align*}
    \cumulativeTimeGoal(r) &= \min_{\substack{r_1 \in \RunsGoal, r_2\in \Runs\\\text{s.t.}\ r = r_1\cdot r_2}}\cumulativeTime{r_1}\\
    \cumulativeCostGoal(r) &= \min_{\substack{r_1 \in \RunsGoal, r_2\in \Runs\\\text{s.t.}\ r = r_1\cdot r_2}}\cumulativeCost{r_1}
\end{align*}
Where the minimum of an empty set is the infinity.
}
Notice that $\runExpect{\PTMDP_\AMG}{ \strategy{}}[\cumulativeTimeGoal]$ (resp. $\runExpect{\PTMDP_\AMG}{ \strategy{}}[\cumulativeCostGoal]$) does not exist if $\runProba{\PTMDP_\AMG}{ \strategy{}}[\cumulativeTimeGoal = \infty] > 0$ (resp. $\runProba{\PTMDP_\AMG}{ \strategy{}}[\cumulativeCostGoal = \infty] > 0$) and by convention if $\runProba{\PTMDP_\AMG}{ \strategy{}}[\cumulativeTimeGoal = \infty] = 0$ (resp. $\runProba{\PTMDP_\AMG}{ \strategy{}}[\cumulativeCostGoal = \infty] = 0$) we consider that $\infty \times 0 = 0$ in the computation of the expected value.

Suppose the defender prefers distributions according to their expected values (this is not the only way to compare distributions c.f.~\cite{rass2015game}, maybe the defender wants a very low variance). Then, the most dangerous attacker would have strategies minimizing the expected values $\runExpect{\PTMDP_\AMG}{\strategy}[\cumulativeTimeGoal]$ and $\runExpect{\PTMDP_\AMG}{\strategy}[\cumulativeCostGoal]$. These optimal points draw the Pareto frontier, that is, the set of points s.t. decreasing the expected attack time (resp. cost) would increase the expected attack cost (resp. time). As a result, we are interested in computing the Pareto frontier to see the impact of the defenses.

}{
\section{Using the \glsfmtshort{ptmdp} for \glsfmtshort{mtd}}\label{sec:using}

}




\section{Experiment and Discussion}\label{sec:experiment}
\textsc{Uppaal Stratego}~\cite{david2015uppaal} can be used to compute strategies to solve a cost/time-bounded reachability objective with near-optimal cost or time (separately). We had to make some adjustments, described \shrinkalt{in~\cite{amg}}{in Section~\ref{sec:translation}}, to express the \gls{ptmdp} as a \textsc{Uppaal} structure.
\shrinkalt{}{
We present the use case in Section~\ref{sec:usecase} and discuss the results in Section~\ref{sec:discussion}.
\subsection{Translation in \textsc{Uppaal Stratego}}\label{sec:translation}
\begin{figure}
    \centering
    \subfloat[ADT]{
    \begin{tikzpicture}[
            -,
            auto,
            semithick,
            xscale=0.6,
            yscale=0.6
        ]
        
        \node[goal, label=below:{$\lor$}] at (0, 0) (g0) {$\AMGroot$};
        \node[attack, label={}] at (-2, -2) (a0) {$a_0$};
        \node[defense, label={}] at (-3, -1) (d0) {$d_0$};
        \node[attack, label={}] at (2, -2) (a1) {$a_1$};
        
        \path (g0) edge (a0) edge (a1);
        \path (d0) edge [dashed] (a0);
        
        \matrix [draw, column sep=5pt, row sep=2pt, inner sep=2pt, ampersand replacement=\&, column 1/.style={anchor=base}, column 2/.style={anchor=base west}] at (7, -1) {
        \node [goal, minimum size=1em] {\phantom{a}}; \& \node[scale=0.8] {subgoal};  \\
        \node [attack, minimum size=1.3em] {\phantom{a}}; \& \node[scale=0.8] {atomic attack}; \\
        \node [defense, minimum size=1em] {\phantom{a}}; \& \node[scale=0.8] {\glsfmtshort{mtd}}; \\
        };
    \end{tikzpicture}
    \label{fig:simple-adt}
    }\qquad
    \subfloat[Attributes]{
    \begin{tabular}{|c|c|c|c|}
        \hline
        $\timefunc_{a_0}$ & $\proba_{a_0}$ & $\cost_{a_0}$ & $\propcost_{a_0}$ \\
        \hline
        20 & 1 & 10 & 0 \\
        \hline
        \hline
        $\timefunc_{a_1}$ & $\proba_{a_1}$ & $\cost_{a_1}$ & $\propcost_{a_1}$ \\
        \hline
        10 & 0.5 & 0 & 2 \\
        \hline
    \end{tabular}
    \begin{tabular}{|c|c|}
        \hline
        $\timefunc_{d_0}$ & $\proba_{d_0}$\\
        \hline
        10 & 1 \\
        \hline
        \hline
        $\operation{g_0}$ & $\defense{g_0}$\\
        \hline
        $\lor$ & $a_0$ \\
        \hline
    \end{tabular}
    \label{tab:simple-adt}
    }
    
    \cprotect\caption[Simple \glsfmtshort{amg} with its attributes.]{Simple \gls{amg} in \figurename~\subref{fig:simple-adt} and its attributes in \tablename~\subref{tab:simple-adt} that are used for the \gls{ptmdp} in \figurename~\ref{fig:ex-ADMDP}.}
    \label{fig:ex-ADMDP-attributes}
\end{figure}

\begin{figure}
    \includegraphics[width=\columnwidth]{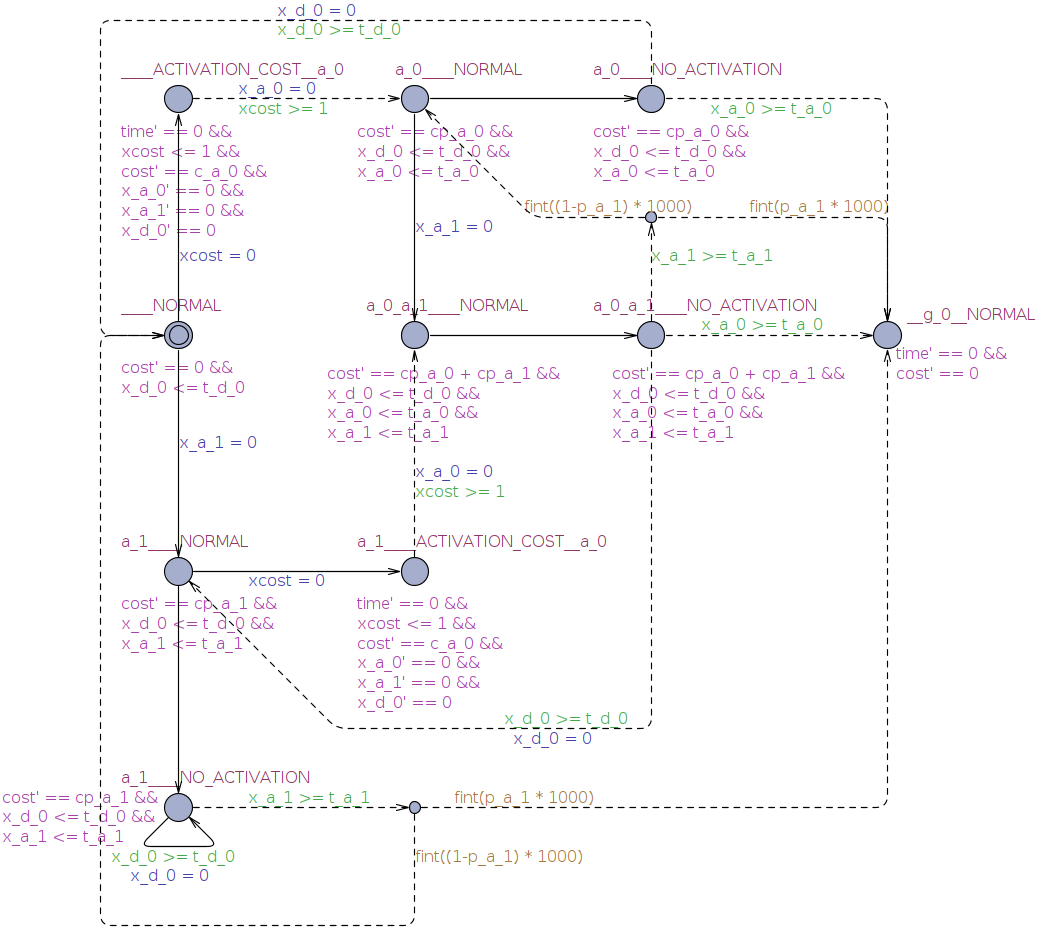}
    \cprotect\caption[\glsfmtshort{ptmdp} obtained from the simple \glsfmtshort{amg} in \figurename~\ref{fig:ex-ADMDP-attributes}.]{\gls{ptmdp} $\PTMDP_\AMG$ obtained from the simple \gls{amg} $\AMG$ in \figurename~\ref{fig:simple-adt} with the attributes given in Table~\ref{tab:simple-adt}. The location names are given by a string of the form \verb+<A>__<C>__<TYPE>+ where \verb+A+ is the set of activated nodes \verb+C+ is the set of completed nodes and \verb+TYPE+ is the type of node which can be \verb+ACTIVATION_COST_<a>+ for the activation of an atomic attack where \verb+a+ is the newly activated atomic attack, \verb+NO_ACTIVATION+ for the location where no more activation are permitted, or \verb+NORMAL+ for the location where activations are still permitted. Starting from the initial location \verb+____NORMAL+ (location with a circle in UPPAAL), the upward transition leads to the activation of $a_0$ with an ``activation cost'' node (sice $\cost_{a_0} \neq 0$), and the downward transition leads directly to the activation of $a_1$ (since $\cost_{a_1}=0$). The location \verb+a_0____NORMAL+ is reached from \verb+____ACTIVATION_COST_a_0+ after one unit of time (thanks to the constraint \verb+xcost >= 1+ and the invariant \verb+xcost <= 1+) where the cost has been increased by \verb+w_a_0+. From \verb+a_0____NORMAL+, the attacker can decide to activate $a_1$ as well (downward transition) or not (right transition). For instance, if the attacker does not activate $a_1$ (right transition), then when $x_{a_0} \geq \timefunc_{a_0}$ the transition is taken, and the new state is the rightmost state \verb+__g_0__NORMAL+, that is the final location $\AMGroot$.}
    \label{fig:ex-ADMDP}
\end{figure}

We had to make some adjustments to express the \gls{ptmdp} as a \textsc{Uppaal} structure. 
First, \textsc{Uppaal Stratego}'s strategy space for the controllable player is limited to memoryless non-lazy strategies. This means that the strategy depends only on the current state and does not delay a possible transition. As a result, the action of activating no more atomic attacks should be a transition. So for each location $\loc\in \Locations$, we had the extra location $\loc_\text{no activation}$ and a transition without cost, guard, or clock reset $(\loc,\varepsilon, \noactivation,\loc_\text{no activation})$. The \gls{mtd} activation edges and the atomic attack completion edges start from these ``no activation'' locations.


Second, \textsc{Uppaal} uses clocks with a dynamic rate called \textit{hybrid clocks}. We will use a hybrid clock (\verb+cost+) to compute the cost. However, these hybrid clocks cannot be incremented (as in \verb|cost = cost + c_a|) while it is supposed to happen when activating an atomic attack with cost \verb+c_a+.
Instead, we create a new intermediary state for each activation transition, say $(\loc, \varepsilon, \activate_a, \loc')$, where we stay exactly one unit of time (this is achieved with a new clock \verb+xcost+). The cost rate of this intermediary state is equal to the cost of activation, that is \verb|cost' == c_a|, and all other clocks rates (atomic attacks clocks, \glspl{mtd} clocks, time clock) are set to zero.

Third, we need to express the stochasticity of the environment transitions $\density{}$ with the semantic of \textsc{Uppaal Stratego}. We remind that eq.~\eqref{eq:mu1} to~\eqref{eq:mu4} define the density $\density{(\loc, v)}$ for a given location $\loc\in\Locations$ and valuation $v\in\Valuations$. As the density is null before reaching the deadline of an \gls{mtd} $d$ activation (\textit{resp.} an atomic attack $a$ completion), we add the guard $x_d \geq \timefunc_d$ (\textit{resp.} $x_a\geq \timefunc_a$) to the outgoing \gls{mtd} activation transition (\textit{resp.} atomic attack completion transition). 
If several \gls{mtd} activations or atomic attack completion transitions (in $\Transitions^\mathsf{mtd}\cup \Transitions^\mathsf{cmp}$) are activated simultaneously, \textsc{Uppaal Stratego} assumes that one of them is chosen uniformly. this corresponds to the use of $\coefmu{l}{\loc,v}$ in eq.~\eqref{eq:mu1} to~\eqref{eq:mu4}. Then we use a \textsc{Uppaal} branchpoint to succeed the \gls{mtd} $d$ activation with probability $\proba_d$ and fail with probability $1-\proba_d$, and to succeed the atomic attack $a$ completion with probability $\proba_a$ and fail with probability $1-\proba_a$.

Fourth, we have to force \textsc{Uppaal} to take a defense activation transition for an \gls{mtd} $d_1$ only if there is no \gls{mtd} $d_2$ s.t $d_1\triangleright d_2$ that is activated at the same time. To encode this, for every locations $\loc, \loc' \in \Locations$, defense $d_1$, and activation transitions $(\loc, x_{d_1} \geq \timefunc_{d_1}, \mtd_{d_1}, \loc')$ and $(\loc, x_{d_1} \geq \timefunc_{d_1}, \mtdFail_{d_1}, \loc')$, we add the clock constraint $x_{d_2} < \timefunc_{d_2}$ whenever the \gls{mtd} $d_2$ verifies $d_1 \triangleright d_2$. As an example, if there is only one \gls{mtd}, say $d_2$, s.t. $d_1 \triangleright d_2$, then the transitions become $(\loc, x_{d_1} \geq \timefunc_{d_1} \land x_{d_2} < \timefunc_{d_2}, \mtd_{d_1}, \loc')$ and $(\loc, x_{d_1} \geq \timefunc_{d_1} \land x_{d_2} < \timefunc_{d_2}, \mtdFail_{d_1}, \loc')$.

\figurename~\ref{fig:ex-ADMDP} displays the automatic translation of the \gls{amg} in \figurename~\ref{fig:simple-adt} with the attributes of \tablename~\ref{tab:simple-adt}.
\mysubsection{Use Case}\label{sec:usecase}}
\begin{table}
    \centering
    \small
    \begin{tabular}{c|ccccccc|cccc|}
        \cline{2-12}
         & $a_{ad}$ & $a_{ic}$ & $a_{sp}$ & $a_{p}$ & $a_{bf}$ & $a_{ss}$ & $a_{fue}$ & $d_{dk}$ & $d_{cp}$ & $d_{cc}$ & $d_{dsr}$\\
        \hline
        \multicolumn{1}{|c|}{$\timefunc$}      & 8   & 4   & 440 & 1   & 1     & 30  & 720 & \multicolumn{4}{c|}{to optimize}\\
        \multicolumn{1}{|c|}{$\proba$}      & 0.5 & 0.3 & 0.8 & 1   & 0.001 & 0.2 & 0.8 & 1 & 0.5 & 1 & 1\\
        \multicolumn{1}{|c|}{$\cost$}  & 10  & 0   & 20  & 0   & 0     & 10  & 10  &&&&\\
        \multicolumn{1}{|c|}{$\propcost$} & 20  & 5 0 & 0   & 100 & 1     & 0   & 0   &&&&\\
        \hline
    \end{tabular}
    \caption[Attributes for the atomic attacks and \glsfmtshortpl{mtd} from the \glsfmtshort{amg} in \figurename~\ref{fig:example-adt}.]{Attributes for the atomic attacks and \glspl{mtd} from the \gls{amg} in \figurename~\ref{fig:example-adt}.}
    \label{tab:values-attributes}
\end{table}

We implement the \gls{amg} $\AMG$ in \figurename~\ref{fig:example-adt} with the attributes given in \tablename~\ref{tab:values-attributes} and translate it into a \textsc{Uppaal Stratego} model. 
We aim to draw the Pareto frontier of optimal expected attack time and cost. However, \textsc{Uppaal Stratego} can only find the \shrinkalt{}{memoryless non-lazy} strategies minimizing the following conditional expected values \shrinkalt{
$\mathbb{E}_{\PTMDP_\AMG, \strategy}[\cumulativeTimeGoal \mid \cumulativeTimeGoal < \timefunc_{\max}]$, $\mathbb{E}_{\PTMDP_\AMG, \strategy}[\cumulativeCostGoal \mid \cumulativeTimeGoal < \timefunc_{\max}]$, $\mathbb{E}_{\PTMDP_\AMG, \strategy}[\cumulativeTimeGoal \mid \cumulativeCostGoal < \cost_{\max}]$, and $\mathbb{E}_{\PTMDP_\AMG, \strategy}[\cumulativeCostGoal \mid \cumulativeCostGoal < \cost_{\max}]$
}{
\begin{align*}
    &\mathbb{E}_{\PTMDP_\AMG, \strategy}[\cumulativeTimeGoal \mid \cumulativeTimeGoal < \timefunc_{\max}]\\
    &\mathbb{E}_{\PTMDP_\AMG, \strategy}[\cumulativeCostGoal \mid \cumulativeTimeGoal < \timefunc_{\max}]\\
    &\mathbb{E}_{\PTMDP_\AMG, \strategy}[\cumulativeTimeGoal \mid \cumulativeCostGoal < \cost_{\max}]\\
    &\mathbb{E}_{\PTMDP_\AMG, \strategy}[\cumulativeCostGoal \mid \cumulativeCostGoal < \cost_{\max}]
\end{align*}
}
with time limit $\timefunc_{\max}$ and cost limit $\cost_{\max}$~\cite{david2014time}.

We vary these limits to explore the different minimizing strategies. The minimal value, say $\mathbb{E}_{\PTMDP_\AMG, \strategy}[\cumulativeTimeGoal \mid \cumulativeTimeGoal < \timefunc_{\max}]$ in the first case, is not useful if we are not provided the probability of the associated condition, here $\mathbb{P}_{\PTMDP_\AMG, \strategy}[\cumulativeTimeGoal < \timefunc_{\max}]$. Indeed, If $\mathbb{E}_{\PTMDP_\AMG, \strategy}[\cumulativeTimeGoal \mid \cumulativeTimeGoal < \timefunc_{\max}]$ is one hour, we could think that there is a major attack path. But if the associated probability $\mathbb{P}_{\PTMDP_\AMG, \strategy}[\cumulativeTimeGoal < \timefunc_{\max}]$ is very low, then this attack is very unlikely to succeed. \shrinkalt{}{For instance, the attacker can break a system in one minute if he guesses the admin password at the first try, but this is very unlikely to happen.}
Consequently, reasoning with the conditional probabilities, we should draw a Pareto surface in the three-dimension space of conditional expected time, conditional expected cost, and probability of the condition.
\shrinkalt{}{This Pareto surface contains more information than the two-dimension Pareto frontier of expected time and cost. Indeed, the 2D frontier is the cut of the surface for a conditional probability axis equal to one.}
However, we will not reason on the 3D surface because we do not control the probability of the condition in \textsc{Uppaal Stratego} strategy optimization and we would need exponentially more points to draw the surface instead of the frontier. To simplify, we assume that a strategy minimizing the conditional expected value might be a strategy giving non-conditional expected values close to the expected cost/time Pareto frontier. This is a strong assumption, and finding a better optimization method is a necessary future work.
\shrinkalt{}{By varying the time and cost bounds we extract optimal strategies and plot their unconditional expected time and cost (\figurename~\ref{subfig:result-one-conf}). Repeating this procedure for different \gls{mtd} activation frequencies we can compare the different Pareto frontiers (\figurename~\ref{subfig:results})\footnote{To reproduce the experiment: \url{https://github.com/gballot/mtd}.}.}

\shrinkalt{
\begin{figure}[t]
    \centering
    \includegraphics[width=0.95\textwidth]{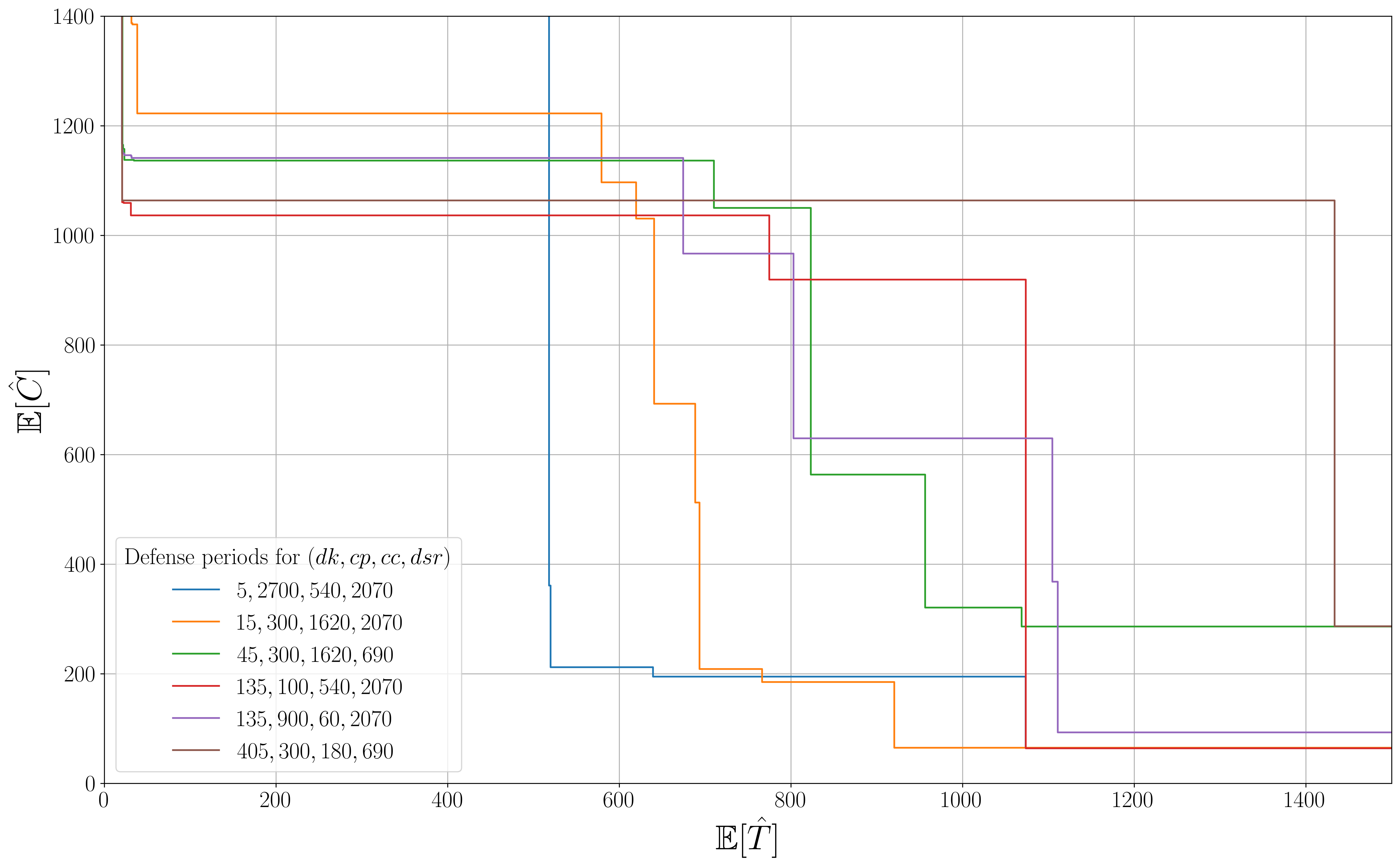}
    \cprotect\caption[Results from the \glsfmtshort{amg} of \figurename~\ref{fig:example-adt} with attributes in \tablename~\ref{tab:values-attributes}.]{We consider the \gls{amg} of \figurename~\ref{fig:example-adt} with attributes in \tablename~\ref{tab:values-attributes}. We impose \shrinkalt{$\log(t_{d_{dk}} t_{d_{cp}} t_{d_{cc}} t_{d_{dsr}})$ to be constant for the different defensive configurations}{the defense periods to verify $\log_3(t_{d_{dk}}/5) + \log_3(t_{d_{cp}}/100) + \log_3(t_{d_{cc}}/20) + \log_3(t_{d_{dsr}}/230) = 8$} to simulate a defensive budget.
    }
    \label{fig:results}
\end{figure}

}{
\begin{figure}
    \centering
    \subfloat[The Pareto frontier for a specific defensive configuration.]{
    \includegraphics[width=0.95\textwidth]{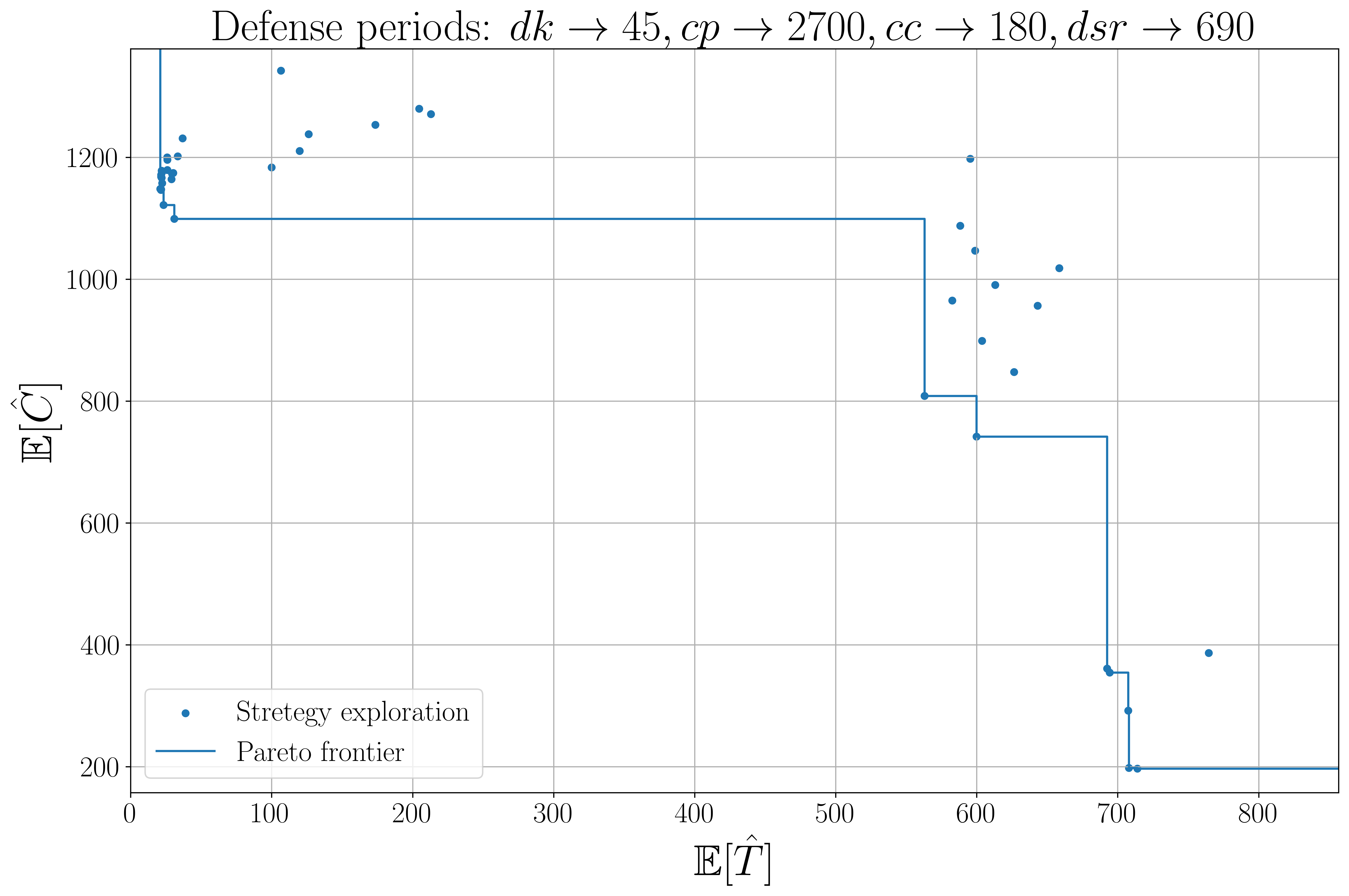}
    \label{subfig:result-one-conf}
    }
    
    \subfloat[Different Pareto frontiers for a defensive budget of $8$.]{
    \includegraphics[width=0.95\textwidth]{images/fig-some.png}
    \label{subfig:results}
    }
    \cprotect\caption[Results from the \glsfmtshort{amg} of \figurename~\ref{fig:example-adt} with attributes in \tablename~\ref{tab:values-attributes}.]{We consider the \gls{amg} of \figurename~\ref{fig:example-adt} with attributes in \tablename~\ref{tab:values-attributes}. In~\subref{subfig:results}, we impose the defense periods to verify $\log_3(t_{d_{dk}}/5) + \log_3(t_{d_{cp}}/100) + \log_3(t_{d_{cc}}/20) + \log_3(t_{d_{dsr}}/230) = 8$ to simulate a defensive budget.
    }
    \label{fig:results}
\end{figure}

}

\shrinkalt{}{\mysubsection{Discussion}\label{sec:discussion}}
\shrinkalt{}{First we discusse the result of our methodology on the use case. Then we discuss more generally the output of this paper.

\subsection{On the Use Case}}

\shrinkalt{We}{The result of the use case is given in \figurename~\ref{subfig:result-one-conf} for a particular defensive configuration.
We also} report\footnote{To reproduce the experiment: \url{https://github.com/gballot/mtd}.} the Pareto frontiers for different sets of \gls{mtd} activation periods in \figurename~\shrinkalt{\ref{fig:results}}{\ref{subfig:results}}.
Reasoning about the \gls{amg} in \figurename~\ref{fig:example-adt} and the attributes in Table~\ref{tab:values-attributes}, we notice a fast and costly attack with the atomic attack $a_{ad}$ and defended by the \gls{mtd} $d_{dk}$, a medium-fast medium-costly attack with the atomic attacks $a_{sp}$, $a_{p}$, and the subgoal $g_{ac}$ defended by the \glspl{mtd}$d_{cp}$ and $d_{cc}$, and a long cheap attack with the atomic attack $a_{fue}$ and the subgoal $g_{ac}$ defended by the \gls{mtd} $d_{dsr}$. As expected, the frontiers with small period for $d_{dk}$ ($\timefunc_{d_{dk}} = 5$) limit the attack time to more than 500 time units even with unlimited cost (blue line). Furthermore, small period for $d_{cp}$ ($\timefunc_{d_{cp}}\leq 300$) is efficient in increasing the cost of long attacks to more than 200 cost units (frontiers in green and brown) provided that the cheap attack path is protected with $\timefunc_{d_{dsr}} < \timefunc_{a_{fue}} = 720$ (otherwise, we have the orange or red frontier with low cost for long attacks). We also notice that the \gls{mtd} $d_{cc}$ influences the cost of long attacks even when $\timefunc_{d_{dsr}}$ is high (purple line), but this influence is only about $40$ cost units for $\timefunc_{d_{cc}} = 60$.

This example is simple \shrinkalt{as}{and could be solved by hand as} the expected attack cost and time are increasing with each \gls{mtd} activation frequency. However, in more complex systems, this is not true. For example, when different \glspl{mtd} defend parent and child nodes, it could be better to have the same frequency for two \glspl{mtd} (so they are coupled) than having one \gls{mtd} slightly more frequent. This justifies that optimizing $(\timefunc_d)_{d\in \Defenses}$ cannot be component by component in the general case and need powerful tools like \glspl{ptmdp}. 

\shrinkalt{}{
\subsection{General Discussion}

The \gls{amg} is a suitable formalism for graphical modeling of attacks as it relies on the well-spread \gls{at} formalism. The attack scenario is modeled hierarchically, and multi-step complex attacks are expressed. It is convenient to attach \glspl{mtd} to nodes of the graph. The interpretation of the \gls{amg} as a \gls{ptmdp} reflects the stochasticity, cost, and time dependency of \glspl{mtd}. It allows to optimize the attacker strategy given a defensive configuration and to decide which one is best suited to defend the system. The simple use case points out the applicability and relevance of our method. The Pareto frontiers we extract from the \gls{amg} give a good insight into the impacts of the defender strategy (\ie the set of \gls{mtd} activation frequencies) on the best attacker strategies.

\subsubsection*{\glsfmtshort{amg} limitations.}

The \gls{amg} suffers from some limitations:
\begin{inparaenum}[(i)]
    \item \label{item:limit-identify} the \gls{amg} assumes that the user can identify the attacks and defenses and their attributes (probability, cost, and time),
    \item \label{item:limit-adt} nodes are defended by disjunctions of \glspl{mtd}, but we could nest the countermeasures and use be conjunctions as in \gls{adt},
    \item \label{item:limit-distribution} we only consider success or failure after a given time rather than general distributions.
\end{inparaenum}

To address the limitation~(\ref{item:limit-identify}) we could test the parameter robustness of the expected time and cost to see if small parameter changes induce a big difference in the computed values. The items~(\ref{item:limit-adt}) and~(\ref{item:limit-distribution}) are left for future work. 

\subsubsection*{\glsfmtshort{ptmdp} translation and optimization limitations.}

Moreover, the current method for \gls{mtd} activation frequency optimization has other limitations:
\begin{inparaenum}[(i)]
    \item \label{item:limit-uppaal}\textsc{Uppaal Stratego} solves limited types of objectives, leading us to make too strong assumptions about the problem (\cf Section~\ref{sec:usecase}),
    \item \label{item:limit-size} it does not scale to much larger problems due to the exponential size of the \gls{ptmdp} compared to the \gls{amg}, and
    \item \label{item:limit-mtds} as the number of \glspl{mtd} increases, the number of Pareto frontiers to analyze grows exponentially, leading the user to confusion.
\end{inparaenum}

To address the limitation~(\ref{item:limit-uppaal}), we will think about a better optimization process specific to the problem we want to achieve. For limitation~(\ref{item:limit-size}), we have to improve the construction and maybe limit the aspects we are dealing with (time, cost, and probability). For limitation~(\ref{item:limit-mtds}),  we could set design conditions (\eg a desired minimal Pareto frontier) so the optimizer only displays the configurations satisfying this minimal requirement.
}

\section{Related Work}\label{sec:RW}
A well-established formalism for \gls{at} with defenses is the \glsxtrfull{adt}~\cite{kordy2010foundations,kordy2014attack}. \gls{amg} is different from \gls{adt}.
On the one hand, the \gls{amg} restricts \gls{adt} because an \gls{adt} node can have disjunction and conjunction of countermeasures, which can be nested.
On the other hand, the \gls{amg} extends the \gls{adt} in two ways. First, there is an attribute on the inner 
\shrinkalt{nodes.}{nodes (the $\defense{g}$ for the subgoals $g \in \Goals$), that can also be seen as adding new refinements, namely $\{\land_D, \lor_D \mid D\subseteq \Defenses\}$.}
Second, the \gls{amg} allows a \gls{dag} structure for the nodes and the defenses. Notice that \gls{adt} can have the same label on different nodes, so it is as expressive as a \gls{dag} for some semantics (that is the case for the propositional semantic, for instance).
Other formalisms derive from \gls{adt} (see the surveys~\cite{kordy2014dag,widel2019beyond}). In particular, in~\cite{hermanns2016value}, Hermanns \etal define \textit{Attack Defense Diagram}, which is more expressive than most attack-defense formalism but does not explicitly model \glspl{mtd}. Moreover, security engineers may find our model best balanced between expressivity and ease of use, mainly thanks to our tool for strategy optimization with \textsc{Uppaal Stratego}. In~\cite{hansen2021adtlang}, Hansen \etal come with the comprehensive tool support for modeling \gls{adt} extended with dynamic defender policies and atomic attack expiring. However, atomic attack expiry dates are relative to their activation date and not to defenses, making them unsuitable for \glspl{mtd}.
In~\cite{arnold2014time}, the authors consider both time and stochasticity in an \gls{at} whose basic actions have the \gls{cdf} of the completion of atomic attacks. The \gls{cdf} is propagated to the parents to get the \gls{cdf} for the whole tree. This method does not use automata but directly computes the \gls{cdf} through an alternative representation of the \gls{cdf} called \textit{acyclic phase-type distribution}.

Our work combines \gls{dag}-based attack-defense modeling for defense optimization and \gls{mtd} activation frequency optimization. These two aspects have been studied separately in the following papers.
In~\cite{kumar2015quantitative}, the authors translate an \gls{at} into a network of \glsxtrfull{pta}, thanks to a \gls{pta} interpretation for each node of the tree. They can then use \textsc{Uppaal Cora} to uncover the best attack path regarding costs and time. In~\cite{gadyatskaya2016modelling,hansen2017quantitative}, the authors consider the \gls{adt} to construct a network of \gls{pta} and analyze the impact of enabling different defenses on the best attack. These papers do not use \textsc{Uppaal Stratego}, and for that reason, they need to iterate on faster and faster attacks to get the fastest one (resp. iterate on cheaper to get the cheapest). Instead, in our analysis, \textsc{Uppaal Stratego} optimizes the strategy for the attacker directly. Moreover, it does not apply to time-based defenses like \glspl{mtd}.
In~\cite{ayrault2021moving,li2019optimal,feng2017stackelberg}, the authors study the optimal activation frequencies for \glspl{mtd} with a game theoretic approach. They model the attacker and the defender with a Stackelberg game (the defender plays first, and the attacker plays the rest of the game). However, they only consider single step attacks. The authors of~\cite{ayrault2021moving} can formulate the game equilibrium and compute the optimal parameters for the defender directly and the authors of~\cite{li2019optimal,feng2017stackelberg} derive a semi-Markovian decision process from the game to optimize the activation frequencies of the \glspl{mtd}. Many other papers deal with \gls{mtd} with a game theoretic approach including\cite{umsonst2021bayesian,sengupta2017game,clark2015game}. The authors of~\cite{umsonst2021bayesian} considers \glspl{mtd} against stealthy sensor attacks and derive a Bayesian game to extract optimal \gls{mtd} strategy even with only the prior of the
possible attacker goals. The paper~\cite{sengupta2017game} focus on web applications, and~\cite{clark2015game} focuses on IP address randomization.

\section{Conclusion and Future Work}\label{sec:conclusion}
In this paper, we introduced the \gls{amg}, a \gls{dag}-based attack-defense model that considers the time, cost, and stochastic properties of \glspl{mtd} and attacks. This new model permits to hierarchically model threats on complex systems defended with \glspl{mtd}.
We constructed a \gls{ptmdp} from this \gls{amg} that induces a probability measure on the sets of runs. Thanks to this measure, we define a reachability objective with time and cost constraints and present the optimization problem for the attacker's strategy. We can then find the \gls{mtd} activation frequencies that will protect our system the best according to the user preferences. We implemented the automatic construction of the \gls{ptmdp} from the \gls{amg} and used \textsc{Uppaal Stratego} to illustrate the applicability of the optimal strategy computation in a use case. It displayed the influence of four \glspl{mtd} on an electricity meter on the best attacker's strategy in a two-dimension optimization of attack time and cost.

We plan to explore the dependency between the defense activation frequencies to find a way to optimize them in future work. We should consider each aspects (time, cost, probability) independently to deal with them in a non-exponential way. We also plan to extend the \gls{amg} to include the full \gls{adt} expressivity and show how to consider non-\gls{mtd} defense in a broader formalism. Finally, we consider implementing our own tool to find the strategies giving the Pareto frontier of the attack cost and attack time.

\bibliographystyle{splncs04}
\bibliography{references}

\ifdefined\ALLPROOFS
\else 
\appendix
\newpage
\section{Appendix: Proof}
This appendix reminds and proves the properties, lemmas, and theorem of this thesis, with the help of new Lemmas~\ref{lemma:subtree-subtree} and~\ref{lemma:subtree-simple-state}.

\subsection{Propagation operator}

We remind and prove Proposition~\ref{prop:propagate}.
\propagateprop*
\begin{proof}
For~(\ref{item:propagate-contains}), we need to notice that the $\fixedPoint{}$ from Definition~\ref{def:propagate} always contains its argument, and we use the fact that $\propagate{\AMG}{C} = \fixedPoint{}^k(C)$ for all $C \subseteq \Nodes$ and some $k\in \mathbb N$.

For~(\ref{item:propagate-increase}), we use the fact that $\fixedPoint{}$ is increasing and reaches a fixed point. Let $C_1\subseteq C_2 \subseteq \Nodes$. We can take composistion indices $k_1, k_2\in \mathbb N$ such that $\propagate{\AMG}{C_1} = \fixedPoint{}^{k_1}(C_1)$ and $\propagate{\AMG}{C_2} = \fixedPoint{}^{k_2}(C_2)$. Now,
\begin{align*}
    \propagate{\AMG}{C_1} &= \fixedPoint{}^{k_1}(C_1)\\
    &=\fixedPoint{}^{\max(k_1, k_2)}(C_1)\\
    &\subseteq \fixedPoint{}^{\max(k_1, k_2)}(C_2)\\
    &= \propagate{\AMG}{C_2}
\end{align*}

For~(\ref{item:propagate-projection}), by the Definition~\ref{def:propagate}, $\propagateDef{}$ is a fixed point.
\end{proof}

\subsection{Completed descendants}

We remind and prove Proposition~\ref{prop:subtree-inc}.
\subtreeprop*
\begin{proof}
Let $A\subseteq B \subseteq \Nodes$ if there is $j \in \{1, \dots, k\}$ s.t. $g_{j} \in A$, then $g_{j} \in B$. This proves $\subTree{\AMG}{A} \subseteq \subTree{\AMG}{B}$.
\end{proof}

We introduce Lemma~\ref{lemma:subtree-subtree}, showing that some nodes can be ignored in the completed descendants.

\subsection{Simple state}

To demonstrate Proposition~\ref{prop:simple-state-proj}, we first need Lemma~\ref{lemma:subtree-simple-state}.

We remind and prove Proposition~\ref{prop:simple-state-proj}.
\simplestateprop*
\begin{proof}
Let $(A_1,C_1) = \simpleState{A,C}$ and $(A_2,C_2) = \simpleState{\simpleState{A,C}}$.

First we want to prove $C_1 = C_2$.
By definition,
\begin{align*}
    C_1 &= \propagate{\AMG}{C}\setminus \subTree{\AMG}{\propagate{\AMG}{C}\cap \noDef}\\
    C_2 &= \propagate{\AMG}{C_1}\setminus \subTree{\AMG}{\propagate{\AMG}{C_1}\cap \noDef}
\end{align*}
Moreover, by Lemma~\ref{lemma:subtree-simple-state}, we have $\subTree{\AMG}{\propagate{\AMG}{C}\cap \noDef} = \subTree{\AMG}{\propagate{\AMG}{C_1}\cap \noDef}$ so we just need to prove $C_1 \subseteq \propagate{\AMG}{C_1}$, that is immediate by Proposition~\ref{prop:propagate}(\ref{item:propagate-contains}), and $C_2\subseteq \propagate{\AMG}{C}$. This last point is true by Proposition~\ref{prop:propagate}(\ref{item:propagate-increase},~\ref{item:propagate-projection}):
\begin{equation*}
    C_2 \subseteq \propagate{\AMG}{C_1} = \propagate{\AMG}{\propagate{\AMG}{C}\setminus \subTree{\AMG}{\propagate{\AMG}{C}\cap \noDef}} \subseteq \propagateDef{\AMG} \circ \propagate{\AMG}{C} = \propagate{\AMG}{C}
\end{equation*}
This finishes the proof of $C_1=C_2$.

Let us prove that $A_1 = A_2$. By definition,
\begin{align}
    \nonumber
    A_1 &= A \setminus (\subTree{\AMG}{\propagate{\AMG}{C}\cap \noDef} \cup \propagate{\AMG}{C})\\
    A_2 &= A_1 \setminus (\subTree{\AMG}{\propagate{\AMG}{C_1}\cap \noDef} \cup \propagate{\AMG}{C_1})
    \label{eq:def-A2}
\end{align}
Knowing $\subTree{\AMG}{\propagate{\AMG}{C}\cap \noDef} = \subTree{\AMG}{\propagate{\AMG}{C_1}\cap \noDef}$ and $\propagate{\AMG}{C_1} \subseteq \propagate{\AMG}{C}$, we have
\begin{gather*}
    \subTree{\AMG}{\propagate{\AMG}{C_1}\cap \noDef} \cup \propagate{\AMG}{C_1} \subseteq \subTree{\AMG}{\propagate{\AMG}{C}\cap \noDef} \cup \propagate{\AMG}{C}\\
    \underbrace{A\setminus (\subTree{\AMG}{\propagate{\AMG}{C}\cap \noDef} \cup \propagate{\AMG}{C})}_{A_1} \cap (\subTree{\AMG}{\propagate{\AMG}{C_1}\cap \noDef} \cup \propagate{\AMG}{C_1}) = \emptyset
\end{gather*}
Now by eq.~\eqref{eq:def-A2} we have $A_1=A_2$.
\end{proof}

We remind and prove Proposition~\ref{prop:subtree-defense}.
\subtreedefenseprop*
\begin{proof}
The first inclusion is immediate by Propositions~\ref{prop:propagate}(\ref{item:propagate-increase}) and~\ref{prop:subtree-inc},
\begin{gather*}
    C \setminus \bigcup_{d\in D} \defended{\AMG}{d} \subseteq C\\
    \propagate{\AMG}{C \setminus \bigcup_{d\in D} \defended{\AMG}{d}} \subseteq \propagate{\AMG}{C}\\
    \subTree{\AMG}{\propagate{\AMG}{C \setminus \bigcup_{d\in D} \defended{\AMG}{d}}\cap \noDef} \subseteq \subTree{\AMG}{\propagate{\AMG}{C}\cap \noDef}
\end{gather*}

Now let $n\in \subTree{\AMG}{\propagate{\AMG}{C}\cap \noDef}$. Let $n_1, \dots, n_k$ be a path from $n_1=\AMGroot$ to $n_k = n$. We can take the smallest integer $i \in \{1, \dots, k-1\}$ such that $n_i$ is a checkpoint in $\propagate{\AMG}{C}\cap \noDef$. We have,
\begin{equation}
    n_i \in \propagate{\AMG}{C} = \propagate{\AMG}{\propagate{\AMG}{C'} \setminus \subTree{\AMG}{\propagate{\AMG}{C'}\cap \noDef}}
    \subseteq \propagate{\AMG}{\propagate{\AMG}{C'}}
    = \propagate{\AMG}{C'}\label{eq:C'}
\end{equation}
As $i$ is chosen as the smallest integer of the set, we have $n_i\not \in \subTree{\AMG}{\propagate{\AMG}{C}\cap \noDef}$, otherwise, there would be another checkpoint earlier in the path $n_1, \dots, n_k$. In the proof of Proposition~\ref{prop:simple-state-proj} we proved that $\subTree{\AMG}{\propagate{\AMG}{C}\cap \noDef} = \subTree{\AMG}{\propagate{\AMG}{C'} \cap \noDef}$, so $n_i \not\in \subTree{\AMG}{\propagate{\AMG}{C'} \cap \noDef}$. Using this fact and eq.~\eqref{eq:C'} we have,
\begin{align*}
    n_i \in \propagate{\AMG}{C'} \setminus \subTree{\AMG}{\propagate{\AMG}{C'} \cap \noDef} = C
\end{align*}
Moreover, $n_i \in \propagate{\AMG}{C}\cap \noDef$ implies $n_i \not\in \noDef$ so $n_i \not\in \bigcup_{d\in D} \defended{\AMG}{d}$. Using the increase of $\propagateDef{\AMG}$,
\begin{gather*}
\left\{
    \begin{matrix}
        n_i \in C \setminus \bigcup_{d\in D} \defended{\AMG}{d}\\
        n_i \not \in \noDef
    \end{matrix}
    \right.\\
    n_i \in \propagate{\AMG}{C \setminus \bigcup_{d\in D} \defended{\AMG}{d}}\cap \noDef
\end{gather*}
As a result, whatever the path $n_1,\dots, n_k$ from the root of the AMG, $n$ has a checkpoint in $\propagate{\AMG}{C \setminus \bigcup_{d\in D} \defended{\AMG}{d}}\cap \noDef$. So $n \in \subTree{\AMG}{\propagate{\AMG}{C \setminus \bigcup_{d\in D} \defended{\AMG}{d}}\cap \noDef}$. Finally by double inclusion,
\begin{equation*}
    \subTree{\AMG}{\propagate{\AMG}{C}\cap \noDef} = \subTree{\AMG}{\propagate{\AMG}{C \setminus \bigcup_{d\in D} \defended{\AMG}{d}}\cap \noDef}
\end{equation*}
\end{proof}

\subsection{Sequential defense activation}

We remind and prove Lemma~\ref{lemma:defense-triangle}.
\defensetriangle*
\begin{proof}
Let $d_1, d_2\in \Defenses$, we assume that $d_1 \ntriangleright d_2$. Let
\begin{align*}
    \loc_1 &= \loc\setminus(\defended{}{d_1}\cup\defended{}{d_2}, \defended{}{d_1}\cup\defended{}{d_2})\\
    \loc_2 &= \loc\setminus(\defended{}{d_1} , \defended{}{d_1})\\
    \loc_3 &= \loc_2 \setminus (\defended{}{d_2} , \defended{}{d_2}) = \loc\setminus(\defended{}{d_1} , \defended{}{d_1}) \setminus (\defended{}{d_2} , \defended{}{d_2})
\end{align*}
We have
\begin{align}
    \Cof{\loc_1} &= \propagate{\AMG}{\Cof{\loc}\setminus(\defended{\AMG}{d_1}\cup \defended{\AMG}{d_2})} \setminus \subTree{\AMG}{\propagate{\AMG}{\Cof{\loc}\setminus(\defended{\AMG}{d_1}\cup \defended{\AMG}{d_2})} \cap \noDef}\label{eq:def-cofl1}\\
    \Cof{\loc_2} &= \propagate{\AMG}{\Cof{\loc}\setminus\defended{\AMG}{d_1}} \setminus \subTree{\AMG}{\propagate{\AMG}{\Cof{\loc}\setminus\defended{\AMG}{d_1}}\cap \noDef}\nonumber\\
    \Cof{\loc_3} &= \propagate{\AMG}{\Cof{\loc_2}\setminus\defended{\AMG}{d_2}} \setminus \subTree{\AMG}{\propagate{\AMG}{\Cof{\loc_2}\setminus \defended{\AMG}{d_2}}\cap \noDef}\label{eq:def-cofl3}
\end{align}
And the condition $d_1 \ntriangleright d_2$ is equivalent to
\begin{equation}\label{eq:not-d2-follows-d1}
    \forall n_1 \in \defended{\AMG}{d_1}, \forall n_2 \in \children{}{n_1},
    n_2 \in \defended{}{d_1} \lor n_2 \not\in \defended{}{d_2}
\end{equation}
By Proposition~\ref{prop:simple-state-proj}, $\Cof{\simpleState{\loc}} = \Cof{\loc}$, that is 
\begin{equation*}
    \Cof{\loc} = \propagate{\AMG}{\Cof{\loc}} \setminus \subTree{\AMG}{\propagate{\AMG}{\Cof{\loc}} \cap \noDef}
\end{equation*}
we can deduce,
\begin{equation*}
    \Cof{\loc} \setminus \subTree{\AMG}{\propagate{\AMG}{\Cof{\loc}} \cap \noDef} = \Cof{\loc}
\end{equation*}
Now we have
\begin{align*}
    \Cof{\loc} \setminus (\defended{\AMG}{d_1} \cup \defended{\AMG}{d_2}) &= \Cof{\loc} \setminus \defended{\AMG}{d_1} \setminus \defended{\AMG}{d_2} \setminus \subTree{\AMG}{\propagate{\AMG}{\Cof{\loc}} \cap \noDef}\\
    &\subseteq \propagate{\AMG}{\Cof{\loc} \setminus \defended{\AMG}{d_1}} \setminus \defended{\AMG}{d_2} \setminus \subTree{\AMG}{\propagate{\AMG}{\Cof{\loc}} \cap \noDef}\\
    &= \propagate{\AMG}{\Cof{\loc} \setminus \defended{\AMG}{d_1}}  \setminus \subTree{\AMG}{\propagate{\AMG}{\Cof{\loc} \setminus \defended{\AMG}{d_1}} \cap \noDef} \setminus \defended{\AMG}{d_2}\\
    &= \Cof{\loc_2} \setminus \defended{\AMG}{d_2}
\end{align*}

We want to show by induction
\begin{equation*}
    \propagate{\AMG}{\Cof{\loc}\setminus\defended{\AMG}{d_1}} \setminus \subTree{\AMG}{\propagate{\AMG}{\Cof{\loc}}} \setminus \defended{\AMG}{d_2} \subseteq \propagate{\AMG}{\Cof{\loc}\setminus\defended{\AMG}{d_1} \setminus \defended{\AMG}{d_2}}
\end{equation*}

For $k \in \mathbb N$, let
\begin{align*}
    N_k &= f_{\propagateDef{}}^k(\Cof{\loc}\setminus\defended{\AMG}{d_1}) \setminus \subTree{\AMG}{\propagate{\AMG}{\Cof{\loc}}}\\
    M_k &= f_{\propagateDef{}}^k(\Cof{\loc}\setminus\defended{\AMG}{d_1} \setminus\defended{\AMG}{d_2})\\
    \mathcal H^1_k & : N_k \setminus \defended{\AMG}{d_2} \subseteq M_k\\
    \mathcal H^2_k & : N_k \cap \defended{\AMG}{d_1} \cap \defended{\AMG}{d_2} \subseteq M_k
\end{align*}
where $\mathcal H^1_k \land \mathcal H^2_k$ is our induction hypothesis.
We have $\mathcal H^1_0 \land \mathcal H^2_0$. For $k \in \mathbb N$ we suppose $\mathcal H^1_k \land \mathcal H^2_k$, we want to show $\mathcal H^1_{k+1} \land \mathcal H^2_{k+1}$. First we prove $\mathcal H^1_{k+1}$.

Let $n \in N_{k+1} \setminus \defended{\AMG}{d_2}$.
\begin{itemize}
    \item If $n \in N_k \setminus \defended{\AMG}{d_2}$, then $n\in M_k$ and by monotony, $n \in M_{k+1}$.
    \item Else, we have $n\not\in N_k \setminus \defended{\AMG}{d_2}$. We notice that $N_k \subseteq \Cof{\loc}$ so $n \in \Cof{\loc} \setminus\defended{\AMG}{d_2}$.
    \begin{itemize}
        \item If $n\not\in \defended{\AMG}{d_1}$, then $n\in \Cof{\loc} \setminus \defended{\AMG}{d_1} \setminus \defended{\AMG}{d_2} \subseteq M_k$ by monotony.
        \item Else, $n\in \defended{\AMG}{d_1}$. As such, $n$ can not be a checkpoint because a checkpoint must be an undefended node. Moreover, if $n\in \defended{\AMG}{d_2}$, then $n\in N_k \cap \defended{\AMG}{d_1} \cap \defended{\AMG}{d_2}$ and by $\mathcal H^2_k$ we have $n\in M_k\subseteq M_{k+1}$. For the rest we suppose $n \not \in \defended{\AMG}{d_2}$, which implies $n\not \in f_{\propagateDef{}}^k(\Cof{\loc}\setminus\defended{\AMG}{d_1})$.
        
        As $n\in f_{\propagateDef{}}^{k+1}(\Cof{\loc}\setminus\defended{\AMG}{d_1}) \setminus f_{\propagateDef{}}^k(\Cof{\loc}\setminus\defended{\AMG}{d_1})$, we can take $n_1, \dots, n_l \in \children{}{n}$ such that for all $i \in \{1, \dots, k\}$, we have $n_i \in f_{\propagateDef{}}^k(\Cof{\loc}\setminus\defended{\AMG}{d_1})$ and $(n_1, \dots, n_k)$ is the list of all children of $n$ if $\operation{n}$ is a conjunction, or is a non empty list of nodes if $\operation{n}$ is a disjunction. As $n$ is not a checkpoint and does not have checkpoint in $\subTree{\AMG}{\propagate{\AMG}{\Cof{\loc}}}$ on every path from $\AMGroot$ we can conclude that for all $i\in \{1, \dots, k\}$, $n_i \not \in \subTree{\AMG}{\propagate{\AMG}{\Cof{\loc}}}$, so $n_i \in N_k$.
        
        By eq.~\eqref{eq:not-d2-follows-d1}, we have that for all $i\in \{1,\dots, k\}$, $n_i\in \defended{\AMG}{d_1}$ or $n_i\not\in \defended{\AMG}{d_2}$.
        \begin{itemize}
            \item If $n_i \not\in \defended{\AMG}{d_2}$, then $n_i\in N_k \setminus \defended{\AMG}{d_2} \subseteq M_k$ by $\mathcal H^1_k$.
            \item Otherwise, $n_i \in \defended{\AMG}{d_1} \cap \defended{\AMG}{d_2}$, so $n_i \in N_k \cap \defended{\AMG}{d_1} \cap \defended{\AMG}{d_2} \subseteq M_k$ by $\mathcal H^2_k$.
        \end{itemize}
        In any cases, for all $i\in \{1,\dots, k\}$, $n_i \in M_k$ so $n \in M_{k+1}$.
    \end{itemize}
\end{itemize}
We treated all the cases and proved $\mathcal H^1_{k+1}$.

Now we prove $\mathcal H^2_{k+1}$. Let $n \in N_{k+1} \cap \defended{\AMG}{d_1} \cap \defended{\AMG}{d_2}$.
\begin{itemize}
    \item If $n\in f_{\propagateDef{}}^k(\Cof{\loc}\setminus\defended{\AMG}{d_1})$ then $n \in N_{k} \cap \defended{\AMG}{d_1} \cap \defended{\AMG}{d_2} \subseteq M_{k+1}$ by $\mathcal H^2_k$.
    \item Otherwise, $n\in f_{\propagateDef{}}^{k+1}(\Cof{\loc}\setminus\defended{\AMG}{d_1}) \setminus f_{\propagateDef{}}^k(\Cof{\loc}\setminus\defended{\AMG}{d_1})$. As before, we can take $n_1, \dots, n_l \in \children{}{n}$ such that for all $i \in \{1, \dots, k\}$, we have $n_i \in f_{\propagateDef{}}^k(\Cof{\loc}\setminus\defended{\AMG}{d_1})$ and $(n_1, \dots, n_k)$ is the list of all children of $n$ if $\operation{n}$ is a conjunction, or is a non empty list of nodes if $\operation{n}$ is a disjunction. As $n$ is not a checkpoint and does not have checkpoint in $\subTree{\AMG}{\propagate{\AMG}{\Cof{\loc}}}$ on every path from $\AMGroot$ we can conclude that for all $i\in \{1, \dots, k\}$, $n_i \not \in \subTree{\AMG}{\propagate{\AMG}{\Cof{\loc}}}$, so $n_i \in N_k$.
    
    By eq.~\eqref{eq:not-d2-follows-d1}, we have that for all $i\in \{1,\dots, k\}$, $n_i\in \defended{\AMG}{d_1}$ or $n_i\not\in \defended{\AMG}{d_2}$.
        \begin{itemize}
            \item If $n_i \not\in \defended{\AMG}{d_2}$, then $n_i\in N_k \setminus \defended{\AMG}{d_2} \subseteq M_k$ by $\mathcal H^1_k$.
            \item Otherwise, $n_i \in \defended{\AMG}{d_1} \cap \defended{\AMG}{d_2}$, so $n_i \in N_k \cap \defended{\AMG}{d_1} \cap \defended{\AMG}{d_2} \subseteq M_k$ by $\mathcal H^2_k$.
        \end{itemize}
        In any cases, for all $i\in \{1,\dots, k\}$, $n_i \in M_k$ so $n \in M_{k+1}$.
\end{itemize}
We treated all the cases and proved $\mathcal H^2_{k+1}$.

To summarize, we proved the following
\begin{equation*}
\left\{
\begin{matrix}
    &\Cof{\loc} \setminus (\defended{\AMG}{d_1} \cup \defended{\AMG}{d_2}) \subseteq \Cof{\loc_2} \setminus \defended{\AMG}{d_2}\\
    &\Cof{\loc_2} \setminus \defended{\AMG}{d_2} \subseteq \Cof{\loc_1}
\end{matrix}
\right.
\end{equation*}
We need to notice that with Lemma~\ref{lemma:subtree-subtree}, we have
\begin{equation*}
    \subTree{\AMG}{\propagate{\AMG}{\Cof{\loc}\setminus(\defended{\AMG}{d_1}\cup \defended{\AMG}{d_2})} \cap \noDef} = \subTree{\AMG}{\Cof{\loc_1} \cap \noDef}
\end{equation*}
So using eq.~\eqref{eq:def-cofl1} and~\eqref{eq:def-cofl3}.
\begin{align*}
    \Cof{\loc_1} &= \propagate{\AMG}{\Cof{\loc}\setminus(\defended{\AMG}{d_1}\cup \defended{\AMG}{d_2})} \setminus \subTree{\AMG}{\Cof{\loc_1} \cap \noDef}\\
    &\subseteq \propagate{\AMG}{\Cof{\loc_2}\setminus\defended{\AMG}{d_2}} \setminus \subTree{\AMG}{\Cof{\loc_2}\setminus \defended{\AMG}{d_2}\cap \noDef}\\
    &\subseteq \propagate{\AMG}{\Cof{\loc_2}\setminus\defended{\AMG}{d_2}} \setminus \subTree{\AMG}{\propagate{\AMG}{\Cof{\loc_2}\setminus \defended{\AMG}{d_2}}\cap \noDef}\\
    &= \Cof{\loc_3}\\
    \Cof{\loc_3} &= \propagate{\AMG}{\Cof{\loc_2}\setminus\defended{\AMG}{d_2}} \setminus \subTree{\AMG}{\propagate{\AMG}{\Cof{\loc_2}\setminus \defended{\AMG}{d_2}}\cap \noDef}\\
    &\subseteq \propagate{\AMG}{\Cof{\loc_1}} \setminus \subTree{\AMG}{\propagate{\AMG}{\Cof{\loc} \setminus (\defended{\AMG}{d_1} \cup \defended{\AMG}{d_2})}\cap \noDef}\\
    &= \Cof{\loc_1}
\end{align*}
So we have $\Cof{\loc_1} = \Cof{\loc_3}$.

We want to show $\Aof{\loc_1} = \Aof{\loc_3}$. By Proposition~\ref{prop:subtree-defense}
\begin{align*}
    \subTree{\AMG}{\propagate{\AMG}{\Cof{\loc}}\cap \noDef} &= \subTree{\AMG}{\propagate{\AMG}{\Cof{\loc} \setminus \defended{\AMG}{d_1}}\cap \noDef}\\
    &= \subTree{\AMG}{\propagate{\AMG}{\Cof{\loc} \setminus \defended{\AMG}{d_1}} \subTree{\AMG}{\propagate{\AMG}{\Cof{\loc} \setminus \defended{\AMG}{d_1}} \cap \noDef}\cap \noDef}\\
    &= \subTree{\AMG}{\Cof{\loc_2}\cap \noDef}\\
    &= \subTree{\AMG}{\propagate{\AMG}{\Cof{\loc_2}} \setminus\subTree{\AMG}{\propagate{\AMG}{\Cof{\loc_2}} \cap \noDef}\cap \noDef}\\
    &=\subTree{\AMG}{\propagate{\AMG}{\Cof{\loc_2}}\cap \noDef}
\end{align*}
Furthermore, using the definition definition
\begin{align*}
    \Aof{\loc_1} &= \Aof{\loc} \setminus \left(\subTree{\AMG}{\propagate{\AMG}{\Cof{\loc} \setminus (\defended{\AMG}{d_1}\cup \defended{\AMG}{d_2})}\cap \noDef} \cup \propagate{\AMG}{\Cof{\loc} \setminus (\defended{\AMG}{d_1}\cup \defended{\AMG}{d_2}})\right)\\
    &= \Aof{\loc} \setminus \left(\subTree{\AMG}{\propagate{\AMG}{\Cof{\loc}}\cap \noDef} \cup \Cof{\loc_1}\right)\\
    \Aof{\loc_2} &= \Aof{\loc} \setminus \left(\subTree{\AMG}{\propagate{\AMG}{\Cof{\loc} \setminus \defended{\AMG}{d_1}}\cap \noDef} \cup \propagate{\AMG}{\Cof{\loc} \setminus \defended{\AMG}{d_1}}\right)\\
    &= \Aof{\loc} \setminus \left(\subTree{\AMG}{\propagate{\AMG}{\Cof{\loc}}\cap \noDef} \cup \Cof{\loc_2}\right)\\
    \Aof{\loc_3} &= \Aof{\loc_2} \setminus \left(\subTree{\AMG}{\propagate{\AMG}{\Cof{\loc_2} \setminus \defended{\AMG}{d_2}}\cap \noDef} \cup \propagate{\AMG}{\Cof{\loc_2} \setminus \defended{\AMG}{d_2}}\right)\\
    &= \Aof{\loc} \setminus \left(\subTree{\AMG}{\propagate{\AMG}{\Cof{\loc}}\cap \noDef} \cup \Cof{\loc_2} \cup \Cof{\loc_3}\right)\\
\end{align*}

We have $\Aof{\loc} = \Aof{\simpleState{\loc}}$ so $\Aof{\loc} = \Aof{\loc} \setminus ( \Cof{\loc} \cup \subTree{\AMG}{\propagate{\AMG}{\loc} \cap \noDef})$. As $\Cof{\loc_1}$, $\Cof{\loc_2}$ and $\Cof{\loc_3}$ are all included in  $\Cof{\loc}$, we have $\Aof{\loc_1} = \Aof{\loc_3}$.
\end{proof}

Now we can prove the Theorem~\ref{thm:defense-chain} that we remind here.
\defensechaintheorem*
\begin{proof}
Suppose $\brac{\Defenses, \{(d_1, d_2) \in \Defenses \times \Defenses \mid d_1 \triangleright d_2\}}$ has no cycle. This means that there is no chain of the form $d_1\triangleright d_2 \triangleright \dots \triangleright d_k \triangleright d_1$. So we can choose i such that $d_j \ntriangleright d_i$ for all the $j\in [1,\dots, k]$ (notice that $d \ntriangleright d$ always holds). Let $d$ be a new defense such that $\defended{\AMG}{d} = \cup_{j\in \{1,\dots, k\}\setminus \{i\}} \defended{\AMG}{d_j}$. We have $d\ntriangleright d_i$, so, by Lemma~\ref{lemma:defense-triangle}, it holds
\begin{equation*}
    \loc\setminus(\cup_{j = 1}^k \defended{\AMG}{d_j}, \cup_{j=1}^k \defended{\AMG}{d_j}) = (\loc \setminus(\cup_{j\in \{1, \dots, k\}\setminus\{i\}} \defended{\AMG}{d_j}, \cup_{j\in \{1, \dots, k\}\setminus\{i\}} \defended{\AMG}{d_j})) \setminus (\defended{\AMG}{d_i}, \defended{\AMG}{d_i})
\end{equation*}
And recursively we can always activate one defense at a time.
\end{proof}
\fi 

\end{document}